\setlist[itemize]{itemsep=-4pt}
\newcommand{\ket}[1]{| #1\rangle}        
\newcommand{\bra}[1]{\langle #1|}        
\newcommand{\braket}[2]{\langle #1 | #2 \rangle} 
\newcommand{\ketbra}[2]{| #1 \rangle\!\langle #2 |} 
\newcommand{\ii}{\mathbb{I}}		
\newtheorem{theorem}{Theorem}
\newtheorem{lemma}[theorem]{Lemma}
\newcommand{\eq}[1]{Eq.~\hyperref[eq:#1]{(\ref*{eq:#1})}}
\renewcommand{\sec}[1]{\hyperref[sec:#1]{Section~\ref*{sec:#1}}}
\newcommand{\app}[1]{\hyperref[app:#1]{Appendix~\ref*{app:#1}}}
\newcommand{\tab}[1]{\hyperref[tab:#1]{Table~\ref*{tab:#1}}}
\newcommand{\fig}[1]{\hyperref[fig:#1]{Figure~\ref*{fig:#1}}}
\newcommand{\figa}[2]{\hyperref[fig:#1]{Figure~\ref*{fig:#1}#2}}
\newcommand{\figx}[2]{\hyperref[fig:#1]{Figure~\ref*{fig:#1}(#2)}}
\newcommand{\thm}[1]{\hyperref[thm:#1]{Theorem~\ref*{thm:#1}}}
\newcommand{\lem}[1]{\hyperref[lem:#1]{Lemma~\ref*{lem:#1}}}
\newcommand{\cor}[1]{\hyperref[cor:#1]{Corollary~\ref*{cor:#1}}}
\newcommand{\defn}[1]{\hyperref[def:#1]{Definition~\ref*{def:#1}}}
\newcommand{\alg}[1]{\hyperref[alg:#1]{Algorithm~\ref*{alg:#1}}}
\newcommand{\prob}[1]{\hyperref[prob:#1]{Problem~\ref*{prob:#1}}}
\newcommand{\threatM}[1]{\hyperref[threat:#1]{Threat Model \ref*{threat:#1}}}
\newcommand{\op}[1]{\operatorname{#1}}
\newcommand{\plog}{\operatorname{polylog}}
\newcommand{\lr}[1]{\left(#1\right)}
\DeclareRobustCommand{\shortto}{%
	\mathrel{\mathpalette\short@to\relax}%
}
\newcommand{\short@to}[2]{%
	\mkern2mu
	\clipbox{{.5\width} 0 0 0}{$\m@th#1\vphantom{+}{\shortrightarrow}$}%
}
\begin{document}

\title{\bfseries Hamiltonian simulation with nearly\\optimal dependence on spectral norm}

\author{Guang Hao Low\thanks{Quantum Architectures and Computation, Microsoft Research, Redmond, Washington, USA}
	\\[.25em]
	\texttt{guanghao.low@microsoft.com}
}

\maketitle
\begin{abstract}
	We present a quantum algorithm for approximating the real time evolution $e^{-iHt}$ of an arbitrary $d$-sparse Hamiltonian to error $\epsilon$, given black-box access to the positions and $b$-bit values of its non-zero matrix entries. The complexity of our algorithm is $\mathcal{O}((t\sqrt{d}\|H\|_{1 \shortto 2})^{1+o(1)}/\epsilon^{o(1)})$ queries and a factor $\mathcal{O}(b)$ more gates, which is shown to be optimal up to subpolynomial factors through a matching query lower bound. This provides a polynomial speedup in sparsity for the common case where the spectral norm $\|H\|\ge\|H\|_{1 \shortto 2}$ is known, and generalizes previous approaches which achieve optimal scaling, but with respect to more restrictive parameters. By exploiting knowledge of the spectral norm, our algorithm solves the black-box unitary implementation problem -- $\mathcal{O}(d^{1/2+o(1)})$ queries suffice to approximate any $d$-sparse unitary in the black-box setting, which matches the quantum search lower bound of $\Omega(\sqrt{d})$ queries and improves upon prior art [Berry and Childs, QIP 2010] of $\tilde{\mathcal{O}}(d^{2/3})$ queries. Combined with known techniques, we also solve systems of sparse linear equations with condition number $\kappa$ using $\mathcal{O}((\kappa \sqrt{d})^{1+o(1)}/\epsilon^{o(1)})$ queries, which is a quadratic improvement in sparsity.
\end{abstract}

\tableofcontents
\newpage
\section{Introduction}
Simulating the dynamics of quantum systems is a major potential application of quantum computers, and is the original motivation for quantum computation~\cite{Feynman1982}. Moreover, the existence of efficient classical algorithms for this problem is unlikely as it is \textsc{BQP}-complete~\cite{Osborne2012}. 
The first explicit quantum simulation algorithm by Lloyd~\cite{Lloyd1996universal} considered Hamiltonians described by a sum of non-trivial local interaction terms, which is relevant to many realistic systems. This was later extended by Aharonov and Ta-Shma~\cite{Aharonov2003Adiabatic} to sparse Hamiltonians, which generalizes local Hamiltonians and is a natural model for designing other quantum algorithms~\cite{Childs2003Exponential,Harrow2009}. 

This paper considers the problem of simulating a $d$-sparse Hamiltonian $H\in\mathbb{C}^{N\times N}$ with at most $d$ nonzero entries in any row. That is, given a description of $H$, an evolution time $t>0$, and a precision $\epsilon>0$, the Hamiltonian simulation problem is concerned with approximating the time-evolution operator $e^{-iHt}$ with an error at most $\epsilon$. In the standard setting following~\cite{Berry2012}, a $d$-sparse $H$ is described by black-box unitary oracles that compute the positions and values of its nonzero matrix entries.  These are more precisely defined as follows.
\begin{restatable}[Sparse matrix oracles]{definition}{sparseoracles}
	\label{def:Sparse_Oracle}
	A $d$-sparse matrix $H\in \mathbb{C}^{N\times N}$ has a black-box description if there exists the following two black-box unitary quantum oracles.
	\begin{itemize}
		\item ${O}_{H}$ is queried by row index $i\in[N]$, column index $k\in[N]$, and $z\in\{0,1\}^b$ and returns ${H}_{ik}=\bra{i}{H}\ket{k}$ represented as a $b$-bit number:
		\begin{align}
		{O}_{H}\ket{i}\ket{k}\ket{z}=\ket{i}\ket{k}\ket{z\oplus {H}_{ik}}.
		\end{align}
		\item ${O}_{F}$\footnote{In the dense or non-sparse case, the oracle ${O}_{F}$ is replaced by identity, which implements an $N$-sparse matrix $H$.} is queried by row index $i\in[N]$ and an index $l\in[d]$, and returns, in-place, the column index $f(i,l)$ of the $l^{\text{th}}$ non-zero matrix entry in the $i^{\text{th}}$ row:
\begin{align}
{O}_{F}\ket{i}\ket{l}=\ket{i}\ket{f(i,l)}.
\end{align}
	\end{itemize}
\end{restatable}

The complexity of $d$-sparse Hamiltonian simulation in terms of these black-box queries has been studied extensively. Early approaches~\cite{Berry2007Efficient,Childs2011,Childs2012} based on Lie product formulas achieved scaling like $\mathcal{O}(\op{poly}(t,d,\|H\|,1/\epsilon))$. Steady improvements based on alternate techniques such as quantum walks~\cite{Childs2010}, fractional queries~\cite{Berry2014}, linear-combination-of-unitaries~\cite{Berry2015Hamiltonian,Berry2016corrected}, and quantum signal processing~\cite{Low2016}, have recently culminated in an algorithm~\cite{Low2016HamSim} with optimal scaling $\mathcal{O}(td\|H\|_\mathrm{max}+\log{(1/\epsilon)})$\footnote{For readability, we use this instead of the precise complexity~$\Theta(t d\|H\|_\mathrm{max}+\frac{\log{(1/\epsilon)}}{\log{(e+\log{(1/\epsilon)}/(td\|H\|_\mathrm{max}))}})$~\cite{Gilyen2018quantum}.} with respect to all parameters. Nevertheless, the possibility of further improvement is highlighted by~\cite{Low2017USA} which uses uniform spectral amplification to obtain a more general result scaling like $\tilde{\mathcal{O}}(t\sqrt{d\|H\|_\mathrm{max}\|H\|_1}\log{(1/\epsilon)})$, and~\cite{Low2018IntPicSim} which uses the interaction picture to obtain logarithmic scaling with respect to the diagonal component of $H$. 

Depending on the application, it is more natural to express complexity in terms one choice of parameters over another. To illustrate, Hamiltonians with unit spectral norm $\|H\|=1$ are simulated as a subroutine in quantum algorithms that solve systems of linear equations~\cite{Childs2015LinearSystems}. In particular, the algorithm that is optimal in only $t,d$, and max-norm $\|H\|_\mathrm{max} = \max_{ik}|H_{ik}|$ is unable to exploit this prior information. Though the spectral norm is a natural parameter in many problems of interest,~\cite{Childs2010Limitation} have ruled out sparse simulation algorithms scaling like $\mathcal{O}(\op{poly}(\|H\|)t)$. To date, the best simulation algorithm that exploits knowledge of $\|H\|$ has query complexity $\tilde{\mathcal{O}}((t\|H\|)^{4/3}d^{2/3}/\epsilon^{1/3})$~\cite{Berry2012}, but the best lower bound, based on quantum search, is $\Omega(\sqrt{d})$. 

While some recent developments~\cite{Wang2018NonSparse,Chakraborty2018BlockEncoding} achieve better scaling like $\tilde{\mathcal{O}}(t\sqrt{N}\|H\|)$ for non-sparse Hamiltonians, this is in terms of queries to a very strong quantum-RAM~\cite{Giovannetti2009qRAM} oracle. This model is incomparable to standard black-box queries as it enables the ability to prepare arbitrary quantum states with $\mathcal{O}(\log{(N)})$ queries, which is at odds with the standard $\Omega(\sqrt{N})$ quantum search query lower bound. Thus identifying the optimal trade-off between $t,d,\|H\|$ in the black-box setting remains a fundamental open problem with useful applications. 

We present an algorithm for simulating sparse Hamiltonians with a complexity trade-off between time, sparsity, and spectral norm, that is optimal up to subpolynomial factors. The query complexity of our algorithm is as follows.
\begin{restatable}[Sparse Hamiltonian simulation with dependence on $\|H\|_{1 \shortto 2}$]{theorem}{HamSimSubordinate}
	\label{thm:HamSim_sparse}
	Let	$H\in \mathbb{C}^{N\times N}$ be a $d$-sparse Hamiltonian satisfying all the following:
	\begin{itemize}
		\item There exist oracles of~\cref{def:Sparse_Oracle} for the positions of nonzero matrix entries, and their values to $b$ bits of precision.
		\item An upper bound on the subordinate norm $\Lambda_{1 \shortto 2}\ge \|H\|_{1 \shortto 2} = \max_{k}\sqrt{\sum_i|H_{ik}|^2}$ is known.
	\end{itemize}
	Let $\tau=t\sqrt{d}\Lambda_{1 \shortto 2}$. Then the time-evolution operator $e^{-iHt}$ may be approximated to error $\epsilon$ for any $t>0$ with query complexity $C_\mathrm{queries}[H,t,\epsilon]$ and $C_\mathrm{gates}[H,t,\epsilon]$ additional arbitrary two-qubit gates where
	\begin{align}
	C_\mathrm{queries}[H,t,\epsilon]&=\mathcal{O}\left(\tau\lr{\log{\lr{\tau/\epsilon}}}^{\mathcal{O}(\sqrt{\log{d}})}\right)=\mathcal{O}\left(\tau\left(\tau/\epsilon\right)^{o(1)}\right),
	\\\nonumber
	C_\mathrm{gates}[H,t,\epsilon]&=\mathcal{O}\left(C_\mathrm{queries}[H,t,\epsilon](\log{(N)}+b)\right).
	\end{align}
\end{restatable}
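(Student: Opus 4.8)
The plan is to reduce the theorem to constructing a block-encoding of $H$ with subnormalization $\alpha=\mathcal{O}(\sqrt{d}\,\Lambda_{1\shortto 2})$ and then to invoke Hamiltonian simulation by quantum signal processing (QSP)/qubitization. Recall that a unitary $U$ on $a$ ancilla qubits is an $(\alpha,a,\delta)$-block-encoding of $H$ if $\norm{H-\alpha(\bra{0^a}\otimes\ii)\,U\,(\ket{0^a}\otimes\ii)}\le\delta$. Since every row and column of $H$ has at most $d$ nonzero entries, Cauchy--Schwarz together with Hermiticity (so that row and column $\ell_2$-norms coincide) gives $\norm{H}\le\norm{H}_1=\max_i\sum_k|H_{ik}|\le\sqrt d\,\norm{H}_{1\shortto 2}\le\alpha$, hence the spectrum of $H/\alpha$ lies in $[-1,1]$ and QSP applies: with an exact $(\alpha,a,0)$-block-encoding it approximates $e^{-iHt}$ to error $\epsilon/2$ using $q=\mathcal{O}(\alpha t+\log(1/\epsilon))=\mathcal{O}(\tau+\log(1/\epsilon))$ controlled applications of $U$ and $U^\dagger$ interleaved with single-qubit phase rotations, and using instead a $\delta$-approximate block-encoding adds $\mathcal{O}(q\delta)$ error, so $\delta=\Theta(\epsilon/q)$ suffices. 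Since then $\log(1/\delta)=\mathcal{O}(\log(\tau/\epsilon))$, every internal precision below can be taken to be a small $\mathrm{poly}(\epsilon/\tau)$ and will contribute only $\mathrm{polylog}(\tau/\epsilon)$ factors. Thus $C_\mathrm{queries}$ equals $q$ times the number of oracle queries per application of $U$, and $C_\mathrm{gates}=\mathcal{O}(C_\mathrm{queries}(\log N+b))$ follows because each oracle call is accompanied only by arithmetic on $\mathcal{O}(\log N)$-bit indices and by comparisons and controlled rotations driven by $\mathcal{O}(b)$-bit values. The entire content of the theorem is therefore to realize $U$ with the sharp normalization $\alpha=\mathcal{O}(\sqrt d\,\Lambda_{1\shortto 2})$ using only $(\log(\tau/\epsilon))^{\mathcal{O}(\sqrt{\log d})}$ queries per application.

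Such a $U$ is nontrivial because the textbook sparse-access block-encoding --- prepare a uniform superposition over $[d]$, apply $O_F$ to obtain the columns of row $i$, look up values with $O_H$, and rotate a flag qubit by an angle $\propto H_{ik}$ --- pits the $1/\sqrt d$ amplitude of the uniform superposition against the rotation, forcing normalization $d\,\norm{H}_\mathrm{max}$, which can exceed $\sqrt d\,\Lambda_{1\shortto 2}$ by a factor $\sqrt d$ (e.g.\ when each column has a single nonzero, so $\norm{H}_{1\shortto 2}=\norm{H}_\mathrm{max}$); closing that gap by brute-force amplitude amplification would cost $\Omega(\sqrt d)$ iterations and destroy the speedup once multiplied by $q$. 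Here the hypothesis that $\Lambda_{1\shortto 2}$ is known is essential: it certifies the uniform bound $\max_i\sum_k|H_{ik}|\le\sqrt d\,\Lambda_{1\shortto 2}$ on the row $\ell_1$-norms, which is exactly the normalization a qubitized sparse quantum walk would use \emph{if} one could prepare its walk states $\ket{\psi_i}\propto\sum_k\sqrt{H_{ik}^*/(\sqrt d\,\Lambda_{1\shortto 2})}\,\ket{k}+(\text{garbage})$ to high accuracy. I would prepare these states without ever forming a flat superposition over all $d$ slots. First split $H=\sum_j H_j$ into $\mathcal{O}(b+\log(\tau/\epsilon))$ dyadic magnitude bands, discarding entries below $\sim\norm{H}_\mathrm{max}\,\epsilon/\tau$ (total simulation error $\mathcal{O}(\epsilon)$) and recombining the bands by a linear combination of unitaries (overhead polylogarithmic in the band count), so that within each band all nonzeros share a common magnitude up to a factor $2$. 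Then, for each band and each row, prepare the walk state by a \emph{recursive} reweighting over a balanced tree on the $d$ slots with branching factor $2^{\Theta(\sqrt{\log d})}$ and hence depth $\Theta(\sqrt{\log d})$: at each internal node one coherently rotates amplitude onto the children in proportion to the (near-uniform, hence essentially counting) $\ell_2$-mass of the matrix entries in their subtrees, implemented to precision $\mathrm{poly}(\epsilon/\tau)$ with $\mathrm{polylog}(\tau/\epsilon)$ queries. The $\Theta(\sqrt{\log d})$ levels compose multiplicatively to the claimed per-application overhead $(\log(\tau/\epsilon))^{\mathcal{O}(\sqrt{\log d})}$, while the reweighting concentrates amplitude so that the effective normalization works out to $\mathcal{O}(\sqrt d\,\Lambda_{1\shortto 2})$ rather than $d\,\norm{H}_\mathrm{max}$.

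It remains to account for errors and to verify the optimization. All internal approximations --- band truncation, the LCU, each tree-node reweighting, and the QSP polynomial for $e^{-iHt}$ --- must be uniformly accurate over $[-\norm{H},\norm{H}]\subseteq[-\alpha,\alpha]$, and their errors must be shown to compose at most additively across the $\mathcal{O}(\sqrt{\log d})$ tree levels, the $\mathcal{O}(b+\log(\tau/\epsilon))$ bands, and the $q$ QSP steps; setting every internal precision to a small $\mathrm{poly}(\epsilon/\tau)$ then keeps the total error $\le\epsilon$ at the price of only further $\mathrm{polylog}(\tau/\epsilon)$ factors. The main obstacle is the construction of the preceding paragraph: designing the recursive reweighting so that (i) each level genuinely costs only $\mathrm{polylog}(\tau/\epsilon)$ queries --- rather than the $\Theta(\sqrt{\text{branching}})$ a naive search subroutine would incur, which over $\Theta(\sqrt{\log d})$ levels would multiply back up to $\Omega(\sqrt d)$ --- and (ii) the resulting state is provably the correct walk state with effective normalization $\mathcal{O}(\sqrt d\,\Lambda_{1\shortto 2})$. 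The branching factor $2^{\Theta(\sqrt{\log d})}$ is precisely what makes the depth-versus-per-level-cost trade-off bottom out at $(\log(\tau/\epsilon))^{\mathcal{O}(\sqrt{\log d})}=(\tau/\epsilon)^{o(1)}$; establishing this together with the requisite norm and error bounds is the technical heart of the proof.
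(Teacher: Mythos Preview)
Your proposal has a genuine gap at precisely the place you yourself flag as ``the technical heart of the proof.'' You assert that at every internal node of a tree with branching factor $B=2^{\Theta(\sqrt{\log d})}$ one can rotate amplitude onto the children in proportion to the $\ell_2$-mass (equivalently, within a near-uniform band, the count) of matrix entries in their subtrees using only $\operatorname{polylog}(\tau/\epsilon)$ queries. But learning or encoding those $B$ subtree counts from the black-box oracles is itself a search/counting problem over sets of size up to $d$; you give no mechanism by which it costs only polylogarithmically in $d$, and you explicitly concede that the natural approach costs $\Theta(\sqrt{B})$ per level and multiplies back to $\Omega(\sqrt{d})$ over $\Theta(\sqrt{\log d})$ levels. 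A plan that names its own crux and leaves it open is not a proof. Separately, your LCU recombination of $\mathcal{O}(b+\log(\tau/\epsilon))$ bands is not innocuous: LCU sums the subnormalizations, so unless you also control the per-band $\alpha_j$'s carefully you will not end up at $\alpha=\mathcal{O}(\sqrt{d}\,\Lambda_{1\shortto 2})$.

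The paper does \emph{not} build a single block-encoding of $H$ with normalization $\mathcal{O}(\sqrt{d}\,\Lambda_{1\shortto 2})$ and then run QSP. Instead it splits $H=\sum_{j=1}^{m}H_j$ into only $m=\Theta(\sqrt{\log d})$ magnitude bands with thresholds $\Lambda^{(j)}_{\max}=\Lambda_{1\shortto 2}\,d^{j/(2m)-1/2}$, block-encodes each $H_j$ by amplitude multiplication (uniform spectral amplification) with normalization $\alpha_j=\Theta(\Lambda^{(j)}_1)$ and cost $C_j=\tilde{\mathcal{O}}\big((d\Lambda^{(j)}_{\max}/\Lambda^{(j)}_1)^{1/2}\big)$, and then combines the pieces by an $m$-fold \emph{interaction-picture recursion} (\cref{Thm:HamSimRecursion}). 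The essential feature of that recursion is that its query cost scales with $t\,\langle\vec\alpha,\vec C\rangle\cdot\log^{\mathcal{O}(m)}(\cdot)$ rather than the $t\,\|\vec\alpha\|_1\|\vec C\|_1$ one gets from LCU/qubitization on a combined block-encoding. Because $\alpha_j$ and $C_j$ are anti-correlated across bands, every product $\alpha_jC_j$ equals $\sqrt{d}\,\Lambda_{1\shortto 2}\,d^{1/(4m)}$ (using $\|H_j\|_1\le\Lambda_{1\shortto 2}^2/\Lambda^{(j-1)}_{\max}$), so the sum is $m\sqrt{d}\,\Lambda_{1\shortto 2}\,d^{1/(4m)}$; balancing $d^{1/(4m)}$ against the recursion's $\log^{2m-1}$ overhead at $m=\Theta(\sqrt{\log d})$ is what produces the $(\log(\tau/\epsilon))^{\mathcal{O}(\sqrt{\log d})}$ exponent---not a tree depth. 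Thus the missing idea in your plan is the interaction-picture combiner; it is exactly what lets one avoid ever having to prepare the ``ideal'' walk state you are aiming for.
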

The scaling of our algorithm with the subordinate norm $\|H\|_{1 \shortto 2}$ rather than the spectral norm $\|H\|$ is in fact a stronger result than expected. In the worst-case, $\|H\|_{1 \shortto 2}\le\|H\|$, and in the best-case, $\|H\|_{1 \shortto 2}\ge \frac{\|H\|}{\sqrt{d}}$~\cite{Childs2010Limitation}. Indeed, any algorithm scaling like $\mathcal{O}(t\sqrt{d}\|H\|/\plog{(td\|H\|)})$ would violate a quantum search lower bound~\cite{Berry2012}. Up to subpolynomial factors,~\cref{thm:HamSim_sparse} provides a strict improvement over all prior sparse Hamiltonian simulation algorithms~\cite{Low2016HamSim,Low2017USA}, as seen by  substituting the inequalities $\sqrt{d}\|H\|_{1 \shortto 2}\le \sqrt{d\|H\|_\mathrm{max}\|H\|_1}\le d\|H\|_\mathrm{max}$. The linear scaling of gate complexity $\mathcal{O}(b)$ with respect to the bits of $H$, where $b$ is independent of $\tau/\epsilon$, is notable. Previous approaches scale with $\mathcal{O}(b^{5/2})$~\cite{Berry2015Hamiltonian} in addition to requiring scaling $b=\mathcal{O}(\log{(\tau/\epsilon)})$. 

 We also prove optimality through a lower bound. This lower bound is based on finding a Hamiltonian with a known query complexity, that also allows for the independent variation of these parameters as follows.
\begin{restatable}[Lower bound on sparse Hamiltonian simulation.]{theorem}{HamSimLowerBound}
	\label{thm:Lower_bound}
	For real numbers $d>1,s>1,t>0$, there exists a $d$-sparse Hamiltonian $H$ with $\|H\|_{1 \shortto 2}\le s$ such that the query complexity of simulating time-evolution $e^{-iHt}$ with bounded error is
	\begin{align}
\Omega(t\sqrt{d}\|H\|_{1 \shortto 2}).
	\end{align} 
\end{restatable}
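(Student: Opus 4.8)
The plan is to prove the matching lower bound by \emph{fattening} the classical ``no fast-forwarding'' construction that reduces $\mathrm{PARITY}$ to Hamiltonian simulation. Recall that construction: from an input $x\in\{0,1\}^L$ one builds a $2$-sparse graph Hamiltonian $G_x$ with $\|G_x\|_{1 \shortto 2}=\mathcal{O}(1)$ — two parallel length-$L$ paths that swap lanes at step $j$ exactly when $x_j=1$ — so that, with a suitably shaped wavepacket $\ket{\psi_0}$, the state $e^{-iG_x t_0}\ket{\psi_0}$ at $t_0=\Theta(L)$ is concentrated near the far end of the paths with its lane (``color'') qubit equal to $\bigoplus_j x_j=\mathrm{PARITY}_L(x)$. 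Since one query to the sparse oracles of $G_x$ reveals at most one bit $x_j$, a $q$-query bounded-error simulator of $e^{-iG_x t_0}$ yields an $\mathcal{O}(q)$-query bounded-error algorithm for $\mathrm{PARITY}_L$, hence $q=\Omega(L)$. This gives an $\Omega(t)$ bound but exploits neither the sparsity $d$ nor any freedom to rescale $\|H\|_{1 \shortto 2}$; the new ingredient is to ``thicken'' each vertex of $G_x$ into a bundle of $\sim d$ vertices.

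Concretely, let $m=\lfloor d/2\rfloor$ (for $1<d<2$ the statement reduces to the $2$-sparse case above) and take $H_x$ to be the weighted adjacency matrix on vertices $\{0,\dots,L\}\times[m]\times\{0,1\}$ in which every vertex $(j-1,\cdot,c)$ is joined to every vertex $(j,\cdot,c\oplus x_j)$ by an edge of common weight $w$. Each vertex has degree at most $2m\le d$, so $H_x$ is $d$-sparse; its entries take only the values $\{0,w\}$, so $b=\mathcal{O}(1)$ bits of precision suffice; and $\|H_x\|_{1 \shortto 2}=w\sqrt{2m}$, which for $w=s/\sqrt{d}$ lies in $[\,s/\sqrt2,\,s\,]=\Theta(s)$. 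The point of the fattening is that $H_x$ leaves invariant the bundle-symmetric subspace $\mathcal{U}=\mathrm{span}\{\ket{j}\otimes\ket{u}\otimes\ket{c}\}$, with $\ket{u}$ the uniform superposition over $[m]$, and that $H_x$ restricted to $\mathcal{U}$ equals $mw=\Theta(s\sqrt d)$ times precisely the path Hamiltonian $G_x$ — a complete-bipartite coupling maps uniform to uniform with gain factor $m$. Hence $e^{-iH_x t}$ on a bundle-symmetric initial state reproduces $e^{-iG_x\cdot(\Theta(s\sqrt d)\,t)}$, and I would set $L=\lceil\Theta(t s\sqrt d)\rceil=\Theta(\tau)$ so the rescaled time $\Theta(s\sqrt d)\,t$ is exactly what the length-$L$ parity chain needs.

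Running the reduction then gives the claim. The oracles $O_{H_x},O_{F,x}$ are each implementable with $\mathcal{O}(1)$ queries to the bit-oracle of $x$ plus reversible arithmetic — the graph structure around position $j$ depends only on $x_j$, and since all bundles have the fixed size $m$ the position oracle $O_{F}$ never leaks more than one bit of $x$ per query — so a $q$-query bounded-error simulation of $e^{-iH_x t}$ followed by a measurement of the position-and-color registers computes $\mathrm{PARITY}_L(x)$ with bounded error. Therefore $q=\Omega(Q(\mathrm{PARITY}_L))=\Omega(L)=\Omega(\tau)=\Omega(t\sqrt d\,\|H_x\|_{1 \shortto 2})$, with $d,s,t$ varied independently through the bundle size, the weight $w$, and the length $L$ (when $\tau=\mathcal{O}(1)$ the bound is vacuous). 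I expect the main obstacle to be the step inherited from the original construction: showing that the dispersive-but-ballistic wavepacket on the length-$L$ weighted path genuinely requires the full rescaled time $\Theta(s\sqrt d)\,t$ to cross and then deposits a readable color — the actual ``no fast-forwarding'' content, needing either explicit control of the path spectrum or a black-box appeal to the known result — together with checking that a single fattened construction simultaneously achieves $d$-sparsity, $\|H_x\|_{1 \shortto 2}=\Theta(s)$, and non-leaking oracles; the bundle-symmetry reduction itself is routine.
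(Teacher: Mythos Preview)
Your approach is correct and takes a genuinely different route from the paper. Both constructions exploit the same core trick---restricting to a uniform-superposition subspace on which the Hamiltonian acts as a rescaled copy of a PARITY chain---but they decouple the three parameters $t$, $d$, $\|H\|_{1\shortto 2}$ differently. The paper composes PARITY with an OR gadget and then tensors with an $s\times s$ all-ones matrix $H_{\mathrm{complete}}$: the OR gadget contributes sparsity $m$ while keeping column $\ell_2$-norm $\mathcal{O}(1)$ (its dominant block $C_1$ has entries $1/m$), and $H_{\mathrm{complete}}$ contributes both sparsity $s$ and $\|H\|_{1\shortto 2}=\Theta(\sqrt{s})$; the bound then invokes the composed query complexity $\Omega(n\sqrt{m})$ for $\mathrm{PARITY}\circ\mathrm{OR}$. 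You instead fatten each PARITY vertex into a complete-bipartite bundle of width $m$ and introduce a tunable edge weight $w$, so that $d\approx 2m$ and $\|H\|_{1\shortto 2}=w\sqrt{2m}$ are decoupled directly, and only the elementary $\Omega(L)$ bound for PARITY is needed. Your route is simpler and more self-contained (no composition theorem, no OR gadget); the paper's route has the side benefit that all matrix entries stay $\mathcal{O}(1)$, which matters when the lower bound is phrased in terms of $\|H\|_{\mathrm{max}}$ or $\|H\|_1$ but is irrelevant for the theorem as stated. Your flagged obstacle about ballistic wavepackets on the unit-weight path is real but easily sidestepped: replace the common weight $w$ by layer-dependent weights $w\sqrt{j(L-j+1)}/L$, so that the restriction to the bundle-symmetric subspace is exactly $mw$ times the spin chain $H_{\mathrm{spin}}$ with perfect state transfer at time $L\pi/2$---this is precisely what the paper does, and it does not affect your norm or sparsity estimates.
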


This simulation algorithm solves an open problem on the query complexity of black-box unitary implementation~\cite{Jordan2009,Berry2012}. Whereas Hamiltonian simulation is concerned with approximating a unitary $e^{-iHt}$ where given a description of $H$, the black-box unitary problem is concerned with approximating a unitary $U$ where $U$ itself is directly described by black-box oracles. Similar to the simulation problem, the decision variant of black-box unitary implementation is $\mathsf{BQP}$-complete. Though the query complexity of previous algorithms for this problem is $\tilde{\mathcal{O}}(d^{2/3}/\epsilon^{1/3})$, it was conjectured by~\cite{Berry2012} that the optimal scaling is closer to $\mathcal{O}(\sqrt{d})$. This is motivated by an $\Omega(\sqrt{d})$ lower bound for the case where $U$ solves a search problem. Using a simple reduction by~\cite{Jordan2009} to the Hamiltonian simulation problem, we show that this lower bound is tight up to subpolynomial factors as follows.
\begin{restatable}[Query complexity of black-box unitary implementation]{corollary}{BlackBoxUnitary}
	\label{thm:BlackBoxUnitary}
	Let	$U\in \mathbb{C}^{N\times N}$ be a $d$-sparse unitary satisfying all the following:
	\begin{itemize}
		\item There exist oracles of~\cref{def:Sparse_Oracle} for the positions and values of non-zero matrix elements.
	\end{itemize}
	Then the $U$ may be approximated to error $\epsilon$ with query complexity
	\begin{align}
	C_\mathrm{queries}=\mathcal{O}\left(\sqrt{d}\lr{\log{\lr{d/\epsilon}}}^{\mathcal{O}(\sqrt{\log{d}})}\right)
	=\mathcal{O}\left(\sqrt{d}\left(d/\epsilon\right)^{o(1)}\right).
	\end{align}
\end{restatable}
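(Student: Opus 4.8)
The plan is to reduce black-box unitary implementation to sparse Hamiltonian simulation via the standard doubling construction of~\cite{Jordan2009}, and then to observe that the resulting Hamiltonian has subordinate norm exactly $1$, so that~\cref{thm:HamSim_sparse} applies with $\tau=\Theta(\sqrt d)$. Concretely, I would adjoin a single ancilla qubit and set
\begin{align}
	H=\ketbra{0}{1}\otimes U+\ketbra{1}{0}\otimes U^{\dagger},
\end{align}
which is Hermitian, $2N$-dimensional, and satisfies $H^{2}=\ii$ because $U$ is unitary. Hence $e^{-iHt}=\cos(t)\,\ii-i\sin(t)\,H$, and taking $t=\pi/2$ gives $e^{-iH\pi/2}=-iH$, so that $e^{-iH\pi/2}\ket{1}\ket{\psi}=-i\,\ket{0}\,U\ket{\psi}$. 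Thus an algorithm approximating $e^{-iH\pi/2}$ to operator-norm error $\epsilon$, run on the input ancilla $\ket{1}$ with the global phase $-i$ dropped, approximates $U$ to error $\epsilon$ in the black-box sense; the approximation leaks into the wrong ancilla branch with amplitude only $\mathcal{O}(\epsilon)$.

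Next I would verify that~\cref{thm:HamSim_sparse} is applicable to $H$ with a favorable parameter. First, $H$ is $d$-sparse, and its oracles from~\cref{def:Sparse_Oracle} can be simulated with $\mathcal{O}(1)$ calls to the black-box description of $U$: the nonzero entries of row $(1,i)$ are $H_{(1,i),(0,k)}=U_{ik}$, enumerated and valued directly by $O_F,O_U$ for $U$, and those of row $(0,i)$ are $H_{(0,i),(1,k)}=\overline{U_{ki}}$, enumerated by the column-position oracle of $U$ (equivalently, row enumeration of $U^{\dagger}$) and valued by $O_U$. Second, and this is the key point that produces $\sqrt d$ rather than $d$, every column of $H$ is, up to zero padding, a column of $U$ or a conjugated row of $U^{\dagger}$; since rows and columns of a unitary are unit Euclidean vectors, $\|H\|_{1\shortto 2}=\max_{k}\sqrt{\sum_i|H_{ik}|^2}=1$, so I may take $\Lambda_{1\shortto 2}=1$ in~\cref{thm:HamSim_sparse}.

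With these facts in hand, invoking~\cref{thm:HamSim_sparse} at $t=\pi/2$ gives $\tau=t\sqrt d\,\Lambda_{1\shortto 2}=\Theta(\sqrt d)$, so the query complexity becomes
\begin{align}
	C_{\mathrm{queries}}
	&=\mathcal{O}\!\lr{\tau\lr{\log(\tau/\epsilon)}^{\mathcal{O}(\sqrt{\log d})}}\\
	&=\mathcal{O}\!\lr{\sqrt d\lr{\log(d/\epsilon)}^{\mathcal{O}(\sqrt{\log d})}}=\mathcal{O}\!\lr{\sqrt d\,(d/\epsilon)^{o(1)}},
\end{align}
using $\log(\sqrt d/\epsilon)\le\log(d/\epsilon)$; passing from $N$ to $2N$ dimensions changes none of these asymptotics, which is the claimed bound. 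The one step I expect to need care — everything else being bookkeeping already discharged inside the proof of~\cref{thm:HamSim_sparse} — is the construction of the sparse oracles for $H$ from those of $U$, in particular enumerating the nonzeros of each \emph{column} of $U$, not just each row. This is immediate under the standard convention (as in~\cite{Berry2012}) that the black-box description of a $d$-sparse non-Hermitian matrix, and in particular of $U$, supplies both row- and column-position oracles; absent that assumption one would first have to argue that such column access can be obtained, or restrict to unitaries presented with both.
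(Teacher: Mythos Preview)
Your proposal is correct and matches the paper's own proof essentially line for line: the paper uses the same doubling Hamiltonian $H=\left(\begin{smallmatrix}0&U\\U^\dagger&0\end{smallmatrix}\right)$, the same identity $e^{-iH\pi/2}=-iH$, bounds $\|H\|_{1\shortto 2}\le\|U\|=1$, and invokes~\cref{thm:HamSim_sparse} at $t=\pi/2$. Your final caveat about needing column access to $U$ is a legitimate subtlety that the paper glosses over with a one-line ``$\mathcal{O}(1)$ queries suffice''; you are right that this relies on the standard convention (from~\cite{Berry2012}) that the black-box description of a non-Hermitian sparse matrix includes both row- and column-position oracles.
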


Solving systems of linear equations is another application of this algorithm. The original black-box formulation of this problem~\cite{Harrow2009}, which also has $\mathsf{BQP}$-complete decision variant, specifies a $d$-sparse matrix $A\in\mathbb{C}^{N\times N}$ with condition number $\kappa$ to be described by black-box oracles, and a state $\ket{b}\in\mathbb{C}^{N}$. The goal then is to approximate the state $A^{-1}\ket{b}/\|A^{-1}\ket{b}\|$ to error $\epsilon$. Though previous algorithms achieve this using $\mathcal{O}(\kappa d\plog{(\kappa d/\epsilon)})$ queries~\cite{Childs2015LinearSystems}, this falls short of the lower bound $\Omega(\kappa \sqrt{d}\log{(1/\epsilon)})$~\cite{Harrow2018}. By invoking quantum linear system solvers based on the block-encoding framework~\cite{Chakraborty2018BlockEncoding}, we match this lower bound up to subpolynomial factors.
\begin{restatable}[Query complexity of solving sparse systems of linear equations]{corollary}{BlackBoxQLSP}
	\label{thm:BlackBoxQLSP}
	Let	$A\in \mathbb{C}^{N\times N}$ be a $d$-sparse matrix satisfying all the following:
	\begin{itemize}
		\item There exist oracles of~\cref{def:Sparse_Oracle} for the positions and values of non-zero matrix elements.
		\item The spectral norm $\|A\|= 1$ and the condition number $\|A^{-1}\|\le\kappa$.
	\end{itemize}
	Let $\ket{b}\in\mathbb{C}^{N}$ be prepared by a unitary oracle $O_{b}\ket{0}=\ket{b}$. 
	Then the query complexity to all oracles for preparing a state $\ket{\psi}$ satisfying $\left\|\ket{\psi}-\frac{A^{-1}\ket{b}}{\|A^{-1}\ket{b}\|}\right\|\le\epsilon$ is
	\begin{align}
	C_\mathrm{queries}	=\mathcal{O}\left(\kappa\sqrt{d}\left(\kappa d/\epsilon\right)^{o(1)}\right).
	\end{align}
\end{restatable}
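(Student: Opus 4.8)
The plan is to extract a block-encoding of $A$ with small subnormalization from the machinery that proves \cref{thm:HamSim_sparse}, and then feed it into the off-the-shelf quantum linear-system solver of the block-encoding framework~\cite{Chakraborty2018BlockEncoding}; the corollary is then a matter of composing the two query counts and a short error analysis.

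First I would isolate the following consequence of the construction behind \cref{thm:HamSim_sparse}. Given the oracles of \cref{def:Sparse_Oracle} for a $d$-sparse $A$ and the promise $\|A\|_{1 \shortto 2}\le\|A\|=1$ (so $\Lambda_{1 \shortto 2}=1$ is a valid upper bound), one can implement an $(\alpha,a,\delta)$-block-encoding $U_A$ of $A$ with subnormalization $\alpha=\mathcal{O}\!\big(\sqrt d\,(\log(\sqrt d/\delta))^{\mathcal{O}(\sqrt{\log d})}\big)$, an ancilla register of $a=\mathcal{O}(\log N+b)$ qubits, using $\mathcal{O}\!\big((\log(\sqrt d/\delta))^{\mathcal{O}(\sqrt{\log d})}\big)$ queries to the oracles of $A$ and a factor $\mathcal{O}(\log N+b)$ more two-qubit gates. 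That $\alpha$ scales as $\sqrt d$ up to subpolynomial factors can be read off from \cref{thm:HamSim_sparse} itself: running quantum signal processing on $U_A$ for $\mathcal{O}(\alpha t+\log(1/\epsilon))$ steps produces $e^{-iAt}$, and this must reproduce the stated cost $\tau=t\sqrt d$ up to the subpolynomial factors. The hypothesis $\|A\|=1$ is used only here — to secure an $\mathcal{O}(\sqrt d)$ subnormalization in place of the $\mathcal{O}(d\|A\|_{\mathrm{max}})$ afforded by the textbook sparse block-encoding.

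Second, I would invoke the linear-system algorithm of~\cite{Chakraborty2018BlockEncoding}: from an $(\alpha,a,\delta)$-block-encoding $U_A$ of $A$ with $\|A^{-1}\|\le\kappa$ and the oracle $O_b$ with $O_b\ket{0}=\ket{b}$, it prepares a state $\epsilon$-close to $A^{-1}\ket{b}/\|A^{-1}\ket{b}\|$ with $\mathcal{O}(\alpha\kappa\,\plog(\alpha\kappa/\epsilon))$ applications of $U_A$, $U_A^\dagger$, $O_b$, $O_b^\dagger$ and a factor $\mathcal{O}(a)$ more gates — the dependence on $\alpha\kappa$ being linear rather than quadratic thanks to variable-time amplitude amplification — provided the block-encoding error satisfies $\delta=\mathcal{O}(\epsilon/\mathrm{poly}(\alpha\kappa))$. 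Such a $\delta$ inflates the per-block-encoding query count of step one by only a $\plog(\kappa d/\epsilon)$ factor, so composing the two steps the total number of queries to all oracles is $\mathcal{O}(\alpha\kappa\,\plog(\alpha\kappa/\epsilon))\cdot\mathcal{O}\!\big((\log(\kappa d/\epsilon))^{\mathcal{O}(\sqrt{\log d})}\big)$. Substituting $\alpha=\sqrt d\cdot(\kappa d/\epsilon)^{o(1)}$ and using, exactly as in \cref{thm:HamSim_sparse}, that every surviving polylogarithmic factor and every $(\log(\kappa d/\epsilon))^{\mathcal{O}(\sqrt{\log d})}$ factor is $(\kappa d/\epsilon)^{o(1)}$ collapses this to $\mathcal{O}\!\big(\kappa\sqrt d\,(\kappa d/\epsilon)^{o(1)}\big)$, as claimed.

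The main obstacle is squarely the block-encoding of step one: obtaining subnormalization $\mathcal{O}(\sqrt d)$ from the sparse oracles and the sole promise $\|A\|=1$ is exactly the technical core developed in the body of the paper to prove \cref{thm:HamSim_sparse} (and the source of the $(\log)^{\mathcal{O}(\sqrt{\log d})}$ overhead). Granting that, the corollary reduces to the observation that the same block-encoding, plugged into~\cite{Chakraborty2018BlockEncoding}, yields the stated count. Everything else is routine: propagating the $\delta$-error through matrix inversion — a $\delta$-perturbation of the block-encoded matrix perturbs $A^{-1}$ by $\mathcal{O}(\kappa^2\delta)$ in operator norm, hence the renormalized solution by $\mathcal{O}(\alpha\kappa^2\delta)$, so $\delta=\mathcal{O}(\epsilon/\mathrm{poly}(\alpha\kappa))$ suffices at only logarithmic cost — and the subpolynomial-factor arithmetic already carried out for \cref{thm:HamSim_sparse}.
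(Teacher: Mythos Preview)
Your overall strategy---block-encode $A$ using the sparse-Hamiltonian machinery, then feed the block-encoding into the variable-time-amplitude-amplification QLSP solver of~\cite{Chakraborty2018BlockEncoding}---matches the paper's. The final arithmetic is also the same.

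The gap is in your step one. You claim a block-encoding of $A$ with subnormalization $\alpha=\mathcal{O}(\sqrt d\cdot(\log)^{\mathcal{O}(\sqrt{\log d})})$ at per-call cost $\mathcal{O}((\log)^{\mathcal{O}(\sqrt{\log d})})$, justified by ``reading off'' from the simulation cost in \cref{thm:HamSim_sparse}: since QSP on a block-encoding with parameter $\alpha$ costs $\mathcal{O}(\alpha t)$ calls, and simulation costs $t\sqrt d$, the block-encoding ``must'' have $\alpha\sim\sqrt d$. This inference is invalid. The proof of \cref{thm:HamSim_sparse} does \emph{not} construct a single block-encoding of $H$ and then run QSP; it constructs $m$ separate block-encodings of the truncated pieces $H_j$ (with wildly different $\alpha_j$ and $C_j$) and combines them via the interaction-picture recursion of \cref{Thm:HamSimRecursion}. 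If you try to LCU those pieces into one block-encoding, you get $\alpha=\sum_j\alpha_j=\mathcal{O}(\sqrt d)$ but per-call cost $\max_j C_j$, which with the paper's thresholds is $\Theta(d^{1/2-1/(4m)})$, not subpolynomial. So the block-encoding you assert is not a byproduct of the construction, and your circular argument does not establish its existence.

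The paper fills this gap by a different route: it first uses \cref{thm:HamSim_sparse} as a black box to approximate the unitary $e^{-iA/2}$ to error $\delta$ at cost $\mathcal{O}(\sqrt d(d/\delta)^{o(1)})$ (here $\|A\|=1$ gives $\Lambda_{1\shortto 2}=1$), and then takes a matrix logarithm (Theorem~9 of~\cite{Low2017USA}) to convert this into a block-encoding of $A/\alpha$ with $\alpha=\Theta(1)$. So the paper's block-encoding has $\alpha=\Theta(1)$ and per-call cost $\mathcal{O}(\sqrt d\cdot\text{subpoly})$---the opposite allocation from yours---but the product $\alpha\times(\text{cost})$ is the same, and feeding it into \cref{thm:QLSP_Chakraborty} with $\delta=o(\epsilon/\operatorname{poly}(\kappa,\log(1/\epsilon)))$ gives the stated $\mathcal{O}(\kappa\sqrt d(\kappa d/\epsilon)^{o(1)})$. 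Your error analysis and subpolynomial bookkeeping are fine; only the mechanism for producing $U_A$ needs to be replaced.
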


A key result in obtaining~\cref{thm:HamSim_sparse} is a general-purpose simulation algorithm for Hamiltonians $H=\sum^m_{j=1}H_j$ described by a sum of Hermitian terms, and is of independent interest. Roughly speaking, let $\alpha_j\ge \|H_j\|$ bound the spectral norm of each term, and let $C_j$ be the cost of simulating each term $H_j$ alone for constant time $t\alpha_j =\mathcal{O}(1)$. Then previous algorithms~\cite{Berry2015Truncated,Low2016hamiltonian} for simulating the full $H$ have cost scaling like
$\mathcal{O}\left(t\|\vec{\alpha}\|_1\|\vec{C}\|_1\right)$, which combines the worst-cases of $\alpha_j$ and $C_j$. In contrast, we describe in~\cref{Thm:HamSimRecursion} a simulation algorithm that scales like $\mathcal{O}(t\langle\vec{\alpha},\vec{C}\rangle e^{\mathcal{O}(m)})$, but picks up an exponential prefactor. This algorithm, which extends work by~\cite{Low2018IntPicSim}, is advantageous when $m$ held constant, and the cost of each term scales like $C_j=\mathcal{O}(1/\alpha_j)$. Though this condition appears artificial, it is situationally useful. 

At the highest-level, our main simulation result~\cref{thm:HamSim_sparse} involves three steps. The first step, similar to~\cite{Berry2012}, splits a sparse Hamiltonian into a sum of $m$ terms $H_j$, where the $j^{\text{th}}$ term contains all matrix entries of $H$ with absolute value thresholded between $(\Lambda^{(j-1)}_\mathrm{max},\Lambda^{(j)}_\mathrm{max}]$. The second step uses a modification of the uniform spectral amplification technique by~\cite{Low2017USA} and an upper bound on the spectral norm $\|H_j\|\le\|H_j\|_1\le \alpha_j= \Lambda^2_{1 \shortto 2}/\Lambda^{(j-1)}_\mathrm{max}$ to simulate each term with cost $C_j=\mathcal{O}((d\Lambda^{(j)}_\mathrm{max}/\alpha_j)^{1/2})$ --  a different bound $\alpha_1$ is used for the $j=1$ term. Finally, we recombine these terms using the general-purpose simulation algorithm of~\cref{Thm:HamSimRecursion}. The stated result is obtained by a judicious choice of these thresholds $\Lambda^{(j)}_\mathrm{max}$ and optimizing for $m$ as a function of $d$.

In~\cref{sec:overview} we provide a more detailed overview of our algorithms.~\cref{sec:Ham_sim_recursion} derives the general-purpose simulation algorithm~\cref{Thm:HamSimRecursion}, which is obtained from a recursive application of Hamiltonian simulation in the interaction picture by~\cite{Low2018IntPicSim}, outlined in~\cref{sec:IntPicSim}.~\cref{sec:sparse_ham_sim} derives the sparse Hamiltonian simulation algorithm~\cref{thm:HamSim_sparse}, which applies uniform spectral amplification by~\cite{Low2017USA}, outlined in~\cref{sec:USA}, and modified in~\cref{sec:arithmetic-free} to obtain an improved precision scaling.~\cref{Sec:Lower_Bound} proves the lower bound~\cref{thm:Lower_bound} on sparse Hamiltonian simulation. The example applications are found in~\cref{sec:blackboxunitary}, which derives the result on black-box unitary implementation~\cref{thm:BlackBoxUnitary}, and~\cref{sec:blackboxQLSP}, for the result on solving sparse systems of linear equations~\cref{thm:BlackBoxQLSP}. We conclude in~\cref{sec:conclusion}.

\section{Overview of algorithms}
\label{sec:overview}
Recent simulation algorithms~\cite{Low2017USA,Low2018IntPicSim} have focused on simulating Hamiltonians described by so-called `standard-form' oracles~\cite{Low2016hamiltonian} (more descriptively called `block-encoding' in~\cite{Chakraborty2018BlockEncoding}). There, it is assumed that $H$, in some basis, is embedded in the top-left block of a unitary oracle as follows.
\begin{restatable}[Block-encoding framework]{definition}{Blockencoding}
	\label{Def:Standard_Form}
	A matrix ${H} \in \mathbb {C}^{N_s\times N_s}$ that acts on register $s$ is block encoded by any unitary $U$ where the top-left block of $U$, in a known computational basis state $\ket{0}_a\in\mathbb{C}^{N_a}$ on register $a$, is equal to $H/\alpha$ for some normalizing constant $\alpha \ge \|H\|$:
	\begin{align}
	U = 
	\left(\begin{matrix}
	H/\alpha & \cdot \\
	\cdot & \cdot
	\end{matrix}\right),
	\quad
	(\bra{0}_a\otimes \ii_s)U(\ket{0}_a\otimes \ii_s) = \frac{H}{\alpha}.
	\end{align}
\end{restatable}
Now given a Hamiltonian expressed as a sum of $m$ Hermitian terms
\begin{align}
H=\sum^m_{j=1}H_j,
\end{align} 
where each term $H_j$ is assumed to be block-encoded by a unitary oracle $U_j$, the cost $C_\mathrm{total}[H,t,\epsilon]$ of Hamiltonian simulation to error $\epsilon$ then expressed in terms of $\alpha_j$ and number of times each $U_j$ is applied, and any additional arbitrary two-qubit quantum gates. As each $U_j$ might differ in complexity, we assign them a cost $C_j$, and also assume that the cost of a controlled-$U_j$ operator is $\mathcal{O}(C_j)$. 
Depending on context, which should be clear, $C_j$ could refer to a query or gate complexity. We also find it useful to distinguish between two types of costs as follows.
\begin{itemize}
	\item $C_\mathrm{queries}[H,t,\epsilon]=\sum_{j=1}M_j C_j$, where $M_j$ is the number of queries made to $U_j$.
	\item $C_\mathrm{gates}[H,t,\epsilon]$ is the number of any additional arbitrary two-qubit quantum gates required. 
\end{itemize}
In other words, if $C_j$ is a measure of gate complexity, then the total cost of simulation is
\begin{align}
C_\mathrm{total}[H,t,\epsilon]=C_\mathrm{queries}[H,t,\epsilon]+C_\mathrm{gates}[H,t,\epsilon].
\end{align}
In general, these queries are expensive, and so cost is dominated by $C_\mathrm{queries}[H,t,\epsilon]$. In other cases, particularly in sparse Hamiltonian simulation, $C_j$ is the number of queries made to the more fundamental oracles described in~\cref{def:Sparse_Oracle}.

Error is commonly defined as follows~\cite{Berry2015Truncated,Childs2017Speedup}. Any quantum circuit $U\in\mathbb{C}^{N_sN_a\times N_sN_a}$ that approximates a unitary operator $A\in\mathbb{C}^{N_s\times N_s}$, say time-evolution $A= e^{-iHt}$, to error $\epsilon$ is a block-encoding of $A$ such that $\|(\bra{0}_a\otimes I_s)U(\ket{0}_a\otimes I_s)-A\|\le\epsilon$. The error of approximating a product of $A_1,\cdots,A_m$ where each is approximated by a unitary $U_j$ is then $\|(\bra{0}_a\otimes I_s)U_m(\ket{0}_a\otimes I_s)\cdots(\bra{0}_a\otimes I_s) U_1(\ket{0}_a\otimes I_s) - A_m\cdots A_1\|\le m\epsilon$, with failure probability $\mathcal{O}(m\epsilon)$. This is useful when obtaining longer time-evolution by concatenating shorter time-evolution. Alternatively, if we choose to not project onto the $\ket{0}_a$ state, this implies an error $\max_{\|\ket{\psi}\|=1}\|[U-I_a\otimes A]\ket{0}\ket{\psi}\|\le\epsilon+\sqrt{2\epsilon(1-\epsilon)}$ for a single operator, and $\max_{\|\ket{\psi}\|=1}\|[U_m\cdots U_1 - I\otimes (A_m\cdots A_1)]\ket{0}\ket{\psi}\|\le m(\epsilon+\sqrt{2\epsilon(1-\epsilon)})$ for a sequence. Our results are insensitive to either definition of error as both cases scale linearly with $m$, and all factors of $1/\epsilon$ later on occur in subpolynomial factors. 

With existing algorithms~\cite{Berry2015Truncated,Low2016hamiltonian}, the cost of simulation is
\begin{align}
\label{eq:qubitization}
C_\mathrm{queries}[H,t,\epsilon]&=\mathcal{O}\left(\left(t\|\vec{\alpha}\|_1+\log{(1/\epsilon)}\right)\|\vec{C}\|_1\right),\quad \|\vec{\alpha}\|_1=\sum^m_{j=1}\alpha_j,\quad \|\vec{C}\|_1=\sum^m_{j=1}C_j,
\\\nonumber
C_\mathrm{gates}[H,t,\epsilon]&=\mathcal{O}\left(\left(t\|\vec{\alpha}\|_1+\log{(1/\epsilon)}\right)\log{(N_a)}\right),
\end{align}
which combines the worst-cases of $\alpha_j$ and $C_j$ separately. The cost of our algorithm~\cref{Thm:HamSimRecursion} instead scales with the sum $\sum^m_{j=1}\alpha_j C_j$, but picks up an exponential prefactor $e^{\mathcal{O}(m)}$ as follows.
\begin{restatable}[Hamiltonian simulation by recursion in the interaction picture]{theorem}{HamSimRecursion}
	\label{Thm:HamSimRecursion}
	For any Hamiltonian $H\in\mathbb{C}^{N\times N}$,  let us assume that
	\begin{itemize}
		\item $H=\sum^m_{j=1}H_j$ is a sum of $m$ Hermitian terms.
		\item Block-encoding $H_j/\alpha_j$ with ancilla dimension $N_a$ has cost $C_j$.
		\item The normalizing constants are sorted like $\alpha_1 \ge \alpha_2 \ge\cdots \ge \alpha_m > 0$.
	\end{itemize}
	Then the time-evolution operator $e^{-iHt}$ may be approximated to error $\epsilon$ for any $t>0$ with cost	
	\begin{align}
	\label{eq:costRecursion}
	C_\mathrm{queries}[H,t,\epsilon]&=
	\mathcal{O}\left(t\langle\vec{\alpha},\vec{C}\rangle\log^{2m-1}{\lr{\frac{t\alpha_{1}}{\epsilon}}} \right),\quad \langle\vec{a},\vec{C}\rangle=\sum^m_{j=1}\alpha_j C_j,
	\\\nonumber
	C_\mathrm{gates}[H,t,\epsilon]&=
	\mathcal{O}\left(t\|\vec{\alpha}\|_1\log^{2m-1}{\lr{\frac{t\alpha_{1}}{\epsilon}}} \log{(N_a)}\right).
	\end{align}
\end{restatable}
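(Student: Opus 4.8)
The plan is to prove the theorem by induction on the number of terms $m$, with the interaction-picture simulation of~\cite{Low2018IntPicSim} supplying the recursive step. The base case $m=1$ is a single block-encoding $U_1$ of $H_1/\alpha_1$, for which optimal block-encoded Hamiltonian simulation~\cite{Low2016HamSim} approximates $e^{-iH_1 t}$ to error $\epsilon$ using $\mathcal{O}(t\alpha_1+\log(1/\epsilon))$ queries to $U_1$, i.e.\ cost $\mathcal{O}((t\alpha_1+\log(1/\epsilon))C_1)$; this matches the claimed $\mathcal{O}(t\alpha_1 C_1\log(t\alpha_1/\epsilon))$ and carries the single logarithmic factor that $\log^{2m-1}$ reduces to at $m=1$ (the degenerate regime $t\alpha_1=\mathcal{O}(1)$ contributes only $\mathcal{O}(C_1\log(1/\epsilon))$ and is subsumed by the more precise bound of~\cite{Low2016HamSim}).

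For the inductive step I would split $H=H_1+G$ with $G:=\sum_{j=2}^{m}H_j$, and assume inductively a subroutine simulating $e^{-iGs}$ to any error $\delta$, for any $|s|\le t$, at cost $T_G(s,\delta)=\mathcal{O}\bigl(s\sum_{j=2}^{m}\alpha_j C_j\cdot\log^{2m-3}(s\alpha_2/\delta)\bigr)$. Then apply the interaction-picture construction with the \emph{largest}-norm term $H_1$ placed in the Dyson-series role, accessed cheaply through $U_1$ at cost $C_1$ per use, and with $G$ placed in the reference role, accessed only through the recursive subroutine. Writing $e^{-iHt}=e^{-iGt}\,\mathcal{T}\!\exp\bigl(-i\int_0^{t}e^{iGs}H_1 e^{-iGs}\,ds\bigr)$, the algorithm of~\cite{Low2018IntPicSim} divides $[0,t]$ into $L=\mathcal{O}(\alpha_1 t)$ segments of length $\mathcal{O}(1/\alpha_1)$ and runs in each a truncated Dyson series of order $K=\tilde{\mathcal{O}}(\log(1/\epsilon))$, using in total $\tilde{\mathcal{O}}(\alpha_1 t)$ queries to $U_1$, one leading evolution $e^{-iGt}$, and $\tilde{\mathcal{O}}(\alpha_1 t)$ further controlled evolutions $e^{-iGs}$ with $|s|=\mathcal{O}(1/\alpha_1)$, together with $\mathcal{O}(\log N_a)$ extra two-qubit gates per Dyson term. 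Summing yields $\tilde{\mathcal{O}}(\alpha_1 t C_1)$ from $U_1$ and $T_G(t,\delta)+\tilde{\mathcal{O}}(\alpha_1 t)\cdot T_G(\mathcal{O}(1/\alpha_1),\delta)=\tilde{\mathcal{O}}\bigl(t\sum_{j=2}^{m}\alpha_j C_j\bigr)$ from the reference, hence a query cost $\tilde{\mathcal{O}}\bigl(t\sum_{j=1}^{m}\alpha_j C_j\bigr)$; tracking the $\mathcal{O}(\log N_a)$ overhead instead of the $C_j$ gives the gate bound $\tilde{\mathcal{O}}\bigl(t\sum_{j=1}^{m}\alpha_j\cdot\log N_a\bigr)$.

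It then remains to pin down the logarithmic factors. Setting the precision of every sub-call to $\delta=\Theta(\epsilon/L)$ makes the accumulated error of the $\tilde{\mathcal{O}}(\alpha_1 t)$ approximate pieces $\mathcal{O}(\epsilon)$, so $\log(1/\delta)=\mathcal{O}(\log(t\alpha_1/\epsilon))$; and every logarithm generated deeper in the recursion has the form $\log(s\alpha_r/\delta')$ with $s\le t$ and $\alpha_r\le\alpha_1$ — here the sortedness $\alpha_1\ge\alpha_2\ge\cdots\ge\alpha_m$ is used — so every such argument is $\mathcal{O}(t\alpha_1/\epsilon)$ and all logarithms may be taken to be $\log(t\alpha_1/\epsilon)$. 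Each recursive wrapping therefore multiplies the logarithmic factor by $\mathcal{O}(\log^2(t\alpha_1/\epsilon))$ — one power from the Dyson truncation order $K$ and one from the precision loss in passing to $\delta$ — so applying it $m-1$ times on top of the base case's single power yields exactly $\log^{2m-1}(t\alpha_1/\epsilon)$, as claimed.

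The step I expect to be the main obstacle is justifying the recursive use of~\cite{Low2018IntPicSim} in this regime. Interaction-picture simulation is normally invoked with the \emph{small}-norm term in the Dyson series and a reference Hamiltonian that can be evolved exactly (or essentially for free), whereas here the roles are reversed: $H_1$ is the largest-norm term, and the reference $G$ is available only through a costly, \emph{approximate} simulation subroutine. One must therefore (a) verify that the construction touches the reference only through controlled evolutions $e^{-iGs}$ for the controlled set of times $|s|=\mathcal{O}(1/\alpha_1)$ (plus the single leading time $t$), so that substituting $\delta$-approximate controlled implementations degrades the output block-encoding by only $\tilde{\mathcal{O}}(\alpha_1 t)\cdot\delta$; and (b) confirm that placing the large-norm term in the Dyson role is precisely what makes the scheme efficient, since the $L=\mathcal{O}(\alpha_1 t)$ segments of length $\mathcal{O}(1/\alpha_1)$ make the total reference-evolution time $\tilde{\mathcal{O}}(t)$ rather than $\tilde{\mathcal{O}}(\alpha_1 t)$, which is exactly what stops the $\sum_{j\ge2}\alpha_j C_j$ cost from being multiplied by $\alpha_1$ (in contrast, putting $G$ in the Dyson role would instead multiply $t\sum_{j\ge2}\alpha_j$ by $\sum_{j\ge2}C_j$). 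The remaining items — the $m=1$ boundary regime, the $\mathcal{O}(1)$-overhead controlled-$U_j$ assumption, and checking that the $\mathcal{O}(m)$-fold product of approximate block-encodings accumulates error only linearly in $m$ — are routine given the cited machinery.
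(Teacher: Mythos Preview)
Your induction is run in the wrong direction, and this is a genuine gap, not a technicality. You peel off the \emph{largest}-norm term $H_1$ as the block-encoded piece $B$ and recurse on $G=\sum_{j\ge 2}H_j$ as the simulated reference $A$; the paper does the opposite, setting $A=H_1+\cdots+H_{k-1}$ and $B=H_k$ at step $k$, so that the block-encoded term is always the one with the \emph{smallest} normalization constant seen so far. This ordering is what makes the cost telescope.

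The failure point is your claim that $\tilde{\mathcal{O}}(\alpha_1 t)\cdot T_G\bigl(\mathcal{O}(1/\alpha_1),\delta\bigr)=\tilde{\mathcal{O}}\bigl(t\sum_{j\ge 2}\alpha_j C_j\bigr)$. This would require $T_G(s,\delta)=\mathcal{O}\bigl(s\sum_{j\ge 2}\alpha_j C_j\bigr)$ at $s=\mathcal{O}(1/\alpha_1)$, but that is false: simulating $G$ at all costs at least $\sum_{j\ge 2}C_j$ (each block-encoding must be touched once), and since $s\alpha_j\le \alpha_j/\alpha_1\le 1$ for every $j\ge 2$, you are squarely in the short-time regime where the additive overhead dominates. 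The correct short-time bound coming out of the recursion is $T_G(1/\alpha_1,\delta)=\tilde{\mathcal{O}}\bigl(\sum_{j\ge 2}C_j\bigr)$, because every factor $\bigl(s\alpha_j+1\bigr)$ collapses to $\mathcal{O}(1)$ instead of $\mathcal{O}(s\alpha_j)$. Carrying this through your outer loop gives
\[
C_\mathrm{queries}[H,t,\epsilon]=\mathcal{O}\bigl(t\alpha_1\|\vec{C}\|_1\log^{2m-1}(t\alpha_1/\epsilon)\bigr),
\]
which is the naive product bound of~\cref{eq:qubitization}, not $\mathcal{O}\bigl(t\langle\vec{\alpha},\vec{C}\rangle\bigr)$. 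Your observation that the total reference-evolution time is $\tilde{\mathcal{O}}(t)$ is correct but does not save you: that time is fragmented into $\tilde{\mathcal{O}}(\alpha_1 t)$ pieces, each carrying the non-amortizable startup cost $\sum_{j\ge 2}C_j$.

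By contrast, in the paper's ordering the recursive call at step $k$ simulates $A=H_{<k}$ for time $1/\alpha_k$, and since $\alpha_j\ge\alpha_k$ for all $j<k$ one has $(1/\alpha_k)\alpha_j\ge 1$, so every $\bigl(s\alpha_j+1\bigr)$ factor is $\mathcal{O}(s\alpha_j)$ and the product $\prod_{i>j}(\alpha_{i-1}/\alpha_i)=\alpha_j/\alpha_k$ telescopes against the outer $t\alpha_k$ to give $t\alpha_j C_j$. The sortedness hypothesis $\alpha_1\ge\cdots\ge\alpha_m$ is used precisely here (and to simplify $\alpha_{i-1}/\alpha_i+1\le 2\alpha_{i-1}/\alpha_i$), not merely to unify logarithms as you suggest. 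Note also that the interaction-picture lemma as stated requires $\alpha_A\ge\alpha_B$, which your assignment violates; that is a symptom of the same issue.
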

This is achieved by repeatedly applying the interaction picture Hamiltonian simulation algorithm of~\cite{Low2018IntPicSim}, which we briefly outline. For simplicity, let us ignore error contributions, as they occur in polylogarithmic factors. Given a Hamiltonian $H=A+B$ with two terms, let us assume that $B/\alpha_B$ is block-encoded with cost $C_B$, and that $e^{-iAt}$ may be simulated with cost $\tilde{\mathcal{O}}(t\alpha_A C_A)$ for some cost$C_A$ and $\alpha_A\ge\|A\|$.~\cite{Low2018IntPicSim} then simulates $e^{-i(A+B)t}$ with cost $\tilde{\mathcal{O}}(t(\alpha_BC_B+\alpha_AC_A))$. Given a Hamiltonian described by a sum of $m$ block-encoded terms $H_j$, the proof of~\cref{Thm:HamSimRecursion} then follows by induction. For $k=1$, one simulates $e^{-iH_1t}$ using, say, an algorithm with the complexity of~\cref{eq:qubitization}. For $k>1$, one simulates $e^{-i(H_1+\cdots+H_k)t}$ by combining $e^{-i(H_1+\cdots+H_{k-1})t}$ with the block-encoding of $H_k$. One repeats until $k=m$, and each step contributes a multiplicative factor.

Given a $d$-sparse Hamiltonian described by the black-box oracles of~\cref{def:Sparse_Oracle}, we split it into $m$ terms $H_j$ where the $j^{\text{th}}$ term only contains entries of $H$ with absolute value between $(\Lambda^{(j-1)}_\mathrm{max},\Lambda^{(j)}_\mathrm{max}]$. Subsequently, we block-encode each term in the format of~\cref{Def:Standard_Form}. This block-encoding requires some number $C_j$ of queries, and achieves some normalizing constant $\alpha_j$. The total cost of simulation given by~\cref{eq:costRecursion} depends crucially on the quality of this encoding -- clearly, we would like to minimize $C_j$ and $\alpha_j$ such that for any fixed $m$,  all products $\alpha_jC_j$ scale identically with respect to $d$.

Block-encoding a sparse Hamiltonian is related to the Szegedy quantum walk defined for any Hamiltonian~\cite{Childs2010}. In the $m=1$ case, one defines a set of quantum states $\ket{\chi_k}$ and $\ket{\bar{\chi}_i}$, such that they have $\mathcal{O}(d)$ nonzero amplitudes at known positions and their mutual overlap is an amplitude $\braket{\bar{\chi}_i}{\chi_k}=H_{ik}/\alpha$ that reproduces matrix values of $H$, up to a normalizing constant. By doubling the Hilbert space of $H$ so that $N_a=N$ and using two additional qubits, all states within each set $\{\ket{\chi_k}\}$, and $\{\ket{\bar\chi_i}\}$ can be made mutually orthogonal. Block-encoding then reduces to controlled-arbitary quantum state preparation where on an input state $\ket{k}_s$, one prepares a quantum state $\ket{\chi_k}$, then unprepares with a similar procedure for $\ket{\bar{\chi}_i}$. 

A key step in controlled-state preparation applied in~\cite{Berry2012,Berry2015Hamiltonian,Low2017USA}, is converting a $b$-bit binary representation of a matrix element $H_{ik}$ in quantum state $\ket{H_{ik}}\ket{0}_a$ to an amplitude $\ket{H_{ik}}\left(\sqrt{\frac{H_{ik}}{\Lambda_\mathrm{max}}}\ket{0}_a+\cdots\ket{1}_a\right)$. Previous approaches require coherently computing a binary representation of $\ket{H_{ik}}\mapsto\ket{\sin^{-1}(\sqrt{H_{ik}/\Lambda_{\mathrm{max}}})}$. This is performed with quantum arithmetic and is extremely costly, both in the asymptotic limit and in constant prefactors. Moreover, the function can only be approximated, leading to the number of bits $b$ scaling with error. Using recent insight on arithmetic-free black-box quantum state preparation~\cite{Sanders2018}, this subroutine may be replaced with an easier problem of preparing the desired amplitude with a garbage state $\ket{u_{|H_{ik}|}}_b$, that depends only on $|H_{ik}|$, attached like $\ket{H_{ik}}\ket{0}_b\ket{0}_a\mapsto\ket{H_{ik}}\left(\sqrt{\frac{H_{ik}}{\Lambda_\mathrm{max}}}\ket{u_{|H_{ik}|}}_b\ket{0}_a+\cdots\right)$. This suffices to implement the walk, and as shown in the proof of~\cref{thm:sparse-Ham-block-encoding}, can be performed exactly with $O(1)$ reversible integer adders, hence the linear $\mathcal{O}(b)$ gate complexity scaling.

As shown by~\cite{Berry2012}, one may then block-encode $H_j$ with $\alpha_j=d\Lambda^{(j)}_\mathrm{max}$ using $C_j=\mathcal{O}(1)$.
By enhancing quantum state preparation with a linearized high-precision variant of amplitude amplification,~\cite{Low2017USA} encodes $H$ to error $\epsilon$ with $\alpha_j=\Theta(\Lambda^{(j)}_1)$ using $C_j=\tilde{\mathcal{O}}((d\Lambda^{(j)}_\mathrm{max}/\alpha_j)^{1/2})$ queries. By bounding the induced one-norm $\|H_j\|_1\le \Lambda^{(j)}_1\le \Lambda^2_{1 \shortto 2}/\Lambda^{(j-1)}_\mathrm{max}$, the overall cost of simulation by~\cref{eq:costRecursion} scales with the sum of $\alpha_jC_j=\tilde{\mathcal{O}}((d\Lambda^{(j)}_\mathrm{max}/\Lambda^{(j-1)}_\mathrm{max})^{1/2}\Lambda_{1 \shortto 2})$ (the $j=1$ case uses a different bound $\Lambda^{(1)}_1 \le \sqrt{d}\Lambda_{1 \shortto 2}$). For any fixed $m$, we choose the ratio between absolute values of terms to scale like $\Lambda^{(j)}_\mathrm{max}/\Lambda^{(j-1)}_\mathrm{max}=d^{\mathcal{O}(1/m)}$. A straightforward optimization of the exponents leads to the stated complexity of~\cref{thm:HamSim_sparse} with $m=\mathcal{O}(\sqrt{\log{(d)}})$.

\section{Hamiltonian simulation by recursion}
\label{sec:Ham_sim_recursion}
In this section, we prove~\cref{Thm:HamSimRecursion} for simulating time-evolution by Hamiltonians $H$ expressed as a sum of $m$ Hermitian terms
\begin{align}
H=\sum^m_{j=1}H_j.
\end{align} 
Typically, time-evolution $e^{-iH_jt}$ by each term alone is easy to implement -- the challenge is combining these parts to approximate time-evolution $e^{-iHt}$ by the whole. In our algorithm, each Hamiltonian $H_j$ is assumed to be block-encoded by a unitary oracle in the format of~\cref{Def:Standard_Form}.
\begin{proof}[Proof of~\cref{Thm:HamSimRecursion}.]
We combine two other simulation algorithms by~\cite{Berry2015Truncated,Low2016hamiltonian,Low2018IntPicSim} in an $m$ step recursive procedure. At the $k=1$ step, the first algorithm~\cref{thm:sim_block} approximates $e^{-iH_1t}$ to error $\epsilon$ for time $t>0$ using the block-encoding of $H_1$. At the $k>1$ step, the second algorithm~\cref{thm:int_pic_sim} approximates $e^{-i(H_1+\cdots+H_k)t}$ by combining $e^{-i(H_1+\cdots+H_{k-1})t}$ with the block-encoding of $H_k$. By repeating this step for $k=2,\cdots,m$, we obtain the time-evolution operator $e^{-iHt}$. We now state these simulation algorithms. As~\cref{thm:int_pic_sim} modifies the original presentation of the same result in~\cite{Low2018IntPicSim}, we provide a proof sketch~\cref{sec:IntPicSim}.
\begin{restatable}[Hamiltonian simulation of a single term~\cite{Berry2015Truncated,Low2016hamiltonian}]{lemma}{HamSimBlock}
	\label{thm:sim_block}
	For any Hamiltonian $A\in\mathbb{C}^{N\times N}$,  let us assume that
	\begin{itemize}
		\item Block-encoding $A/\alpha$ with ancilla dimension $N_a$ has cost $C_A$. 
	\end{itemize}
	Then the time-evolution operator $e^{-iHt}$ may be approximated to error $\epsilon$ for any $t>0$ with cost
	\begin{align}
	C_\mathrm{queries}[H,t,\epsilon]=\mathcal{O}\left(\left(t\alpha+1\right)C_A\log{(t\alpha /\epsilon)}\right),
	\;
	C_\mathrm{gates}[H,t,\epsilon]=\mathcal{O}\left(\left(t\alpha+1\right)\log{(t\alpha/\epsilon)}\log{(N_a)}\right).
	\end{align}
\end{restatable}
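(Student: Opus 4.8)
The plan is to prove the lemma by the now-standard combination of \emph{qubitization} and \emph{quantum signal processing} (QSP), following~\cite{Berry2015Truncated,Low2016hamiltonian} (the operator in the statement should read $e^{-iAt}$). From the block-encoding unitary $U$ of $A/\alpha$ I would build a qubitized iterate $W$ by composing $U$ with a reflection about the $\ket{0}_a$ ancilla subspace, adjoining one extra qubit if needed to symmetrize $U$. The structural ``qubitization'' lemma then gives a decomposition of the whole Hilbert space into $W$-invariant subspaces of dimension at most two, one per eigenvalue $\lambda_k$ of $A$, on each of which $W$ acts as a planar rotation by angle $\theta_k=\arccos(\lambda_k/\alpha)$; in particular $\mathrm{spec}(W)=\{e^{\pm i\theta_k}\}$, and together with the single-qubit ``signal'' rotation $e^{i\phi Z}$ on the ancilla this realizes the QSP algebra. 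Each application of $W$ uses one (controlled-)$U$, i.e.\ $\mathcal{O}(C_A)$ queries, plus one reflection about $\ket{0}_a$, i.e.\ $\mathcal{O}(\log N_a)$ two-qubit gates.

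Next I would fix the polynomial. Since $x=\lambda_k/\alpha\in[-1,1]$ and $e^{-iAt}$ has eigenvalue $e^{-it\alpha x}$, I need real polynomials of definite parity approximating $x\mapsto\cos(t\alpha x)$ and $x\mapsto\sin(t\alpha x)$ on $[-1,1]$. The route that transparently matches the stated bound is to write $e^{-iAt}=(e^{-iAt/r})^r$ with $r=\mathcal{O}(t\alpha)+1$, so that each factor has argument of norm $\mathcal{O}(1)$; for a bounded argument the truncated Chebyshev (Jacobi--Anger) series converges super-exponentially, so a polynomial of degree $D_0=\mathcal{O}(\log(r/\epsilon))=\mathcal{O}(\log(t\alpha/\epsilon))$ approximates each factor to error $\epsilon/r$. (Alternatively one applies Jacobi--Anger once at full time, with degree $\mathcal{O}(t\alpha+\log(1/\epsilon))$, which is also within budget and in fact tighter.)

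Then I would implement and tally. By the QSP / quantum eigenvalue transformation theorem, any degree-$D_0$ real polynomial of fixed parity bounded by $1$ on $[-1,1]$ (together with its complementary polynomial) is realized as the $\ket{0}_a$ block of an alternating product of $D_0$ copies of $W$ and $D_0+1$ single-qubit $z$-rotations on the ancilla, with phases computable classically. Combining the even part ($\cos$) and the odd part ($\sin$) via one extra qubit and a linear-combination-of-unitaries step, and restoring a constant amplitude loss with $\mathcal{O}(1)$ rounds of oblivious amplitude amplification (equivalently, invoking a single-ancilla generalized-QSP construction that directly produces the complex exponential), yields $e^{-iAt/r}$ in the block using $\mathcal{O}(D_0)$ applications of $W$. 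Concatenating $r$ such blocks gives $e^{-iAt}$ to error $\mathcal{O}(r\cdot\epsilon/r)=\mathcal{O}(\epsilon)$ in the block-encoding metric with $\mathcal{O}(rD_0)=\mathcal{O}((t\alpha+1)\log(t\alpha/\epsilon))$ applications of $W$; multiplying by the per-$W$ costs gives $C_\mathrm{queries}=\mathcal{O}((t\alpha+1)C_A\log(t\alpha/\epsilon))$ and $C_\mathrm{gates}=\mathcal{O}((t\alpha+1)\log(t\alpha/\epsilon)\log N_a)$, absorbing the interleaved single-qubit rotations and the amplification overhead.

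The main obstacle is the first step together with the parity/LCU bookkeeping of the third: one must carefully establish the two-dimensional invariant-subspace structure of $W$ and the $\arccos$ spectral map, and then pass from the real, fixed-parity polynomials that QSP natively implements to the complex exponential $e^{-it\alpha x}$ without losing more than constant factors. The analytic ingredient---Bessel-tail bounds controlling the truncation degree---is routine. Since the lemma is quoted verbatim from~\cite{Berry2015Truncated,Low2016hamiltonian}, a fully self-contained treatment would just cite their proofs for these components.
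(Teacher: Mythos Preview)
Your proposal is correct, and there is essentially nothing to compare against: the paper does not prove this lemma at all. It is stated as a black-box input to the recursive construction in Section~3 and attributed to~\cite{Berry2015Truncated,Low2016hamiltonian}; no proof or sketch appears anywhere in the paper or its appendices. What you have written is an accurate outline of the qubitization plus quantum-signal-processing argument from those references, and your final remark---that a self-contained treatment would simply cite their proofs---is exactly what the paper does.

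One minor observation: your segmented route (split into $r=\mathcal{O}(t\alpha)+1$ pieces, degree $\mathcal{O}(\log(t\alpha/\epsilon))$ each) lands precisely on the form $\mathcal{O}((t\alpha+1)\log(t\alpha/\epsilon))$ stated in the lemma, whereas the single-shot Jacobi--Anger route you mention parenthetically gives the tighter additive bound $\mathcal{O}(t\alpha+\log(1/\epsilon))$. The paper deliberately uses the multiplicative form because it composes more cleanly in the recursion of Theorem~\ref{Thm:HamSimRecursion}, so your primary route is the one that matches.
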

\begin{restatable}[Hamiltonian simulation in the interaction picture; adapted from~\cite{Low2018IntPicSim}]{lemma}{HamSimIntPic}
	\label{thm:int_pic_sim}
	For any Hamiltonian $A+B\in\mathbb{C}^{N\times N}$, where $A$ and $B$ are Hermitian, let us assume that
	\begin{itemize}
		\item Block-encoding $B/\alpha_B$ with ancilla dimension $N_a$ has cost $C_B$. 
		\item There exists some $\alpha_A\ge\alpha_B$, $\gamma>0$, such that approximating $e^{-iAt}$ to error $\epsilon$ for any time $t>0$ has cost
		\begin{align}
		C_\mathrm{queries}[A,t,\epsilon]&=\mathcal{O}\left((t\alpha_A+1)\log^{\gamma}\left(t \alpha_A/\epsilon\right)\right),
		\\\nonumber
		C_\mathrm{gates}[A,t,\epsilon]&=\mathcal{O}\left((t\alpha_A+1)\log^{\gamma}\left(t\alpha_A/\epsilon\right)\log{(N_a)}\right).
		\end{align}
	\end{itemize}
	Then the time-evolution operator $e^{-iHt}$ may be approximated to error $\epsilon$ for any $t>0$ with cost
	\begin{align}
	\label{eq:int_pic_sim_cost}
	C_\mathrm{queries}[B,A,t,\epsilon]&=\mathcal{O}\left((t\alpha_B+1)
	\left(C_B+C_\mathrm{queries}\left[A,\frac{1}{\alpha_B},\frac{\epsilon}{t\alpha_B}\right]\right)\log^2{\lr{\frac{t\alpha_A}{\epsilon}}}
	\right),
	\\\nonumber
	C_\mathrm{gates}[B,A,t,\epsilon]&=\mathcal{O}\left((t\alpha_B+1)
	C_\mathrm{gates}\left[A,\frac{1}{\alpha_B},\frac{\epsilon}{t\alpha_B}\right]\log^2{\lr{\frac{t\alpha_A}{\epsilon}}}
	\right).
	\end{align}
\end{restatable}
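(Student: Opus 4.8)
The plan is to adapt the interaction-picture simulation of~\cite{Low2018IntPicSim}, with the essential modification that the free Hamiltonian $A$ is accessed only through the supplied short-time simulation subroutine, so that its cost enters the bound as $C_\mathrm{queries}[A,1/\alpha_B,\epsilon/(t\alpha_B)]$ rather than through a block-encoding. First I would pass to the frame rotating under $A$. Writing $H_I(s)=e^{iAs}Be^{-iAs}$ for the interaction-picture Hamiltonian, the identity $e^{-i(A+B)t}=e^{-iAt}\,\mathcal{T}\!\exp\!\big(-i\int_0^t H_I(s)\,ds\big)$ reduces the task to implementing the time-ordered exponential, up to a prefactor $e^{-iAt}$ supplied by the $A$-subroutine. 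The structural fact that makes the interaction picture pay off is that $e^{\pm iAs}$ is unitary, so $\|H_I(s)\|=\|B\|\le\alpha_B$ uniformly in $s$; the effective evolution is therefore governed by the smaller norm $\alpha_B$, not by $\alpha_A$.

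Next I would segment $[0,t]$ into $r=\mathcal{O}(t\alpha_B+1)$ intervals of length $\tau=\mathcal{O}(1/\alpha_B)$, so that $\tau\alpha_B=\mathcal{O}(1)$ on each segment, and approximate the segment's time-ordered exponential by a Dyson series truncated at order $K$. Since $\tau\|H_I\|=\mathcal{O}(1)$, the order-$K$ tail is bounded by $(\tau\alpha_B)^{K+1}/(K+1)!=\mathcal{O}\!\big((e\tau\alpha_B/K)^{K}\big)$, which drops below the per-segment budget $\epsilon/r$ once $K=\mathcal{O}\!\big(\tfrac{\log(r/\epsilon)}{\log\log(r/\epsilon)}\big)$; summing the $r$ per-segment errors gives total error $\mathcal{O}(\epsilon)$.

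Then I would realize each segment as a block-encoding of its truncated Dyson series via the \textsc{select}/\textsc{prepare} linear-combination-of-unitaries construction of~\cite{Berry2015Truncated,Low2018IntPicSim}. The order-$k$ term $\int_{0\le s_1\le\cdots\le s_k\le\tau}H_I(s_k)\cdots H_I(s_1)\,d\vec{s}$ is discretized on a time grid and encoded by a block-Hamiltonian oracle (the \textsc{ham-t} operator of~\cite{Low2018IntPicSim}) that, controlled on a clock register holding a discretized time $s$, applies $e^{iAs}$, one query to the block-encoding of $B$, and $e^{-iAs}$; time-ordering is enforced by the standard symmetrization of the clock registers. Each segment thus uses $\mathcal{O}(K)$ queries to the block-encoding of $B$ and $\mathcal{O}(K)$ controlled evolutions $e^{-iAs}$ with $|s|\le\tau=\mathcal{O}(1/\alpha_B)$, each simulated to error $\mathcal{O}(\epsilon/(t\alpha_B))$ so that the accumulated error stays in budget --- exactly the cost $C_\mathrm{queries}[A,1/\alpha_B,\epsilon/(t\alpha_B)]$ per use. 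A single round of oblivious amplitude amplification then recovers the segment unitary from the near-unit-amplitude success subspace of the LCU. Collecting $r$ segments against the per-segment query counts yields the factor $(t\alpha_B+1)$ on both $C_B$ and the $A$-subroutine cost; the two powers of $\log(t\alpha_A/\epsilon)$ collect one factor from the truncation order $K$ and a second from the discretization and amplification overhead, while the assumption $\alpha_A\ge\alpha_B$ ensures the dominant logarithm throughout is $\log(t\alpha_A/\epsilon)$. The gate count picks up the extra $\log(N_a)$ from arithmetic on the clock and ancilla registers.

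The main obstacle is the error bookkeeping that makes this a faithful black-box recursion. I must verify that the Dyson truncation, the time-grid discretization, and the finite-precision $A$-evolutions each contribute $\mathcal{O}(\epsilon/r)$ per segment, so that the supplied subroutine is invoked precisely at the parameters $(1/\alpha_B,\epsilon/(t\alpha_B))$ and no hidden dependence on $\alpha_A$ leaks into the segment count $r$. Keeping every $A$-evolution time uniformly bounded by $1/\alpha_B$ --- rather than scaling with the full $t$ --- is exactly what decouples the block-encoding cost of $B$ (scaling with the small norm $\alpha_B$) from the simulation cost of $A$ (which unavoidably scales with $\alpha_A$), and is the single feature on which the recursion of~\cref{Thm:HamSimRecursion} rests.
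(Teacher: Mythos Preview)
Your proposal is correct and follows essentially the same route as the paper: pass to the interaction picture, segment $[0,t]$ into $\mathcal{O}(t\alpha_B+1)$ intervals of length $\tau=\mathcal{O}(1/\alpha_B)$, approximate each segment by a truncated Dyson series implemented via the \textsc{ham-t} oracle built from one query to the block-encoding of $B$ sandwiched between clock-controlled $A$-evolutions, and concatenate. The only place the paper is sharper than your sketch is the origin of the second logarithm: it arises not from oblivious amplitude amplification (which contributes only a constant factor here) but from implementing the clock-controlled map $\sum_{j=0}^{M-1}\ketbra{j}{j}\otimes e^{-iAj\tau/M}$ by binary decomposition into $\mathcal{O}(\log M)=\mathcal{O}(\log(\tau\alpha_A/\epsilon))$ singly-controlled evolutions, so each of your ``$\mathcal{O}(K)$ controlled evolutions'' actually costs $\mathcal{O}(\log M)\cdot C_\mathrm{queries}[A,\tau,\cdot]$---this is what produces $K\cdot\log M$ and hence the $\log^2$ on the $A$-cost.
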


Note the factor $\mathcal{O}(t\alpha+1)$, which provides the correct dominant scaling in the nonasymptotic limit where $t \alpha=\mathcal{O}(1)$. Let us apply~\cref{thm:int_pic_sim} recursively with $A=H_{< k}=\sum^{k-1}_{j=1}H_j$ and $B=H_k$ for $k>1$. The cost of the $k^{\text{th}}$ iteration depends on the cost of the $k-1^{\text{th}}$ iteration as follows:
\begin{align}
C_\mathrm{queries}[H_1,\cdot,t,\epsilon]&=\mathcal{O}\left((t\alpha_1+1)C_1\log\left(t\alpha_1/\epsilon\right)\right)
\displaybreak[0]\\\nonumber
C_\mathrm{queries}[H_2,H_{<2},t,\epsilon]&=\mathcal{O}\left((t\alpha_2+1)
\left(C_2+C_\mathrm{queries}\left[H_1,\cdot,\frac{1}{\alpha_2},\frac{\epsilon}{t\alpha_2}\right]\right)\log^2{\lr{\frac{t\alpha_1}{\epsilon}}}
\right)
\\\nonumber
&=\mathcal{O}\left((t\alpha_2+1)
\left(C_2+\lr{\frac{\alpha_1}{\alpha_2}+1}C_1\right)\log^3{\lr{\frac{t\alpha_1}{\epsilon}}}
\right)
\\\nonumber
&=\mathcal{O}\left((t\alpha_2+1)
\left(C_2+2\frac{\alpha_1}{\alpha_2}C_1\right)\log^3{\lr{\frac{t\alpha_1}{\epsilon}}}
\right)
\displaybreak[0]\\\nonumber
C_\mathrm{queries}[H_3,H_{<3},t,\epsilon]&=\mathcal{O}\left((t\alpha_3+1)
\left(C_3+C_\mathrm{queries}\left[H_2,\cdot,\frac{1}{\alpha_3},\frac{\epsilon}{t\alpha_3}\right]\right)\log^2{\lr{\frac{t\alpha_1}{\epsilon}}}
\right),
\\\nonumber
&=\mathcal{O}\left((t\alpha_3+1)
\left(C_3+2\frac{\alpha_2}{\alpha_3}
\left(C_2+2\frac{\alpha_1}{\alpha_2}C_1\right)\right)\log^5{\lr{\frac{t\alpha_1}{\epsilon}}}
\right),
\displaybreak[0]\\\nonumber
&\vdots
\displaybreak[0]\\\nonumber
C_\mathrm{queries}[H_k,H_{< k}, t,\epsilon]
&=\mathcal{O}\left((t\alpha_k+1)\left(\sum^k_{j=1}C_j\prod^{k}_{i=j+1}\frac{2\alpha_{i-1}}{\alpha_{i}}\right)\log^{2k-1}{\lr{\frac{t\alpha_1 }{\epsilon}}}\right)
\\\nonumber
&=\mathcal{O}\left(t\left(\sum^k_{j=1}\alpha_jC_j\right)\left(2\log{\lr{\frac{t\alpha_1 }{\epsilon}}}\right)^{2k-1}\right).
\\\nonumber
\end{align}
The gate complexity follows by an identical recursion:
\begin{align}
C_\mathrm{gates}[H_1,\cdot,t,\epsilon]&=\mathcal{O}\left((t\alpha_1+1)\log\left(t\alpha_1/\epsilon\right)\log{(N_a)}\right),
\\\nonumber
C_\mathrm{gates}[H_2,H_{<2},t,\epsilon]&=\mathcal{O}\left((t\alpha_2+1)
\left(C_\mathrm{gates}\left[H_1,\cdot,\frac{1}{\alpha_2},\frac{\epsilon}{t\alpha_2}\right]\right)\log^2{\lr{\frac{t\alpha_1}{\epsilon}}}
\right)
\\\nonumber
&=\mathcal{O}\left((t\alpha_2+1)
2\frac{\alpha_1}{\alpha_2}\log^3{\lr{\frac{t\alpha_1}{\epsilon}}}\log{(N_a)}\right),
\\\nonumber
&\vdots
\\\nonumber
C_\mathrm{gates}[H_k,H_{< k}, t,\epsilon]
&=\mathcal{O}\left((t\alpha_k+1)\left(\prod^{k}_{i=j+1}\frac{2\alpha_{i-1}}{\alpha_{i}}\right)\log^{2k-1}{\lr{\frac{t\alpha_1 }{\epsilon}}}\log{(N_a)}\right)
\\\nonumber
&=\mathcal{O}\left(t\left(\sum^k_{j=1}\alpha_j\right)\left(2\log{\lr{\frac{t\alpha_1 }{\epsilon}}}\right)^{2k-1}\log{(N_a)}\right).
\\\nonumber
\end{align}
Note that we use $\frac{\alpha_{j-1}}{\alpha_j}\ge 1$ to simplify $\frac{\alpha_{j-1}}{\alpha_j}+1\le 2\frac{\alpha_{j-1}}{\alpha_j}$. Setting $k=m$ completes the proof.
\end{proof}

\section{Sparse Hamiltonian simulation with $t\sqrt{d}\|H\|_{1\rightarrow 2}$ scaling}
\label{sec:sparse_ham_sim}
We now simulate sparse Hamiltonians by applying the algorithm of~\cref{sec:Ham_sim_recursion}. In the standard definition, a Hamiltonian is $d$-sparse if it has at most $d$-nonzero entries in any row. Moreover, it is assumed that there exists black-box oracles to compute the positions and $b$-bit values of these entries. The cost of simulation is the number of queries made to these oracles, which are more precisely defined by~\cref{def:Sparse_Oracle}. The query complexity of block-encoding a sparse Hamiltonian $H$ is given by the following result, which was mostly proven by~\cite{Low2017USA}. Our contribution is improving the gate complexity scaling from $\mathcal{O}(b^{5/2})$ to $\mathcal{O}(b)$.
\begin{restatable}[Block encoding of sparse Hamiltonians by amplitude multiplication; modified from~\cite{Low2017USA}]{theorem}{USAblockencoding}
	\label{thm:sparse-Ham-block-encoding}
	Let	$H\in \mathbb{C}^{N\times N}$ be a $d$-sparse Hamiltonian satisfying all the following:
	\begin{itemize}
		\item There exist oracles $O_F$ and $O_H$ of~\cref{def:Sparse_Oracle} that compute the positions and values of non-zero matrix elements to $b$ bits of precision.
		\item An upper bound on the max-norm $\Lambda_{\mathrm{max}}\ge \|H\|_\mathrm{max} = \max_{ik}|H_{ik}|$ is known.
		\item An upper bound on the spectral-norm $\Lambda\ge \|H\| = \max_{v\neq 0}\frac{\|H\cdot v\|}{\|v\|}$ is known. 
		\item An upper bound on the induced one-norm $\Lambda_{1}\ge \|H\|_{1} = \max_{k}\sum_i|H_{ik}|$ is known.
	\end{itemize}
	Then there exists a Hamiltonian $\tilde H \in \mathbb{C}^{N\times N}$ that approximates $H$ with error $\|\tilde H- H\|=\mathcal{O}(\Lambda\delta)$, and can be block-encoded with normalizing constant $\alpha=\Theta(\Lambda_{1})$ using
	\begin{itemize}
		\item Queries $O_F$ and $O_H$: $\mathcal{O}\left(\sqrt{\frac{d\Lambda_{\mathrm{max}}}{\Lambda_{1}}}\log{\lr{\frac{1}{\delta}}}\right)$.
		\item Quantum gates: $\mathcal{O}\left(\sqrt{\frac{d\Lambda_{\mathrm{max}}}{\Lambda_{1}}}\log{\lr{\frac{1}{\delta}}}\left(\log{(N)}+b\right)\right)$.
		\item Qubits: $\mathcal{O}(\log{(N)}+b)$.
	\end{itemize}
\end{restatable}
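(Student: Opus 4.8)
The plan is to build the block-encoding in two layers. The first layer reproduces the Szegedy-walk reduction of block-encoding a sparse Hamiltonian to controlled state preparation~\cite{Childs2010,Berry2012}, obtaining the crude normalizing constant $\alpha_0=d\Lambda_\mathrm{max}$ with $\mathcal{O}(1)$ queries, but realizes the value-to-amplitude step with the arithmetic-free black-box state preparation of~\cite{Sanders2018}; this layer is where all the novelty and the improved $\mathcal{O}(b)$ gate scaling lives. The second layer applies the amplitude-multiplication (uniform spectral amplification) machinery of~\cite{Low2017USA} to this crude block-encoding to shrink the normalizing constant down to $\Theta(\Lambda_1)$, and is essentially unchanged from~\cite{Low2017USA}.

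For the first layer I would recall that, after doubling the system register and adjoining $\mathcal{O}(\log N+b)$ ancillas, one prepares for each column index $k$ a state $\ket{\chi_k}$ supported on $\mathcal{O}(d)$ known positions — using one query to $O_F$ to write the column indices $f(k,l)$ and one query to $O_H$ to write the $b$-bit entries $H_{f(k,l),k}$ — and similarly $\ket{\bar\chi_i}$ for each row index $i$, arranged so that $\braket{\bar\chi_i}{\chi_k}=H_{ik}/\alpha_0$; the block-encoding is then $U_0=\mathcal{O}_{\bar\chi}^\dagger\,S\,\mathcal{O}_\chi$ for a fixed permutation $S$, whence $(\bra{0}\otimes\ii_s)U_0(\ket{0}\otimes\ii_s)=\sum_{ik}\braket{\bar\chi_i}{\chi_k}\ketbra{i}{k}=H/\alpha_0$. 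The only nontrivial ingredient is the map $\ket{H_{ik}}\ket{0}\mapsto\ket{H_{ik}}\bigl(\sqrt{H_{ik}/\Lambda_\mathrm{max}}\ket{0}+\cdots\ket{1}\bigr)$. Instead of coherently computing $\arcsin\sqrt{|H_{ik}|/\Lambda_\mathrm{max}}$ by quantum arithmetic — which costs $\mathrm{poly}(b)$ gates and forces $b=\Omega(\log(1/\delta))$ — I would, following~\cite{Sanders2018}, prepare a uniform superposition on an $\mathcal{O}(b)$-qubit register and flag it against the integer $|H_{ik}|$ using $\mathcal{O}(1)$ reversible integer adders; this produces, \emph{exactly}, the amplitude $\sqrt{|H_{ik}|/\Lambda_\mathrm{max}}$ tensored with a garbage state $\ket{u_{|H_{ik}|}}$ that depends only on $|H_{ik}|$, at a cost of $\mathcal{O}(b)$ gates and no arithmetic, with the phase of $H_{ik}$ absorbed into a controlled phase on the value register.

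The point that makes this substitution legitimate is that the garbage register depends only on the \emph{magnitude} $|H_{ik}|$; since $H$ is Hermitian, $|H_{ik}|=|H_{ki}|$, so the garbage carried on the $\ket{\chi_k}$ side and on the $\ket{\bar\chi_i}$ side for any matching index pair coincide and cancel in the overlap $\braket{\bar\chi_i}{\chi_k}$. Hence $U_0$ is still an \emph{exact} block-encoding of $H/(d\Lambda_\mathrm{max})$ with $\mathcal{O}(1)$ queries, $\mathcal{O}(b)$ extra gates, and $\mathcal{O}(\log N+b)$ qubits. (Were $H$ not Hermitian the two garbage states would differ and the cancellation would fail; here $H$ is a Hamiltonian, which is exactly the hypothesis.)

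Finally I would apply the amplitude multiplication of~\cite{Low2017USA} to $U_0$: enhance the state-preparation sub-circuits $\mathcal{O}_\chi,\mathcal{O}_{\bar\chi}$ with a linearized, high-precision variant of amplitude amplification that boosts their good amplitude by a factor $F$. Because the normalizing constant of the resulting block-encoding drops from $\alpha_0$ to $\Theta(\alpha_0/F^2)$ in this amplitude, the choice $F=\Theta(\sqrt{d\Lambda_\mathrm{max}/\Lambda_1})$ brings the normalization down to $\Theta(\Lambda_1)$; the bounds $\Lambda_1\ge\|H\|_1$ and $\Lambda\ge\|H\|$, together with $\|H\|\le\|H\|_1$ (so one may assume $\Lambda\le\Lambda_1$), keep the amplified amplitudes at most $1$ and let one pin the error of the resulting Hamiltonian $\tilde H$ at $\|\tilde H-H\|=\mathcal{O}(\Lambda\delta)$. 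Boosting an amplitude by $F$ costs $\mathcal{O}(F)$ applications of $\mathcal{O}_\chi$, and the high-precision linearization contributes a further $\mathcal{O}(\log(1/\delta))$ factor; multiplying by the $\mathcal{O}(1)$ queries and $\mathcal{O}(b)$ gates per invocation of $\mathcal{O}_\chi$ gives the stated query and gate counts, and the qubit count is inherited from the first layer. I expect the main obstacle to be exactly the verification in the third paragraph: confirming that~\cite{Sanders2018}'s preparation, with its magnitude-dependent garbage, can be slotted into both $\mathcal{O}_\chi$ and $\mathcal{O}_{\bar\chi}$ so that the garbage cancels in every overlap and the amplitude-multiplication error analysis of~\cite{Low2017USA} transfers without change; once that compatibility is in hand, the rest is bookkeeping.
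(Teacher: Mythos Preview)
Your proposal is correct and follows essentially the same route as the paper: build the crude $d\Lambda_{\max}$ block-encoding via the Szegedy/controlled-state-preparation reduction, replace the inverse-trigonometric value-to-amplitude step by the comparison-based preparation of~\cite{Sanders2018} with magnitude-only garbage that cancels by Hermiticity, and then invoke the amplitude-multiplication of~\cite{Low2017USA} to reduce the normalization to $\Theta(\Lambda_1)$ with multiplicative error $\delta$. The paper's proof carries out exactly these steps (its $U_{\rm row}^\dag U_{\rm col}$ is your $\mathcal{O}_{\bar\chi}^\dagger S\mathcal{O}_\chi$, with the swap absorbed into $U_{\rm row}$), and its error bound $\|\tilde H-H\|=\mathcal{O}(\Lambda\delta)$ comes from writing $\tilde H=(I+\hat\delta)H(I+\hat\delta)$ for a diagonal $\hat\delta$ with $\|\hat\delta\|\le\delta$ rather than from the inequality $\Lambda\le\Lambda_1$ you mention, but this is cosmetic.
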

\begin{proof}
	Proof outline in~\cref{sec:USA}.
\end{proof}

Our algorithm for simulating sparse Hamiltonians splits  $H=\sum^m_{j=1}H_j$  into $m$ Hermitian terms, where matrix entries of the $j^\text{th}$ term are
\begin{align}
(H_j)_{ik}=
\begin{cases}
H_{ik}, & \Lambda^{(j-1)}_{\mathrm{max}} < |H_{ik}| \le \Lambda^{(j)}_{\mathrm{max}}, \\
0, &\text{otherwise},
\end{cases}
\quad
0=\Lambda^{(0)}_{\mathrm{max}}<\Lambda^{(1)}_{\mathrm{max}}<\cdots<\Lambda^{(m)}_{\mathrm{max}}=\Lambda_\mathrm{max}.
\end{align}
Each term is block-encoded with normalization constant $\alpha_j$ by a procedure making $C_j$ queries to ${O}_{H}$ and ${O}_{F}$. We may then simulate time-evolution $e^{-iHt}$ by recombining these terms using~\cref{Thm:HamSimRecursion}. The query complexity of this procedure then depends strongly on the scaling of $\alpha_jC_j$. A judicious choice of cut-offs $\Lambda^{(j)}_{\mathrm{max}}$ combined with the block-encoding procedure~\cref{thm:sparse-Ham-block-encoding} allows us to prove our main result of~\cref{thm:HamSim_sparse} on sparse Hamiltonian simulation.
\begin{proof}[Proof of~\cref{thm:HamSim_sparse}.]
	The Hamiltonian $H_j$ in the decomposition $H=\sum^m_{j=1}H_j$ has matrix elements $\Lambda^{(j-1)}_{\mathrm{max}} < |H_{ik}| \le \Lambda^{(j)}_{\mathrm{max}}$, and may be block encoded using the procedure of~\cref{thm:sparse-Ham-block-encoding}. The only difference is replacing the oracle $O_H$ with an oracle 
	\begin{align}
	O_{H_j}\ket{i}\ket{k}\ket{z}=\ket{i}\ket{k}\ket{z\oplus ({H}_j)_{ik}},
	\end{align}
	that outputs matrix entries of $H_j$. Using $\mathcal{O}(1)$ queries to the oracle $O_H$ and $\mathcal{O}(\op{poly}(b))$ quantum gates, $O_{H_j}$ is constructed by computing the absolute value of $|({H})_{ik}|$ on $\mathcal{O}(b)$ bits, and performing a comparison with $\Lambda^{(j-1)}_{\mathrm{max}}$ and $\Lambda^{(j)}_{\mathrm{max}}$.

From~\cref{thm:sparse-Ham-block-encoding}, we may block-encode a Hamiltonian $\tilde{H}_j$ that approximates $\|\tilde{H}_j - H_j\| = \mathcal{O}(\Lambda^{(j)} \delta_j)$, with a normalization constant $\alpha_j=\Theta(\Lambda^{(j)}_1)\ge \|H_j\|_1$.
Thus the error from simulating time-evolution by $\tilde{H}=\sum_{j=1}^m \tilde{H}_j$ instead of $H$ is bounded by
\begin{align}
\|e^{-i\tilde{H}t}-e^{-iHt}\|
\le t\|\tilde{H}-H\|\le t\sum_{j=1}^m\|\tilde{H}_j-H_j\|
=\mathcal{O}\left(t\sum_{j=1}^m\Lambda^{(j)}\delta_j\right).
\end{align}
Using the upper bound $\Lambda^{(j)}\le \Lambda^{(j)}_1\le \sqrt{d}\Lambda^{(j)}_{1 \shortto 2} \le \sqrt{d}\Lambda_{1 \shortto 2}$, the overall contribution of this error may be bounded by $\mathcal{O}(\epsilon)$ with the choice $\delta_{j}=\frac{\epsilon}{m t \sqrt{d}\Lambda_{1 \shortto 2}}$. Thus the query complexity of block-encoding $H_j$ is $C_j=\mathcal{O}(\sqrt{d\Lambda^{(j)}_\mathrm{max}/\Lambda^{(j)}_1}\log{(mt\sqrt{d}\Lambda_{1 \shortto 2}/\epsilon)})$. Using these values $C_j$, and relabeling the $\alpha_j$ so that they are sorted like $\alpha_1 \ge \alpha_2 \ge\cdots \ge \alpha_m$, the query complexity of simulation by~\cref{Thm:HamSimRecursion} is
\begin{align}
\label{eq:sparse_cost_intermediate}
C_\mathrm{queries}[H,t,\epsilon]&=
\mathcal{O}\left(t\langle\vec{\alpha},\vec{C}\rangle\log^{2m-1}{\lr{\frac{t\alpha_1}{\epsilon}}} \right)
\\\nonumber
&=\mathcal{O}\left(t\left(
\sum^m_{j=1}\sqrt{d\Lambda^{(j)}_\mathrm{max}\Lambda^{(j)}_1}\right)(\log(t \sqrt{d}\Lambda_{1 \shortto 2}/\epsilon))^{\mathcal{O}(m)} \right).
\end{align}
In the last line, we simplify $\log{\lr{\frac{mt\sqrt{d}\Lambda_{1 \shortto 2}}{\epsilon}}}\log^{2m-1}{\lr{\frac{t\sqrt{d}\Lambda_{1 \shortto 2}}{\epsilon}}}=\log^{\mathcal{O}(m)}{\lr{\frac{t\sqrt{d}\Lambda_{1 \shortto 2}}{\epsilon}}}$.

We use the following upper bounds on the induced one-norm of $H_j$:
\begin{align}
\text{For } j=1,\; &\|H_1\|_1 
\le \|H\|_1
= \max_j \|H\cdot e_{j}\|_1 
\le \sqrt{d} \max_j\|H\cdot e_{j}\|
=\sqrt{d}\|H\|_{1 \shortto 2}
\le \sqrt{d}\Lambda_{1 \shortto 2} =\Lambda^{(1)}_1,
\\\nonumber
\forall j>1,\; &\|H_j\|_1
=\max_{k}\sum_{i}|(H_j)_{ik}|
< \max_{k}\sum_{i}\frac{|H_{ik}|^2}{\Lambda^{(j-1)}_{\mathrm{max}}}
= \frac{\|H\|^2_{1 \shortto 2}}{\Lambda^{(j-1)}_{\mathrm{max}}}
\le \frac{\Lambda^2_{1 \shortto 2}}{\Lambda^{(j-1)}_{\mathrm{max}}}
=\Lambda^{(j)}_1.
\end{align}
In the first line, $e_j$ is a unit vector with an entry $1$ in row $j$, and we apply the fact that $\|v\|_1\le\sqrt{d}\|v\|$ for all vectors $v$ with $d$ non-zero entries. In the second line, we apply the facts $|(H_j)_{ik}|/ \Lambda^{(j-1)}_{\mathrm{max}}\ge 1$ and $|H_{ik}|\ge|(H_j)_{ik}|$. By substituting into~\cref{eq:sparse_cost_intermediate}, we obtain
\begin{align}
\label{eq:sparse_cost_intermediateB}
C_\mathrm{queries}[H,t,\epsilon]
&=\mathcal{O}\left(t\left[
\sqrt{d^{3/2}\Lambda^{(1)}_\mathrm{max}\Lambda_{1 \shortto 2}}
+\sum^m_{j=2}\sqrt{d\frac{\Lambda^{(j)}_\mathrm{max}}{\Lambda^{(j-1)}_\mathrm{max}}}\Lambda_{1 \shortto 2}\right](\log(t \sqrt{d}\Lambda_{1 \shortto 2}/\epsilon))^{\mathcal{O}(m)} \right).
\end{align}

We now present an appropriate sequence of cut-offs $\Lambda^{(j)}_{\mathrm{max}}$, chosen so that all terms in the square brackets of~\cref{eq:sparse_cost_intermediateB} scale identically.
The largest may be chosen to be $\Lambda^{(m)}_{\mathrm{max}}=\Lambda_{1 \shortto 2}$, which follows from the inequality $\|H\|_{1 \shortto 2}=\max_{k}\sqrt{\sum_{i}|H_{ik}|^2}\ge \|H\|_\mathrm{max}$. The smallest may be chosen to be $\Lambda^{(1)}_{\mathrm{max}}=\Lambda_{1 \shortto 2}d^{-1/2+\gamma}$, for some $\gamma>0$. For any fixed value of $m$,  let us interpolate between these extremes with a fixed ratio
\begin{align}
\label{eq:ratio_choice}
\frac{\Lambda^{(j)}_{\mathrm{max}}}{\Lambda^{(j-1)}_{\mathrm{max}}}
=d^{\gamma} >1
\quad
\Rightarrow 
\quad 
\gamma=\frac{1}{2m}
\quad\Rightarrow
\quad
\Lambda^{(j)}_{\mathrm{max}}=\Lambda_{1 \shortto 2}d^{\frac{1}{2}\frac{j}{m}-\frac{1}{2}}
\quad\Rightarrow
\quad
\Lambda^{(j)}_1 = \sqrt{d} \Lambda_{1 \shortto 2}d^{-\frac{j-1}{2m}}.
\end{align}
Substituting this choice into~\cref{eq:sparse_cost_intermediateB}, the cost of simulation is
\begin{align}
\label{eq:sparse_cost_intermediate1}
C_\mathrm{queries}[H,t,\epsilon]&=\mathcal{O}\left(t
\sqrt{d}\Lambda_{1 \shortto 2}d^{\frac{1}{4m}}(\log(t\sqrt{d}\Lambda_{1 \shortto 2}/\epsilon))^{\mathcal{O}(m)} \right)
\\\nonumber
&=\mathcal{O}\left( t
\sqrt{d}\Lambda_{1 \shortto 2}e^{\mathcal{O}(m \log\log(t\sqrt{d}\Lambda_{1 \shortto 2}/\epsilon))+\frac{1}{4m}\log{(d)}}\right)
\\\nonumber
&=\mathcal{O}\left( t
\sqrt{d}\Lambda_{1 \shortto 2}e^{\mathcal{O}(\sqrt{\log{(d)}}\log\log(t\sqrt{d}\Lambda_{1 \shortto 2}/\epsilon))}\right)
\\\nonumber
&=\mathcal{O}\left(t
\sqrt{d}\Lambda_{1 \shortto 2}\left(\frac{t\sqrt{d}\Lambda_{1 \shortto 2}}{\epsilon}\right)^{o(1)}\right).
\end{align}
In the first line, we simplify using $m\log^{\mathcal{O}(m)}{\lr{\frac{t\sqrt{d}\Lambda_{1 \shortto 2}}{\epsilon}}}=\log^{\mathcal{O}(m)}{\lr{\frac{t\sqrt{d}\Lambda_{1 \shortto 2}}{\epsilon}}}$.
In the third line, we minimize cost with respect to $d$ by choosing 
$m=\mathcal{O}\left(\sqrt{\log{(d)}}\right)$. 
In the last line, the query complexity in~\cref{thm:HamSim_sparse} is proven by noting that the factor $e^{\mathcal{O}({\sqrt{\log{(d)}}\log{\log( t\sqrt{d}\Lambda_{1 \shortto 2}/\epsilon)}})}=\mathcal{O}\left(\left(\frac{t\sqrt{d}\Lambda_{1 \shortto 2}}{\epsilon}\right)^{o(1)}\right)$ has subpolynomial scaling with respect to $t,d,\Lambda_{1 \shortto 2}$ and $\epsilon$. The gate complexity is proven simply by multiplying the query complexity by the block-encoding overhead of $\mathcal{O}(\log{(N)}+b^k)$ gates per query.
\end{proof}

Simple modifications allow for minor improvements in query complexity. For instance, we used an upper bound $\Lambda_\mathrm{max}=\Lambda_{1 \shortto 2}$ in~\cref{eq:ratio_choice}. This enabled expressing complexity in terms of just the matrix norm $\Lambda_{1 \shortto 2}$. However, repeating our proof with both $\Lambda_\mathrm{max}$ and $\Lambda_{1 \shortto 2}$ free parameters leads to $C_\mathrm{queries}=\mathcal{O}\left( t
\sqrt{d}\Lambda_{1 \shortto 2}e^{\mathcal{O}(\sqrt{\log{(\sqrt{d}\Lambda_{\max}/\Lambda_{1 \shortto 2})}}\log\log(t\sqrt{d}\Lambda_{1 \shortto 2}/\epsilon))}\right)$. This implies polylogarithmic, instead of subpolynomial, scaling with error in the special case $\sqrt{d}\Lambda_{\max}=\mathcal{O}(\Lambda_{1 \shortto 2})$.

\section{Sparse Hamiltonian simulation lower bound}
\label{Sec:Lower_Bound}
We now prove a lower bound demonstrating that the scaling of our simulation algorithm with $t\sqrt{d}\|H\|_{1 \shortto 2}$ is optimal up to sub-polynomial factors. The argument is identical to the lower bound by~\cite{Low2017USA}. The only difference is that we quantify cost with the subordinate norm $\|H\|_{1 \shortto 2}$ instead of the induced one-norm $\|H\|_1$. This leads to a lower bound of $\Omega\left(t \sqrt{d}\|H\|_{1 \shortto 2}\right)$, which is a more general result. 
\HamSimLowerBound*
\begin{proof}
	We construct sparse Hamiltonians whose dynamics compute suitable Boolean functions with known quantum query lower bounds. The first Hamiltonian ${H}_{\op{PARITY}}$ computes the parity of $n$ bits, and the second Hamiltonian ${H}_{\op{OR}}$ computes the disjunction of $m$ bits. By composing these Hamiltonians, one may obtain a third Hamiltonian ${H}_{\op{PARITY}\circ\op{OR}}$ that computes the parity of $n$ bits, where each bit is the disjunction of $m$ bits, as depicted in~\cref{Fig:ParityOr}. The stated lower bound is then obtained by combining the known quantum query complexity of $\Omega(n\sqrt{m})$~\cite{Reichardt2009span} for computing $\op{PARITY}\circ\op{OR}$ on $n\times m$ bits~, with parameters that describe ${H}_{\op{PARITY}\circ\op{OR}}$.
\begin{figure} [t]
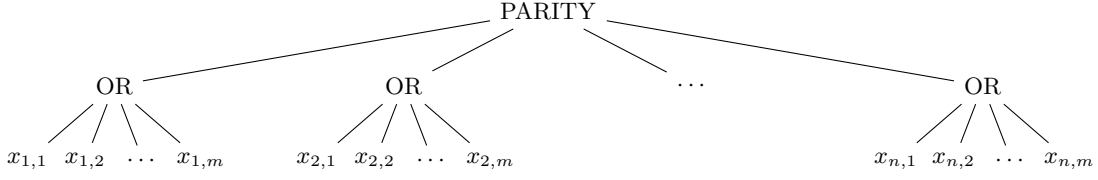

	\centering
	\tikz [font=\footnotesize,
	level 1/.style={sibling distance=10em},
	level 2/.style={sibling distance=2em}, level distance=1cm]
	\node {$\op{PARITY}$} 
	child { node {$\op{OR}$}
		child { node {$x_{1,1}$} }
		child { node {$x_{1,2}$} }
		child { node {$\cdots$} }
		child { node {$x_{1,m}$} }
	}
	child { node {$\op{OR}$}
		child { node {$x_{2,1}$} }
		child { node {$x_{2,2}$} }
		child { node {$\cdots$} }
		child { node {$x_{2,m}$} }
	}
	child { node {$\cdots$}
	}
	child { node {$\op{OR}$}
		child { node {$x_{n,1}$} }
		child { node {$x_{n,2}$} }
		child { node {$\cdots$} }
		child { node {$x_{n,m}$} }
	};
	\caption{\label{Fig:ParityOr}Computation of $\op{PARITY}\circ\op{OR}$ on $n\times m$ bits $x_{i,j}\in\{0,1\}$.}
\end{figure}	

The Hamiltonian ${H}_{\op{PARITY}}$ by~\cite{Berry2015Hamiltonian} is constructed from a simpler $(n+1)\times(n+1)$ Hamiltonian ${H}_{\mathrm{spin}}$ that acts on basis states $\{\ket{j}_s : j\in\{0,\cdots,n\}\}$ with non-zero matrix elements $\bra{j-1}_s{H}_{\mathrm{spin}}\ket{j}_s
=
\bra{j}_s{H}_{\mathrm{spin}}\ket{j-1}_s
=\sqrt{j(n-j+1)}/n$ for $j\in\{1,\cdots, N\}$. Thus
\begin{align}
{H}_{\mathrm{spin}}=\sum_{j\in\{1,\cdots N\}}\frac{\sqrt{j(n-j+1)}}{n}\ket{j-1}\bra{j}_{s} + \op{h.c.}.
\end{align}
The transitions generated by Hamiltonian may be represented by the graph in~\cref{Fig:H_parity_graph}a, where basis states are nodes, and non-matrix elements are edges between nodes.
with the useful property that time-evolution by ${H}_{\mathrm{spin}}$ for time $\frac{n\pi}{2}$ transfers the state $\ket{0}_s$ to $\ket{n}_s=e^{-i{H}_{\mathrm{spin}}n\pi/2}\ket{0}_s$, after passing through all intermediate nodes.
\begin{figure} [t]
	\centering
\begin{tikzpicture}
\node [set=label] (m0) at (-2,4) {a)};
\foreach \i in {0,...,9} {
	\node [set=label] (l\i) at (\i,3.7) {$\ket{\i}_s$};
	\node [circle, draw=black, fill=black, inner sep=2pt] (x\i) at (\i,3) {};
}
\graph{
	(x0) -- (x9);
};

\node [set=label] (m0) at (-2,2) {b)};
\foreach \i in {0,...,9} {
	\node [set=label] (l\i) at (\i,1.7) {$\ket{\i}_s$};
	\node [circle, draw=black, fill=black, inner sep=2pt] (x\i) at (\i,1) {};
	\node [circle, draw=black, fill=black, inner sep=2pt] (y\i) at (\i,0) {};
	\node [circle, draw=black, fill=black, inner sep=2pt] (z\i) at (\i,0) {};
}
\node [set=label] () at (-1,1) {$\ket{0}_{\mathrm{out}}$};
\node [set=label] () at (-1,0) {$\ket{1}_{\mathrm{out}}$};
\graph{
	(y0) -- (y1) -- (x2) -- (x3) -- (y4) -- (y5) -- (y6) -- (x7) -- (y8) -- (y9);
	(x0) -- (x1) -- (y2) -- (y3) -- (x4) -- (x5) -- (x6) -- (y7) -- (x8) -- (x9);
};
\node [set=label] (m0) at (-1,-0.6) {$x_j=$};
\foreach \i/\val in {0/0,1/1,2/0,3/1,4/0,5/0,6/1,7/1,8/0} {
	\node [set=label] (m0) at (\i+0.5,-0.5) {$\val$};
}
\end{tikzpicture}

\caption{\label{Fig:H_parity_graph}a) Graph representation of non-zero matrix elements of the Hamiltonian ${H}_{\mathrm{spin}}$ with $n=9$. Evolution under ${H}_{\mathrm{spin}}$ for time $n\pi/2$ transfers state $\ket{0}_s$ to $\ket{9}_s$. b) Graph representation of non-zero matrix elements of the Hamiltonian ${H}_{\mathrm{PARITY}}$ with $n=9$, obtained by composing ${H}_{\mathrm{spin}}$ element-wise with ${H}_{\mathrm{NOT},j}$, which depends on the $j^{\text{th}}$ value of the bit-sting $x\in\{0,1\}^n$. Evolution under ${H}_{\mathrm{PARITY}}$ for time $n\pi/2$ transfers state $\ket{0}_s\ket{z}_\mathrm{out}$ to $\ket{9}_s\ket{z\oplus_j x_j}_\mathrm{out}$.}
\end{figure}
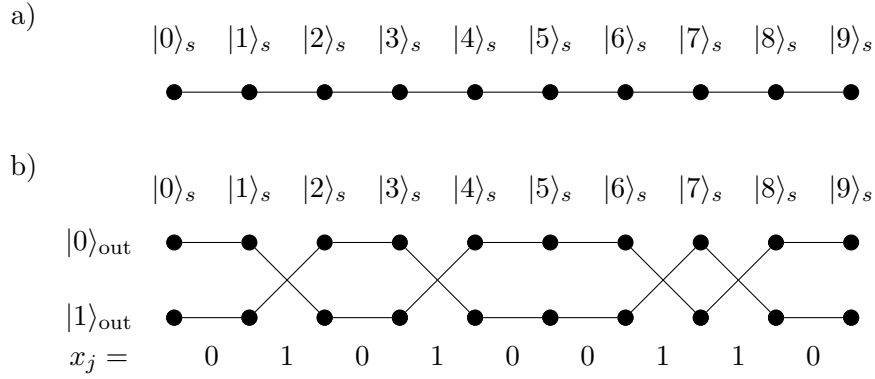

We now modify ${H}_{\mathrm{spin}}$ based on a bit-string $x\in\{0,1\}^{n}$. For each bit $x_j$, consider the $2\times 2$ Hamiltonian $H_{\op{NOT},j}$ that acts on basis states $\{\ket{k}_{\mathrm{out}}:k\in\{0,1\}$, with matrix elements defined by
\begin{align}
{H}_{\mathrm{NOT},j}=\left(
\begin{matrix}
x_j \oplus 1 & x_j \\
x_j & x_j \oplus 1
\end{matrix}
\right).
\end{align}
Observe that ${H}_{\mathrm{NOT},j}\ket{k}_\mathrm{out}=\ket{k\oplus x_j}_\mathrm{out}$. Composing ${H}_{\op{spin}}$ with ${H}_{\mathrm{NOT},j}$ defines the $(2n+2)\times (2n+2)$ ${H}_{\mathrm{PARITY}}$ in the following manner
\begin{align}
\label{Eq:Ham_Parity}
{H}_{\mathrm{PARITY}}=
\sum_{j\in\{1,\cdots N\}}\frac{\sqrt{j(n-j+1)}}{n}\ket{j-1}\bra{j}_{s}\otimes {H}_{\mathrm{NOT},j}  + \op{h.c.}.
\end{align}
This Hamiltonian is represented by the graph in~\cref{Fig:H_parity_graph}b. Note the two disjoint paths connecting states $\ket{0}_s$ and $\ket{n}_s$. As the path taken by an initial state $\ket{0}_s\ket{0}_\mathrm{out}$ depends on the parity of $x$, time-evolution by ${H}_{\mathrm{PARITY}}$ for time $\frac{n\pi}{2}$ transfers the state $\ket{0}_s\ket{0}_\mathrm{out}$ to $\ket{n}_s\ket{\bigoplus_j x_j}_\mathrm{out}$. This computes the parity of $x$, and the answer is obtained by measuring the `$\op{out}$' register.

The $2m\times 2m$ Hamiltonian ${H}_{\op{OR}}$ by~\cite{Low2017USA} computes the $\op{OR}$ of a bit-string $x\in\{0,1\}^m$, assuming that at most one bit is non-zero. In the basis $\{\ket{k}_{\mathrm{out}}\ket{l}_{o}:k\in\{0,1\},l\in\{1,\cdots,m\}\}$, its matrix elements are defined by
\begin{align}
\label{eq:Ham_OR}
{H}_{\mathrm{OR}}&=\left(
\begin{array}{c|c}
{C}_{ 1} & {C}_{0} \\
\hline
{C}^\dag_{0} & {C}_{1}
\end{array}
\right),\\\nonumber
{C}_{0}&=\left(
\begin{array}{cccc}
x_1 & x_2 & \cdots  &x_{m} \\
x_{m} & x_{1} & \cdots & x_{m-1}\\
x_{m-1} & x_{m} & \cdots  & x_{m-2}\\
\vdots & \vdots & \ddots & \vdots \\
x_{2} & x_{3} & \cdots & x_{1} 
\end{array}
\right), 
\quad
{C}_{1} = \frac{1}{m}\left(
\begin{array}{cccc}
1 & 1 & \cdots  &1 \\
1 & 1 & \cdots  &1 \\
\vdots & \vdots & \ddots  &\vdots \\
1 & 1 & \cdots  &1 
\end{array}
\right)
-\frac{{C}_{0}+{C}_{0}^\dag}{2}.
\end{align}
It is easy to verify that if at most one bit in $x$ is non-zero, ${H}_{\mathrm{OR}}\ket{k}_\mathrm{out}\ket{u}_o=\ket{k\oplus \mathrm{OR}(x)}_\mathrm{out}\ket{u}_o$, where $\ket{u}_o=\frac{1}{\sqrt{m}}\sum_{l\in\{1,\cdots,m\}}\ket{l}_o$ is a uniform superposition state. 

As the function $\op{PARITY}\circ\op{OR}$ we consider acts on $n\times m$ bits, let ${H}_{\op{OR},_j}$ be the Hamiltonians ${H}_{\mathrm{OR}}$ of~\cref{eq:Ham_OR}, except that the input bit-string $x$ is replaced by the $j^{\text{th}}$ set of $m$ bits $x_j=(x_{j,1},\cdots,x_{j,m})\in\{0,1\}^m$. By replacing each ${H}_{\mathrm{NOT},j}$ with ${H}_{\op{OR},_j}$, we obtain the $m(n+1)\times m(n+1)$ Hamiltonian
\begin{align}
\label{Eq:Ham_Parity_OR}
{H}_{\mathrm{PARITY}\circ \mathrm{OR}}=
\sum_{j\in\{1,\cdots N\}}\frac{\sqrt{j(n-j+1)}}{n}\ket{j-1}\bra{j}_{s}\otimes {H}_{\mathrm{OR},j}  + \op{h.c.}.
\end{align}
Time-evolution by ${H}_{\mathrm{PARITY}\circ \mathrm{OR}}$ for time $\frac{n\pi}{2}$ transforms the state
\begin{align}
e^{-i{H}_{\mathrm{PARITY}\circ \mathrm{OR}}n\pi/2} \ket{0}_s\ket{u}_o\ket{0}_\mathrm{out} = \ket{n}_s\ket{u}_o\ket{\oplus_j \mathrm{OR}(x_j)}_\mathrm{out},
\end{align}
thus measuring the `$\op{out}$' register returns the parity of $n$ bits, where each bit is the disjunction of $m$ bits.

As the desired lower bound requires us to independently vary over three parameters, we introduce one final modification. Let ${H}_{\mathrm{complete}}$ be a $s\times s$ Hamiltonian where all matrix elements are $1$ in the basis $\{\ket{i}_c:i\in\{1,\cdots,s\}\}$. We now take the tensor product of ${H}_{\mathrm{PARITY}\circ \mathrm{OR}}$ with ${H}_{\mathrm{complete}}$, the resulting $sm(n+1)\times sm(n+1)$ Hamiltonian
\begin{align}
{H}={H}_{\mathrm{PARITY}\circ \mathrm{OR}}\otimes {H}_{\mathrm{complete}}.
\end{align}
As the uniform superposition $\ket{u}_c=\frac{1}{\sqrt{s}}\sum_{i\in\{1,\cdots,s\}}\ket{i}_c$ state is an eigenstate of ${H}_{\mathrm{complete}}\ket{u}_c=s\ket{u}_c$ with eigenvalue $s$, time-evolution by ${H}$ with initial state $\ket{0}_s\ket{u}_o\ket{u}_c\ket{0}_\mathrm{out}$ performs the same computation as~\cref{Eq:Ham_Parity_OR}, but in shorter time $\frac{n\pi}{2s}$.

By varying the problem size through the number of bits $m,n$, and the dimension $s$, we may express the query lower bound $\Omega(n\sqrt{m})$ in terms of the evolution time $t$, and sparsity $d$ of $H$, and the subordinate norm $\|H\|_{1 \shortto 2}$. We note the following facts:
\begin{itemize}
	\item The max-norm $\|H\|_\mathrm{max}=\mathcal{O}(1)$.
	\item The evolution time $t=\Theta\left(\frac{n}{s}\right)$.
	\item The sparsity of $H$ is $d=\Theta\left(ms\right)$.
	\item The subordinate norm of $H$ is $\|H\|_{1 \shortto 2}=\max_{j}\sqrt{\sum_{i}|H_{ij}|^2}
	=\mathcal{O}\left(\sqrt{s}\right)
	$.
\end{itemize}
Substituting these parameters into the lower bound, we obtain the stated quantum query complexity for sparse Hamiltonian simulation of
\begin{align}
\Omega(n\sqrt{m})=\Omega(ts\sqrt{m})=\Omega(t\sqrt{d}\sqrt{s})=\Omega(t\sqrt{d}\|H\|_{1 \shortto 2}).
\end{align}
\end{proof}


\section{Application to black-box unitaries}
\label{sec:blackboxunitary}
Following~\cite{Jordan2009,Berry2012}, the black-box unitary problem reduces to an instance of Hamiltonian simulation, which we implement using the algorithm of~\cref{thm:HamSim_sparse}. The reduction is straightforward. 
\begin{proof}[Proof of~\cref{thm:BlackBoxUnitary}.]
For any $N\times N$ unitary $U$ operator that acts on basis states $\{\ket{j}_u:j\in\{0,\cdots,N-1\}\}$, let us define a $2N\times 2N$ Hamiltonian $H$ that acts on basis states $\{\ket{k}_h\ket{j}_u:k\in\{0,1\}:j\in\{0,\cdots,N-1\}\}$:
\begin{align}
\label{eq:BlackBoxUnitaryHamiltonian}
{H}&=\left(
\begin{array}{cc}
0 & U \\
U^\dag & 0
\end{array}
\right).
\end{align}
Using the fact $H^2 = \ii$, the time-evolution operator generated by $H$ has a simple form:
\begin{align}
e^{-iHt}=\cos{(t)}-i\sin{(t)}H
\quad\Rightarrow\quad
e^{-iH\pi/2}=-iH.
\end{align}
Thus we may apply $U$ to an arbitrary state $\ket{\psi}_u$, up to a global phase, by applying 
\begin{align}
e^{-iH\pi/2}\ket{1}_h\ket{\psi}_u=-i \ket{0}_hU\ket{\psi}_u.
\end{align}
The $-i\ket{0}_h$ state may be converted to $\ket{1}_h$ by a single Pauli $Y$ gate. Thus the cost of applying $U$ reduces to the cost of simulating the Hamiltonian $H$ for constant time $t=\pi/2$. 

	Given that matrix positions and values of a $d$-sparse $U$ are described by the sparse matrix oracles of~\cref{def:Sparse_Oracle}, $\mathcal{O}(1)$ queries suffice to synthesize black-box oracles that describe the $d$-sparse Hamiltonian $H$ of~\cref{eq:BlackBoxUnitaryHamiltonian}. As $U$ is unitary, this Hamiltonian has max-norm $\|H\|_\mathrm{max}=\|U\|_\mathrm{max}\le\Lambda_\mathrm{max}=1$, and subordinate norm $\|H\|_{1 \shortto 2}=\|U\|_{1 \shortto 2}\le\|U\|\le \Lambda_{1 \shortto 2}=1$. By substituting these parameters into~\cref{thm:HamSim_sparse}, we immediately obtain the query complexity of 
	\begin{align}
	\mathcal{O}(\sqrt{d}(d/\epsilon)^{o(1)}).
	\end{align}
	for approximating black-box unitaries.
\end{proof}

\section{Application to sparse systems of linear equations}
\label{sec:blackboxQLSP}
Following~\cite{Childs2015LinearSystems,Chakraborty2018BlockEncoding}, the cost of solving systems of linear equations depends primarily on the cost of Hamiltonian simulation, which we once again implement using the algorithm of~\cref{thm:HamSim_sparse}. As details of the reduction are quite involved, we only sketch the proof, and obtain~\cref{thm:BlackBoxQLSP} by invoking results in prior art. 

A system of linear equations is described by a matrix $A\in\mathbb{C}^{N\times N}$ typically characterized by spectral norm $\|A\|=1$ and condition number $\|A^{-1}\|\le\kappa$, and an input vector $\ket{b}\in\mathbb{C}^{N}$. This is solved by preparing a state proportional to $A^{-1}\ket{b}$. Without loss of generality, we may assume that $A$ is Hermitian~\cite{Harrow2009}. Further assuming that $A/\alpha$ is block-encoded by a unitary $U_A$, as in~\cref{Def:Standard_Form}, one may use linear-combination-of-unitaries~\cite{Childs2015LinearSystems} or quantum signal processing~\cite{Haah2018product} to block-encode $A^{-1}/\kappa$ in a unitary $U_{A^{-1}}$. Generally, $U_{A^{-1}}$ may be approximated to error $\epsilon$, meaning that $\|A^{-1}-\kappa (\bra{0}\otimes I)U_{A^{-1}}(\ket{0}\otimes I)\|\le\epsilon$, using $\mathcal{O}(\alpha \kappa \plog{(\alpha\kappa/\epsilon)})$ queries to $U_A$. 

In the basic approach, applying $U_{A^{-1}}\ket{0}\ket{b}\approx\ket{0}\frac{A^{-1}}{\kappa}\ket{b}+\cdots$ approximates the desired state, but with a worst-case success probability of $\mathcal{O}(\kappa^{-2})$. Thus $\mathcal{O}(\kappa)$ rounds of amplitude amplification are required to obtain the desired state with $\mathcal{O}(1)$ success probability. By multiplying these factors, this leads to an overall query complexity of $\mathcal{O}(\alpha \kappa^2 \plog{(\alpha\kappa/\epsilon)})$ to $U_A$ and the unitary oracle $O_b\ket{0}=\ket{b}$ that prepares the input state $\ket{b}$. However, this may be improved to $\mathcal{O}(\alpha \kappa \plog{(\alpha\kappa/\epsilon)})$ using a more sophisticated approach based on variable-time amplitude amplification as follows.
\begin{restatable}[Variable-time quantum linear systems algorithm; adapted from Lemma 27 of~\cite{Chakraborty2018BlockEncoding}]{lemma}{VTAQLSP}
	\label{thm:QLSP_Chakraborty}
	Let $\ket{b}\in\mathbb{C}^N$, $\kappa=\Omega(1)$, and $A\in\mathbb{C}^{N\times N}$ be a Hermitian matrix such that $\|A\|=1$ and $\|A^{-1}\|\le\kappa$. Suppose that $A/\alpha$ is block-encoded by a unitary $U_A$ with error $o(\epsilon/\operatorname{poly}(\kappa,\log{(1/\epsilon)}))$, and that $\ket{b}$ is prepared by a unitary oracle $O_b\ket{0}=\ket{b}$. Then there exists a quantum algorithm that outputs a state $\ket{\psi}$ such that $\left\|\ket{\psi}-\frac{A^{-1}\ket{b}}{\|A^{-1}\ket{b}\|}\right\|\le\epsilon$. The query complexity of this algorithm to $U_A$ and $O_b$ is
	\begin{align}
	 \mathcal{O}(\alpha \kappa \plog{(\alpha\kappa/\epsilon)}).
	\end{align}
\end{restatable}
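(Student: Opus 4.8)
The plan is to reduce the problem to a variable-time amplitude amplification (VTAA), exactly as in the proof of Lemma~27 of~\cite{Chakraborty2018BlockEncoding}. The starting point is the block-encoding of $A^{-1}/\kappa$ already described above the statement: using quantum signal processing one implements an odd polynomial $p(x)\approx \tfrac{1}{2\kappa x}$, normalized so that $|p|\le 1$ on the spectrum $[-1/\alpha,-1/(\alpha\kappa)]\cup[1/(\alpha\kappa),1/\alpha]$ of $A/\alpha$, which requires degree $\mathcal{O}(\alpha\kappa\log(\alpha\kappa/\epsilon))$ and hence $\mathcal{O}(\alpha\kappa\,\plog(\alpha\kappa/\epsilon))$ queries to $U_A$. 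Applying this block-encoding to $O_b\ket{0}=\ket{b}$ yields a flagged state whose ``good'' branch is $\propto A^{-1}\ket{b}$ with success amplitude $\Theta(\|A^{-1}\ket{b}\|/\kappa)=\Omega(1/\kappa)$, since $\|A^{-1}\ket{b}\|\ge\|b\|/\|A\|=1$. Naive amplitude amplification would then cost $\mathcal{O}(\kappa)$ rounds, each of cost $\mathcal{O}(\alpha\kappa\,\plog)$, giving the undesired $\mathcal{O}(\alpha\kappa^2\,\plog)$; the whole point is to replace this with VTAA to recover a single factor of $\kappa$.

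To set up VTAA, I would build a variable-stopping-time algorithm with $m=\lceil\log_2\kappa\rceil+\mathcal{O}(1)$ stages indexed by dyadic eigenvalue scales. Stage $j$ uses a gapped eigenvalue discrimination (gapped phase estimation, or equivalently a QSVT filter) to test whether $|\lambda|\ge 2^{-j}$, and on the eigencomponents newly resolved at scale $|\lambda|\in[2^{-j},2^{-j+1})$ it performs the corresponding gapped inversion, writing the outcome into a flag and halting that branch on a clock register. Inversion restricted to eigenvalues of magnitude $\ge 2^{-j}$ needs a polynomial of degree $\mathcal{O}(\alpha 2^{j}\log(\alpha\kappa/\epsilon))$, so the cumulative cost of reaching stage $j$ is $t_j=\mathcal{O}(\alpha 2^{j}\,\plog(\alpha\kappa/\epsilon))$ (a geometric sum dominated by its last term), and the maximum stopping time is $T_{\max}=t_m=\mathcal{O}(\alpha\kappa\,\plog(\alpha\kappa/\epsilon))$.

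The payoff is the stopping-time accounting. Writing $\ket{b}=\sum_k\beta_k\ket{u_k}$ in the eigenbasis $A\ket{u_k}=\lambda_k\ket{u_k}$ and letting $q_j=\sum_{k:\,|\lambda_k|\in[2^{-j},2^{-j+1})}|\beta_k|^2$ be the probability of halting at stage $j$, the $\ell_2$-averaged stopping time satisfies $T_2^2=\sum_j q_j t_j^2=\mathcal{O}\!\big(\alpha^2\,\plog^2\textstyle\sum_j q_j 4^{j}\big)$, while the final success probability is $p_{\mathrm{succ}}=\sum_k|\beta_k|^2\tfrac{\Theta(1)}{\kappa^2\lambda_k^2}=\Theta\!\big(\kappa^{-2}\sum_j q_j 4^{j}\big)$, because $1/\lambda_k^2=\Theta(4^{j})$ on scale $j$. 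The common weighted sum $\sum_j q_j 4^{j}$ cancels, giving $T_2/\sqrt{p_{\mathrm{succ}}}=\mathcal{O}(\alpha\kappa\,\plog(\alpha\kappa/\epsilon))$. Invoking Ambainis' VTAA theorem, which produces the normalized good state with cost $\mathcal{O}(T_{\max}\log T_{\max}+T_2/\sqrt{p_{\mathrm{succ}}})$ queries to the variable-time procedure (and hence to $U_A$ and $O_b$), both terms are $\mathcal{O}(\alpha\kappa\,\plog(\alpha\kappa/\epsilon))$, which is the claimed bound.

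I expect the main obstacle to be the careful bookkeeping that makes this cancellation rigorous: one must design the gapped filters so that each eigencomponent halts at its correct dyadic scale with overwhelming probability, control the leakage of a component across adjacent stages (so that $q_j$ is well-defined and the per-stage inversion amplitudes are $\Theta(1/(\kappa\lambda))$ as assumed), and verify that Ambainis' theorem applies to this approximately-flagged, non-orthogonal setting. The second obstacle is error propagation: the degree-$\mathcal{O}(\alpha\kappa)$ QSVT inversion, the $\mathcal{O}(\log\kappa)$ stages, and the $\mathcal{O}(\alpha\kappa)$-depth amplification each multiply the per-query block-encoding error, so the hypothesis $o(\epsilon/\operatorname{poly}(\kappa,\log(1/\epsilon)))$ on $U_A$ — together with setting each polynomial-approximation and discrimination error to the same order — is exactly what keeps the accumulated error below $\epsilon$ while inflating the query count only by polylogarithmic factors.
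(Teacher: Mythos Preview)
The paper does not actually prove this lemma; it is stated as a black-box result ``adapted from Lemma~27 of~\cite{Chakraborty2018BlockEncoding}'' and invoked without proof (the surrounding text explicitly says the reduction details are involved and defers to prior art). Your proposal goes further than the paper by sketching the variable-time amplitude amplification argument that underlies the cited reference, and the sketch is essentially correct: the dyadic eigenvalue filtering, the geometric stopping-time accounting with $T_2/\sqrt{p_{\mathrm{succ}}}=\mathcal{O}(\alpha\kappa\,\plog)$, and the invocation of Ambainis' VTAA theorem are exactly the mechanism in~\cite{Childs2015LinearSystems,Chakraborty2018BlockEncoding}. So there is no discrepancy to flag---you have simply unpacked a lemma that the paper treats as a citation.
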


\begin{proof}[Proof of~\cref{thm:BlackBoxQLSP}]
Now assuming that $A$ is $d$-sparse and described by the black-box oracles of~\cref{def:Sparse_Oracle}, our simulation algorithm~\cref{thm:HamSim_sparse} allows us to approximate the time-evolution operator $e^{-iA/2}$ with error $\delta$ using $\mathcal{O}(\sqrt{d}(d/\delta)^{o(1)})$ queries. Note that we have used the fact that $\|A\|=1$ to bound the subordinate norm input $\Lambda_{1 \shortto 2}=1$ required by the algorithm. By taking a matrix logarithm of this operator using Theorem 9 of~\cite{Low2017USA}, $A/\alpha$ may be block-encoded with $\alpha=\Theta(1)$ in a unitary $U_A$ without affecting the query complexity. Thus by invoking~\cref{thm:QLSP_Chakraborty} which requires $\delta=o(\epsilon/\operatorname{poly}(\kappa,\log{(1/\epsilon)}))$, an $\epsilon$-approximation of the state $\frac{A^{-1}\ket{b}}{\|A^{-1}\ket{b}}$ may be prepared. The query complexity is obtained by multiplication, and is
	\begin{align}
	\mathcal{O}(\sqrt{d}(d/\delta)^{o(1)})\times\mathcal{O}(\kappa \plog{(\kappa/\epsilon)})=\mathcal{O}(\kappa\sqrt{d}(\kappa d/\epsilon)^{o(1)}).
	\end{align}
\end{proof}

\section{Conclusion}
\label{sec:conclusion}
Our algorithm for sparse Hamiltonian simulation combines ideas from simulation in the interaction picture with uniform spectral amplification. In applications such solving linear systems of equations or implementing black-box unitaries, one often simulates a sparse Hamiltonian where parameters for the time $t$, sparsity $d$, and subordinate norm $\|H\|_{1 \shortto 2}$, naturally describe the problem. Given these, our algorithm scales like $\mathcal{O}\left(( t
\sqrt{d}\|H\|_{1 \shortto 2}(
t\sqrt{d}\|H\|_{1 \shortto 2}/\epsilon)^{o(1)}\right)$, which is optimal up to subpolynomial factors. Moreover, one is allowed to substitute these parameters for any weaker set of constraints, such as from a well-known sequence of tight norm inequalities for sparse Hamiltonians~\cite{Childs2010Limitation},
\begin{align}
\nonumber
\|H\|_\mathrm{max} \le \|H\|_{1 \shortto 2}  \le \|H\| \le \|\op{abs}(H)\|\le\|H\|_1\le\sqrt{d}\|H\|_{1 \shortto 2}\le\sqrt{d\|H\|_\mathrm{max}\|H\|_1}\le d\|H\|_\mathrm{max}.
\end{align}
Thus our algorithm generalizes, and in some cases strictly improves, prior art scaling with parameters $(d,\|H\|_\mathrm{max})$~\cite{Berry2015Hamiltonian,Low2016HamSim} or $(d,\|H\|_\mathrm{max},\|H\|_1)$~\cite{Low2017USA} or $(d,\|H\|)$~\cite{Berry2012}.

This greatly narrows the interval containing ultimate bounds on the complexity of sparse Hamiltonian simulation in the black-box setting. Known lower bounds forbid algorithms scaling like $\mathcal{O}(t\op{poly}(\|H\|))$ or even $\mathcal{O}(t\sqrt{d}\|H\|/\plog{(d)})$, so it would be interesting to pin this down, or improve the subpolynomial factors in our upper bound. Another useful direction would be to consider the simulation of structured Hamiltonians. For instance, some algorithms are highly successful at exploiting the geometric locality~\cite{Haah2018quantum} of certain Hamiltonians, or Hamiltonians with a large separation of energy scales~\cite{Low2018IntPicSim}. Within the black-box setting, the main challenge would to identify parameters that are sufficiently structured so as to enable a speedup, yet sufficiently general so as to describe problems of interest.

\textbf{Acknowledgements} -- We thank Dominic Berry, Andrew Childs, Robin Kothari, and Yuan Su for insightful discussions. In particular, we thank Nathan Wiebe for the idea of interaction picture simulation~\cite{Low2018IntPicSim}.


\appendix
\section{Hamiltonian simulation in the interaction picture}
\label{sec:IntPicSim}
Given a time-independent Hamiltonian $H=A+B$, the time-evolution of any quantum state is described by $\ket{\psi(t)}=e^{-iHt}\ket{\psi(0)}$. The time-evolution operator $e^{-iHt}$ is such that this state solves the Schr\"{o}dinger equation $i\partial_t\ket{\psi(t)}=H\ket{\psi(t)}$. An equivalent representation is the interaction picture, where we consider the time-evolution of a state $\ket{\psi_I(t)}=e^{iAt}\ket{\psi(t)}$. This state also evolves under the Schr\"{o}dinger equation, except that time-evolution is now generated by a time-dependent Hamiltonian $H_I(t)=e^{iAt}Be^{-iAt}$. This Hamiltonian is derived from the following sequence:
\begin{align}
\label{eq:time-dependent_schrodinger}
i\partial_t\ket{\psi_I(t)}=i\partial_t(e^{iAt}\ket{\psi(t)})=e^{iAt}(-A+i\partial_t)\ket{\psi(t)}=e^{iAt}B\ket{\psi(t)}=H_I(t)\ket{\psi_I(t)}.
\end{align}
The formal solution to~\cref{eq:time-dependent_schrodinger} is expressed by the time-ordered evolution operator $T(t_1,t_0)=\mathcal{T}\left[e^{-i\int_{t_0}^{t_1} H_I(s)\mathrm{d}s}\right]$, defined such that $\ket{\psi_I(t_1)}=T(t_1,t_0)\ket{\psi_I(t_0)}$ for any $t_1 \ge t_0$. Thus we may express the time-evolution operator as $e^{-iHt}=e^{-iAt}T(t,0)$. 

Our Hamiltonian simulation result~\cref{Thm:HamSimRecursion} is obtained by the recursive application of~\cref{thm:int_pic_sim}, which is a simulation algorithm by~\cite{Low2018IntPicSim}. This simulation algorithm consists of two steps: 1) the time-ordered evolution operator $T(t,0)$ is approximated with a time-dependent simulation algorithm; 2) $T(t,0)$ is combined with time-evolution by the $A$ component alone to obtain time-evolution by the full Hamiltonian $A+B$. This requires the following time-dependent simulation algorithm that was originally motivated by~\cite{Berry2015Truncated}, and then rigorously analyzed and optimized by~\cite{Low2018IntPicSim}.
\begin{lemma}[Time-dependent Hamiltonian simulation by a truncated Dyson series~\cite{Low2018IntPicSim}]
	\label{Thm:Compressed_TDS}
	Let $\tau>0$, and let $H(s): [0,\tau]\rightarrow \mathbb{C}^{N_s\times N_s}$ be a time-dependent Hamiltonian satisfying all the following:
	\begin{itemize}
		\item An upper bound on the spectral norm $\alpha \ge \max_{s}\|H(s)\|$ is known.
		\item An upper bound on the average time derivative $\alpha'\ge \frac{1}{\tau}\int^\tau_{0} \left\|\frac{\mathrm{d} H(s)}{ \mathrm{d} s}\right\| \mathrm{d}s$ is known.	
		\item There exists a unitary $U$ that block-encodes a block-diagonal Hamiltonian $H\in \mathbb{C}^{NM\times NM}$ with normalizing constant $\alpha$, where $M=\mathcal{O}\left( \frac{\tau^2}{\epsilon}\left(\frac{\alpha'}{\alpha}  +1\right)\right)$, and the $j^\text{th}$ block is $H(j\tau /M)$. Thus
		\begin{align}
		\label{eq:time_dep_block_encoding}
		(\bra{0}_a\otimes \ii_s)U(\ket{0}_a\otimes \ii_s) = \sum^{M-1}_{j=0}\ketbra{j}{j}_d\otimes\frac{H(j\tau/M)}{\alpha},
		\quad
		\ket{0}_a \in \mathbb{C}^{N_a},
		\quad
		\ket{j}_d \in \mathbb{C}^M.
		\end{align}
	\end{itemize}
	Then for all $\tau \le \frac{1}{2\alpha}$, the time-ordered evolution operator $\mathcal{T}\left[e^{-i\int_{t_0}^{t_1} H_I(s)\mathrm{d}s}\right]$ may be approximated to error $\epsilon$ using
	\begin{itemize}
		\item Queries to $U$: $\mathcal{O}\left(\frac{\log{(1/\epsilon)}}{\log\log{(1 /\epsilon)}}\right)$.
		\item Quantum gates: $\mathcal{O}\left((\log{(N_a)} + \log{(M)})\frac{\log{(1/\epsilon)}}{\log\log{(1/\epsilon)}}\right)$.
		\item Qubits: $\mathcal{O}(\log{(N_s)} + \log{(N_a)}+\log{(M)})$.
	\end{itemize}
\end{lemma}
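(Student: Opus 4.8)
The plan is to follow the truncated-Dyson-series strategy of~\cite{Low2018IntPicSim,Berry2015Truncated}. First I would expand the time-ordered exponential as a Dyson series, $\mathcal{T}[e^{-i\int_0^\tau H_I(s)\,ds}]=\sum_{k\ge 0}(-i)^k\int_{0\le s_1\le\cdots\le s_k\le\tau}H_I(s_k)\cdots H_I(s_1)\,ds_1\cdots ds_k$, and truncate it at order $K$. Since $\|H_I(s)\|\le\alpha$ and $\tau\le\frac{1}{2\alpha}$, the norm of the $k$-th term is at most $(\alpha\tau)^k/k!\le 2^{-k}/k!$, so the tail past order $K$ is $\mathcal{O}((\alpha\tau)^{K+1}/(K+1)!)$; taking $K=\Theta(\log(1/\epsilon)/\log\log(1/\epsilon))$ makes this $\mathcal{O}(\epsilon)$, which is already what fixes the query count.

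Next I would discretize. Replacing each $\int_0^\tau(\cdot)\,ds_j$ by the ordered Riemann sum $\frac{\tau}{M}\sum_{l}(\cdot)$ over the grid $\{l\tau/M\}$ introduces an error whose size is controlled by the total variation of the integrand, hence by $\int_0^\tau\|\mathrm{d}H_I(s)/\mathrm{d}s\|\,\mathrm{d}s\le\alpha'\tau$ together with at most $K$ further factors of $\alpha$; summed over $k\le K$ and using $\alpha\tau\le\frac12$, this error is $\mathcal{O}(\epsilon)$ precisely for $M=\mathcal{O}(\frac{\tau^2}{\epsilon}(\frac{\alpha'}{\alpha}+1))$, as in the hypothesis. (The simplex structure of the integration region is what keeps the dependence on $K$ polynomial rather than exponential.) Crucially, the $M$-point clock register is exactly the register whose basis states $\ket{l}_d$ select the block $H(l\tau/M)/\alpha$ of $U$ in~\cref{eq:time_dep_block_encoding}, so a single controlled application of $U$ with the clock holding $l$ plays the role of ``apply $H_I$ at grid time $l$''.

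The third step is to realize the resulting finite linear combination of ordered products of at most $K$ copies of $U$ as a block-encoding. I would use the linear-combination-of-unitaries construction: prepare an ancilla in a superposition over $k\in\{0,\dots,K\}$ with amplitudes $\propto\sqrt{(\alpha\tau)^k/k!}$, and in branch $k$ apply $k$ controlled copies of $U$ interleaved with coherent increments of the clock register, imposing the ordering $s_1\le\cdots\le s_k$ through the compression gadget of~\cite{Low2018IntPicSim} so that only $\mathcal{O}(K)$ queries to $U$ are spent instead of $\mathcal{O}(KM)$. This block-encodes an operator $\mathcal{O}(\epsilon)$-close to the target up to a subnormalization $\beta=\sum_{k=0}^{K}(\alpha\tau)^k/k!\le e^{1/2}<2$; since the target is unitary and $\beta=\Theta(1)$, one round of (robust) oblivious amplitude amplification, following~\cite{Berry2015Truncated}, removes the subnormalization while multiplying all counts only by a constant. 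Tallying: $\mathcal{O}(K)=\mathcal{O}(\log(1/\epsilon)/\log\log(1/\epsilon))$ queries to $U$; $\mathcal{O}(K)$ controlled-$U$'s plus the clock arithmetic and the LCU state preparation over $\mathcal{O}(K)$ values, i.e.\ $\mathcal{O}((\log N_a+\log M)\log(1/\epsilon)/\log\log(1/\epsilon))$ additional two-qubit gates; and $\mathcal{O}(\log N_s+\log N_a+\log M)$ qubits for the system, the block-encoding ancilla, and the clock.

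I expect the main obstacle to be the interplay between the discretization bound and the compression gadget. The discretization estimate must exploit the geometry of the time-ordered simplex so that the accumulated error stays polynomial in $K$ and is expressed through the \emph{average} derivative $\alpha'$ rather than a pointwise bound; done naively one picks up factors exponential in $K$. Simultaneously, imposing the ordering $s_1\le\cdots\le s_k$ on the shared grid while keeping the query count at $\mathcal{O}(K)$---as opposed to the $\mathcal{O}(MK)$ that an explicit sum over ordered tuples of grid points would cost---is exactly what the compression gadget accomplishes, and verifying that it reproduces the intended ordered sum (and correctly handles coincident grid times) is the delicate step. By comparison, the Dyson truncation bound and the final amplitude amplification are routine.
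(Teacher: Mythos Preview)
The paper does not actually prove this lemma: it is quoted verbatim as a black box from~\cite{Low2018IntPicSim} and then invoked inside the proof of \cref{thm:int_pic_sim}. So there is no ``paper's own proof'' to compare against here.

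That said, your outline is correct and is essentially the argument of~\cite{Low2018IntPicSim}: truncate the Dyson series at order $K=\Theta(\log(1/\epsilon)/\log\log(1/\epsilon))$ using $\alpha\tau\le 1/2$, discretize each time integral on the $M$-point grid with error governed by the averaged derivative bound $\alpha'$, implement the resulting ordered sum via LCU with the compression gadget so that the query count is $\mathcal{O}(K)$ rather than $\mathcal{O}(KM)$, and finish with robust oblivious amplitude amplification since the subnormalization $\sum_{k\le K}(\alpha\tau)^k/k!$ is $\Theta(1)$. Your identification of the two delicate points---keeping the discretization error polynomial in $K$ via the simplex geometry and expressed through the \emph{average} $\alpha'$, and verifying that the compression gadget correctly reproduces the time-ordered sum (including coincident grid times)---matches exactly where the work lies in~\cite{Low2018IntPicSim}.
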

Applying~\cref{Thm:Compressed_TDS} to the time-dependent Hamiltonian $H_I(t)$ leads to~\cref{thm:int_pic_sim}. For completeness, we restate this result and sketch the proof. 
\HamSimIntPic*
\begin{proof}
	Let us approximate the time-evolution operator $e^{-iH
		\tau}=e^{-iA\tau}T(\tau,0)$ for a short time-step $\tau=\mathcal{O}(\alpha_B^{-1})$. As $T(\tau,0)$ is generated by the time-dependent interaction-picture Hamiltonian $H_I(s)=e^{iAs}Be^{-iAs}$, we apply the time-dependent simulation algorithm~\cref{Thm:Compressed_TDS} to approximate $T(\tau,0)$ to error $\epsilon$. 
	
	The inputs required by~\cref{Thm:Compressed_TDS} are: 1) the spectral norm is $\max_s\|H_I(s)\|\le \alpha_B=\alpha$; 2) the time-derivative spectral norm is $\frac{1}{\tau}\int^\tau_{0} \left\|\frac{\mathrm{d} H_I(s)}{ \mathrm{d} s}\right\| \mathrm{d}s = \|[A,B]\|\le 2\alpha_A\alpha_B = \alpha'$; 3) there exists a block-encoding $U$ of $H_I(s)$, evaluated at times $s=j \tau/M$, where 
	$	M
	=\mathcal{O}\left( \frac{\tau^2}{\epsilon}\left(\frac{\alpha'}{\alpha}  +1\right)\right)
	=\mathcal{O}\left( \frac{\tau^2\alpha_A}{\epsilon}\right)
	$. 
	One possible implementation  of $U$ is
	\begin{align}
	U =(R^\dag\otimes I_A)\cdot(I_d \otimes U_B)\cdot(R\otimes I_A),
	\quad R=\left(\sum^{M-1}_{j=0}\ket{j}\bra{j}_d \otimes  e^{-iAj\tau /M}\right),
	\end{align}
	which uses one query to the block-encoding $U_B$ of $B$, and two applications each of controlled-$e^{iA2^k\tau/M}$ for positive integers $k\le\lceil\log_2{(M)}\rceil$. Thus the cost $C[U]$ of approximating $U$ to error $\epsilon$ is 
	\begin{align}
	C_\mathrm{queries}[U]&=\mathcal{O}(C_B + \log{(M)}C_\mathrm{queries}[A,\tau,\epsilon/\lceil\log_2{(M)}\rceil])=\mathcal{O}(C_B+\log{(M)}C_\mathrm{queries}[A,\tau,\epsilon]),
	\\\nonumber
	C_\mathrm{gates}[U]&=\mathcal{O}(\log{(M)}C_\mathrm{gates}[A,\tau,\epsilon/\lceil\log_2{(M)}\rceil])=\mathcal{O}(\log{(M)}C_\mathrm{gates}[A,\tau,\epsilon]),
	\end{align}
	where we have simplified the error dependence by applying the fact $\mathcal{O}(\log{(x\log{(x)})})=\mathcal{O}(\log{(x)})$ to
	$\log\left(\tau\alpha_A\lceil\log_2{(M)}\rceil/\epsilon\right)=\mathcal{O}\left(\log\left(M \log{(M)}\right)\right)=\mathcal{O}\left(\log\left(M\right)\right)=\mathcal{O}\lr{\log\left(\tau\alpha_A/\epsilon)\right)}$.

	Combining these inputs with the stated cost of time-dependent simulation in~\cref{Thm:Compressed_TDS} gives the cost $C[B,A,\tau,\epsilon]$ of approximating the time-evolution operator $e^{-iH\tau}$ to error $\epsilon=\epsilon_1+\frac{\log{(1/\epsilon_1)}}{\log\log{(1 /\epsilon_1)}}\epsilon_2+\epsilon_3$, where $\epsilon_1$ is the error contribution from the time-dependent simulation algorithm, $\epsilon_2$ is the error contribution from $R$, and $\epsilon_3$ is the error contribution from $e^{-iA\tau}$. The overall error is controlled by choosing $\epsilon_2=\mathcal{O}\lr{\epsilon_1/\frac{\log{(1/\epsilon_1)}}{\log\log{(1 /\epsilon_1)}}}$, $\epsilon_3=\mathcal{O}(\epsilon_1)$, and $\epsilon_1=\mathcal{O}(\epsilon)$ -- note that $C[A,t,\epsilon_2]=\mathcal{O}(C[A,t,\epsilon])$ as $\log{\lr{\frac{1}{\epsilon}\frac{\log{(1/\epsilon)}}{\log\log{(1 /\epsilon)}}}}=\mathcal{O}(\log{(1/\epsilon)})$. Thus the query complexity is	
	\begin{align}
	\label{eq:int_pic_sim_cost_tau}
	C_\mathrm{queries}[B,A,\tau,\epsilon]&=\mathcal{O}\left(\left(C_\mathrm{queries}[U]+C_\mathrm{queries}[A,\tau,\epsilon]\right)\frac{\log{(1/\epsilon)}}{\log\log{(1 /\epsilon)}}\right)
	\\\nonumber
	&=\mathcal{O}\left(\left(C_B+\log{(M)}C_\mathrm{queries}[A,\tau,\epsilon]\right)\frac{\log{(1/\epsilon)}}{\log\log{(1 /\epsilon)}}\right)
	\\\nonumber
	&=\mathcal{O}\left(\left(C_B+C_\mathrm{queries}[A,\tau,\epsilon] \log{\lr{\frac{\tau\alpha_A}{\epsilon}}}\right)\frac{\log{(1/\epsilon)}}{\log\log{(1 /\epsilon)}}\right),
	\end{align}
	and the gate complexity follows from a very similar derivation as follows.
	\begin{align}
	C_\mathrm{gates}[B,A,\tau,\epsilon]&=\mathcal{O}\left(\left(\log{(N_a)} + C_\mathrm{gates}[U]+C_\mathrm{gates}[A,\tau,\epsilon]+\log{(M)}\right)\frac{\log{(1/\epsilon)}}{\log\log{(1/\epsilon)}}\right)
	\\\nonumber
	&=\mathcal{O}\left((\tau\alpha_A+1)  \log^{\gamma+1}{\lr{\frac{\tau\alpha_A}{\epsilon}}}\log{(N_a)}\frac{\log{(1/\epsilon)}}{\log\log{(1/\epsilon)}}\right).
	\\\nonumber
	&=\mathcal{O}\left(C_\mathrm{gates}[A,\tau,\epsilon] \log{\lr{\frac{\tau\alpha_A}{\epsilon}}}\frac{\log{(1/\epsilon)}}{\log\log{(1/\epsilon)}}\right).
	\end{align}

	Evolution for longer times $t$ is achieved by $L=\mathcal{O}(t/\tau)$ applications of $e^{-iH	\tau}=e^{-iA\tau}T(\tau,0)$, each with error $\epsilon \rightarrow \epsilon\tau/t$. As~\cref{eq:int_pic_sim_cost_tau} holds for all $\tau=\mathcal{O}(\alpha_B^{-1})$, the simulation algorithm~\cref{Thm:Compressed_TDS} is always applied at least once. Thus the cost of time-evolution for all $t\alpha_B>0$ has query complexity
		\begin{align}
	\label{eq:int_pic_sim_cost}
	C_\mathrm{queries}[B,A,t,\epsilon]&=\mathcal{O}\left((t\alpha_B+1)C_\mathrm{queries}\left[B,A,\tau,\frac{\epsilon\tau}{t}\right]\right)
	\\\nonumber
	&=\mathcal{O}\left((t\alpha_B+1)
	\left(C_B+C_\mathrm{queries}\left[A,\tau,\frac{\epsilon\tau}{t}\right]\log{\lr{\frac{t\alpha_A}{\epsilon}}}\right)\frac{\log{(t/(\tau\epsilon))}}{\log\log{(t/(\tau\epsilon))}}
	\right)
	\\\nonumber
	&=\mathcal{O}\left((t\alpha_B+1)
	\left(C_B+C_\mathrm{queries}\left[A,\frac{1}{\alpha_B},\frac{\epsilon}{t\alpha_B}\right]\log{\lr{\frac{t\alpha_A}{\epsilon}}}\right)\log{\lr{\frac{t\alpha_B}{\epsilon}}}
	\right)
	\\\nonumber
	&=\mathcal{O}\left((t\alpha_B+1)
\left(C_B+C_\mathrm{queries}\left[A,\frac{1}{\alpha_B},\frac{\epsilon}{t\alpha_B}\right]\right)\log^2{\lr{\frac{t\alpha_A}{\epsilon}}}
\right),
\end{align}
where $\alpha_A\ge\alpha_B$ is used in the last line. Similarly, the gate complexity is
\begin{align}
	C_\mathrm{gates}[B,A,t,\epsilon]&=\mathcal{O}\left((t\alpha_B+1)C_\mathrm{gates}\left[B,A,\tau,\frac{\epsilon\tau}{t}\right]\right)
\\\nonumber
&=\mathcal{O}\left((t\alpha_B+1)
C_\mathrm{gates}\left[A,\tau,\frac{\epsilon\tau}{t}\right]\log{\lr{\frac{t\alpha_A}{\epsilon}}}\frac{\log{(t/(\tau\epsilon))}}{\log\log{(t/(\tau\epsilon))}}
\right)
\\\nonumber
&=\mathcal{O}\left((t\alpha_B+1)
C_\mathrm{gates}\left[A,\frac{1}{\alpha_B},\frac{\epsilon}{t\alpha_B}\right]\log^2{\lr{\frac{t\alpha_A}{\epsilon}}}
\right).
	\end{align}

\end{proof}

\section{Sparse matrix block-encoding by amplitude multiplication}
\label{sec:USA}
The cost of block-encoding a sparse Hamiltonian, given oracles $O_F$ and $O_H$ in~\cref{def:Sparse_Oracle} that compute its non-zero matrix positions and values to $b$ bits of precision, is given by~\cref{thm:sparse-Ham-block-encoding}. This is very similar to a procedure by~\cite{Low2017USA} that requires the coherent computation of inverse trigonometric functions and scales with $\mathcal{O}(\op{poly}(b))$. For completeness, we provide a proof outline of the original procedure with complexity given by~\cref{thm:sparse-Ham-block-encoding_trigonometric}. Subsequently, in~\cref{sec:arithmetic-free}, we prove~\cref{thm:sparse-Ham-block-encoding} which has linear scaling in $\mathcal{O}(b)$ by replacing computing trigonometric functions with just a quantum circuit for reversible addition.
\sparseoracles*
\begin{restatable}[Block encoding of sparse Hamiltonians by amplitude multiplication and quantum trigonometry~\cite{Low2017USA}]{lemma}{USAblockencodingTrig}
	\label{thm:sparse-Ham-block-encoding_trigonometric}
	Let	$H\in \mathbb{C}^{N\times N}$ be a $d$-sparse Hamiltonian satisfying all the following:
	\begin{itemize}
		\item There exist oracles $O_F$ and $O_H$ of~\cref{def:Sparse_Oracle} that compute the positions and values of non-zero matrix elements to $b$ bits of precision.
		\item An upper bound on the max-norm $\Lambda_{\mathrm{max}}\ge \|H\|_\mathrm{max} = \max_{ik}|H_{ik}|$ is known.
		\item An upper bound on the spectral-norm $\Lambda\ge \|H\| = \max_{v\neq 0}\frac{\|H\cdot v\|}{\|v\|}$ is known. 
		\item An upper bound on the induced one-norm $\Lambda_{1}\ge \|H\|_{1} = \max_{k}\sum_i|H_{ik}|$ is known.
	\end{itemize}
	Then there exists a Hamiltonian $\tilde H \in \mathbb{C}^{N\times N}$ that approximates $H$ with error $\|\tilde H- H\|=\mathcal{O}(\Lambda\delta)$, and can be block-encoded with normalizing constant $\alpha=\Theta(\Lambda_{1})$ using
	\begin{itemize}
		\item Queries $O_F$ and $O_H$: $\mathcal{O}\left(\sqrt{\frac{d\Lambda_{\mathrm{max}}}{\Lambda_{1}}}\log{\lr{\frac{1}{\delta}}}\right)$.
		\item Quantum gates: $\mathcal{O}\left(\sqrt{\frac{d\Lambda_{\mathrm{max}}}{\Lambda_{1}}}\log{\lr{\frac{1}{\delta}}}\left(\log{(N)}+b^k\right)\right)$, where $k=5/2$.
		\item Qubits: $\mathcal{O}(\log{(N)}+b)$.
	\end{itemize}
\end{restatable}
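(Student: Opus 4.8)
The plan is to reproduce the uniform spectral amplification construction of~\cite{Low2017USA} in two stages: first assemble a coarse block-encoding of $H$ with normalizing constant $d\Lambda_{\mathrm{max}}$ using only $\mathcal{O}(1)$ queries to the oracles $O_F,O_H$ of~\cref{def:Sparse_Oracle}, then amplify it down to normalizing constant $\Theta(\Lambda_1)$ using $\mathcal{O}\bigl(\sqrt{d\Lambda_{\mathrm{max}}/\Lambda_1}\log(1/\delta)\bigr)$ further queries. The max-norm bound $\Lambda_{\mathrm{max}}$ and one-norm bound $\Lambda_1$ fix the two normalizing constants; the spectral-norm bound $\Lambda$ plays no role in the construction and enters only at the very end, to sharpen the statement of the approximation error.

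\textbf{Stage 1 (coarse quantum walk).} Following~\cite{Childs2010,Berry2012}, I would double the Hilbert space and append two control qubits so that, for each column index $k$, there is a state-preparation unitary $V$ using $\mathcal{O}(1)$ oracle queries with
\begin{align}
V\ket{0}_a\ket{k}_s=\ket{\chi_k}=\frac{1}{\sqrt{d}}\sum_{l\in[d]}\ket{l}\ket{f(k,l)}\otimes\Bigl(e^{i\theta_{k,l}}\sqrt{|H_{k,f(k,l)}|/\Lambda_{\mathrm{max}}}\,\ket{0}+\sqrt{1-|H_{k,f(k,l)}|/\Lambda_{\mathrm{max}}}\,\ket{1}\Bigr),
\end{align}
together with a symmetric family $\ket{\bar{\chi}_i}$, the phases and the row/column conventions being fixed via Hermiticity so that $\braket{\bar{\chi}_i}{\chi_k}=H_{ik}/(d\Lambda_{\mathrm{max}})$. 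Building $V$ takes one query to $O_F$ for the indices $f(k,l)$, one query to $O_H$ for the $b$-bit values, a controlled single-qubit rotation by the angle $\arcsin\sqrt{|H_{ik}|/\Lambda_{\mathrm{max}}}$, and a phase gate; the angle is produced by reversible quantum arithmetic (absolute value, square root, then an arctangent-style iteration) written into an $\mathcal{O}(b)$-bit ancilla that is afterwards uncomputed. The two control qubits render the states within each family mutually orthogonal, so the Szegedy walk operator $W$ assembled from $V$ block-encodes $H/(d\Lambda_{\mathrm{max}})$. The inverse-trigonometric arithmetic dominates the gate count at $\mathcal{O}(b^{5/2})$ per query (and is precisely what the arithmetic-free variant of~\cref{sec:arithmetic-free} later eliminates); the ancilla dimension $N_a$ is polynomial in $N$ and $2^b$, giving $\mathcal{O}(\log N+b)$ qubits.

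\textbf{Stage 2 (amplitude multiplication).} The subspace flagged by $\ket{0}$ in the last register of $\ket{\chi_k}$ carries amplitude $a_k=\sqrt{\|He_k\|_1/(d\Lambda_{\mathrm{max}})}$, which varies with $k$ but obeys $a_k\le\sqrt{\Lambda_1/(d\Lambda_{\mathrm{max}})}=1/\gamma$ with $\gamma=\Theta\bigl(\sqrt{d\Lambda_{\mathrm{max}}/\Lambda_1}\bigr)$, using $\|He_k\|_1\le\|H\|_1\le\Lambda_1$. I would then invoke the uniform amplitude-multiplication primitive of~\cite{Low2017USA}: a quantum-signal-processing sequence of $\mathcal{O}(\gamma\log(1/\delta))$ interleaved applications of $V$, $V^\dagger$, and the reflections about $\ket{0}_a$ and the flag subspace --- each step costing $\mathcal{O}(1)$ queries --- realizing an odd polynomial that approximates the linear map $a\mapsto\gamma a$ to relative error $\delta$ on $[0,1/\gamma]$ while remaining bounded by $1$ on $[-1,1]$. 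Since the state preparation enters the relevant matrix element bilinearly (once for the column family, once for the row family), amplifying every good amplitude by $\gamma$ rescales each overlap to $\gamma^2 H_{ik}/(d\Lambda_{\mathrm{max}})=H_{ik}/\Lambda_1$, and the analysis of~\cite{Low2017USA} keeps the relative error under control in operator norm: the amplified walk block-encodes some Hermitian $\tilde H$ with normalizing constant $\alpha=\Theta(\Lambda_1)$ and $\|\tilde H-H\|=\mathcal{O}\bigl(\delta\|H\|\Lambda_1/(d\Lambda_{\mathrm{max}})\bigr)=\mathcal{O}(\Lambda\delta)$, the final bound using $\|H\|\le\Lambda$ and $\Lambda_1\le d\Lambda_{\mathrm{max}}$. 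Multiplying the $\mathcal{O}(1)$ queries of $V$ by the $\mathcal{O}(\gamma\log(1/\delta))$ rounds yields the claimed query complexity, a further $\mathcal{O}(\log N+b^{5/2})$ gates per query, and no change in the qubit count.

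The main obstacle is Stage 2. Plain amplitude amplification would overshoot the \emph{continuum} of amplitudes $\{a_k\}$ and would control only absolute error, so one genuinely needs the linearized, uniform variant that approximates the \emph{linear} map on the whole interval $[0,1/\gamma]$ with a relative-error guarantee --- the technical heart of~\cite{Low2017USA} --- and one must then certify that the amplified walk is a bona fide block-encoding of a Hermitian operator $\tilde H$ that is $\mathcal{O}(\Lambda\delta)$-close to $H$ in operator norm, not merely entrywise. The quadratic bookkeeping --- amplifying the state-preparation amplitude by $\gamma$ improving the normalizing constant by $\gamma^2$ --- is exactly what buys the square-root speedup over a naive $\mathcal{O}(d\Lambda_{\mathrm{max}}/\Lambda_1)$-query amplification, and the spectral-norm bound $\Lambda$ is precisely what lets one replace the crude error $\mathcal{O}(\Lambda_1\delta)$ by the sharper $\mathcal{O}(\Lambda\delta)$. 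The remaining ingredients --- intra-family orthogonality from the two control qubits, correctness of the reversible arithmetic for $\arcsin\sqrt{\cdot}$, and the $\mathcal{O}(1)$ extra ancillas --- are routine.
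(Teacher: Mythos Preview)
Your proposal is correct and follows essentially the same route as the paper: build the Szegedy/Childs state-preparation unitaries $U_{\rm col},U_{\rm row}$ so that $\braket{\bar\chi_j}{\chi_k}=H_{jk}/(d\Lambda_{\max})$ using $\mathcal{O}(1)$ oracle queries and $\mathcal{O}(b^{5/2})$ gates for the inverse-trigonometric arithmetic, then apply the amplitude-multiplication primitive of~\cite{Low2017USA} with gain $C=\Theta(\sqrt{d\Lambda_{\max}/\Lambda_1})$ to drop the normalizing constant to $\Theta(\Lambda_1)$. The only minor slip is your intermediate error estimate $\mathcal{O}(\delta\|H\|\Lambda_1/(d\Lambda_{\max}))$: the paper's analysis treats the amplitude-multiplication error as a diagonal multiplicative perturbation $\tilde H=(I+\hat\delta)H(I+\hat\delta)$ and obtains $\|\tilde H-H\|\le\|H\|(2\delta+\delta^2)=\mathcal{O}(\Lambda\delta)$ directly, without the extra factor $\Lambda_1/(d\Lambda_{\max})$ --- but since you only use the weaker bound $\mathcal{O}(\Lambda\delta)$ anyway, this does not affect the result.
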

\begin{proof}[Proof of~\cref{thm:sparse-Ham-block-encoding_trigonometric}]
Let us construct two sets of mutually orthogonal quantum states $\{\ket{\chi_k}\}, \{\ket{\bar{\chi}_j}\}$ indexed by $j,k\in\{0,\cdots,N-1\}$, such that their overlap $\braket{\bar{\chi}_{j}}{\chi_{k}} = H_{jk} / \alpha$ is equal to the value of $H$ in the $j^{\text{th}}$ row and $k^{\text{th}}$ column, up to a normalizing constant $\alpha$. Then $H$ is block-encoded by the product of controlled-state preparation unitary $U = U_{\rm row}^\dag U_{\rm col}$, where
\begin{align}
U_{\rm col} \ket{k}_{s}\ket{0}_{a}=\ket{\chi_{k}}, \quad U_{\rm row} \ket{j}_{s}\ket{0}_{a}=\ket{\bar\chi_{j}}\quad 
\Rightarrow
\quad
(\bra{0}_a\otimes \ii_s)U(\ket{0}_a\otimes \ii_s) = \frac{H}{\alpha}.
\end{align}

One may verify that the following definition of these states achieves $\alpha=d\Lambda_\mathrm{max}$:
\begin{align}
\label{eq:sparse_states}
\ket{\chi_{k}}&= \ket{k}_s \frac{1}{\sqrt{d}} \sum_{p\in r_k}\ket{p}_{a_1} \left(\sqrt{\frac{H_{pk}}{\Lambda_{\max}}}\ket{0}_{a_2} + \sqrt{1-\frac{|H_{pk}|}{\Lambda_{\max}}}\ket{1}_{a_2} \right)\ket{0}_{a_3},
\\\nonumber
\bra{\bar{\chi}_{j}}&= \bra{j}_{a_1}\frac{1}{\sqrt{d}} \sum_{q\in r_j} \bra{q}_{s} \left(\sqrt{\frac{H_{jq}}{\Lambda_{\max}}}\bra{0}_{a_2} + \sqrt{1-\frac{|H_{jq}|}{\Lambda_{\max}}}\bra{2}_{a_2} \right)\bra{0}_{a_3},
\end{align}
where $r_k=\{f(i,k) : i\in[d]\}$ is the set of row indices of non-zero elements in the $k^{\text{th}}$ column, and the `$\mathrm{a}$' register is broken into subregisters `$\mathrm{a}_{1,2,3}$'. With this choice of $\ket{\chi_k}$, the unitary $U_{\rm col}$ is implemented by the following sequence:
\begin{align}\nonumber
\ket{k}_s\ket{0}_a
&\mapsto \ket{k}_s \frac{1}{\sqrt{d}}\sum_{\ell=1}^d\ket{\ell}_{a_1} \ket{0}_{a_2}\ket{0}_{a_3}
\underset{O_F}{\mapsto}\ket{k}_s \frac{1}{\sqrt{d}}\sum_{p\in r_k}\ket{p}_{a_1} \ket{0}_{a_2}\ket{0}_{a_3}
\underset{O_H}{\mapsto}\ket{k}_s \frac{1}{\sqrt{d}}\sum_{p\in r_k}\ket{p}_{a_1} \ket{0}_{a_2}\ket{H_{kp}}_{a_3}
\\\nonumber
&\underset{O_H}{\mapsto}\ket{k}_s \frac{1}{\sqrt{d}}\sum_{p\in r_k}\ket{p}_{a_1}\left(\sqrt{\frac{H_{pk}}{\Lambda_{\max}}}\ket{0}_{a_2} + \sqrt{1-\frac{|H_{pk}|}{\Lambda_{\max}}}\ket{1}_{a_2} \right) \ket{H_{kp}}_{a_3}
\\\nonumber
&\underset{O_H^{-1}}{\mapsto} \ket{k}_s \frac{1}{\sqrt{d}}\sum_{p\in r_k}\ket{p}_{a_1}\left(\sqrt{\frac{H_{pk}}{\Lambda_{\max}}}\ket{0}_{a_2} + \sqrt{1-\frac{|H_{pk}|}{\Lambda_{\max}}}\ket{1}_{a_2} \right)\ket{0}_{a_3}
\\
& = U_{\rm col}\ket{k}_s \ket{0}_a.
\label{eq:sparse_states_sequence}
\end{align}
The implementation of $U_{\rm row}$ is similar, with an additional step of swapping the `$\mathrm{s}$' and $`\mathrm{a}_1'$ registers using $\mathcal{O}(\log{(N)})$ quantum gates. In the first step, we use $\mathcal{O}(\log(d))=\mathcal{O}(\log{(N)})$ gates to prepare a uniform superposition over $d$ states. In the fourth step, we use $\mathcal{O}(\op{poly}(b))$ gates to compute $\theta=\sin^{-1}(\sqrt{|H_{kp}|/\Lambda_\mathrm{max}})$ and $\phi=\arg{[\sqrt{H_{kp}}]}$ using quantum arithmetic, and apply controlled single-qubit rotations to create the desired state in the `$\mathrm{a}_2$' register. Using a Taylor series and long multiplication, $\op{poly}(b)=\mathcal{O}(b^{5/2})$~\cite{Berry2015Hamiltonian}. Thus $U_{\rm row}^\dag U_{\rm col}$ makes $\mathcal{O}(1)$ queries to $O_F$ and $O_H$, and requires $\mathcal{O}(\log(N)+\op{poly}(b))$ quantum gates. 

We now modify this implementation to block-encode $H$ with a smaller normalization constant $\alpha=\Theta(\Lambda_1)$.  We focus on $\ket{\chi_{k}}$ as the modification for $\ket{\bar\chi_{j}}$ is very similar. By collecting coefficients of $\ket{0}_{a_2}$, ~\cref{eq:sparse_states} is equivalent to
\begin{align}
\label{eq:sparse_states_collected}
\ket{\chi_{k}}&= \sqrt{\frac{\sigma_k}{d\Lambda_{\max}}}\ket{k}_s\left(\sum_{p\in r_k}\sqrt{\frac{H_{pk}}{\sigma_k}} \ket{p}_{a_1} \right)\ket{0}_{a_2}\ket{0}_{a_3}+\cdots \ket{1}_{a_2},
\end{align}
where $\sigma_k = \sum_{k}|H_{pk}|= \sum_{k}|H_{kp}|$. Thus the terms in round brackets are normalized quantum states. As the induced one-norm satisfies $\|H\|_1=\max_k \sigma_k\le d\|H\|_\mathrm{max}\le d\Lambda_\mathrm{max}$, the amplitude $\sqrt{\frac{\sigma_k}{d\Lambda_{\max}}}$ of $\ket{0}_{a_2}$ is in general less than $1$. Thus any procedure that multiplies these amplitudes by some constant $C \le \sqrt{\frac{d\Lambda_{\max}}{\Lambda_1}}$ for all $j,k$ would block encode $H$ with normalization constant $\alpha=\frac{C^2}{d\Lambda_\mathrm{max}}$.

This is accomplished by amplitude multiplication~\cite{Low2017USA}, which is a high-precision variant of amplitude amplification. Let us briefly compare the two approaches. Amplitude amplification uses $\mathcal{O}(C)$ queries to $U_{\rm col}$ and $\mathcal{O}(C\log{(N)})$ quantum gates for implementing reflections about the $\ket{0}_a$ state to transform the amplitudes $a_k=\sqrt{\frac{\sigma_k}{d\Lambda_{\max}}}$ nonlinearly like $a_k\rightarrow\sin{\left(C\sin^{-1}\left(a_k\right)\right)}$ for odd integers $C$. When $|C a_k| \ll 1$, this is approximately linear in the input $a_k$. In contrast, amplitude multiplication requires any upper bound $A \ge |a_k|$ and uses $\mathcal{O}(C\log{(1/\delta)})$ queries to  $U_{\rm col}$ and $\mathcal{O}(C\log{(1/\delta)}\log{(N)})$ quantum gates to multiply any unknown $a_k$ by any choice of real number $C=\mathcal{O}(1/A)$ with some bounded real multiplicative error $|\delta_k|\le \delta$ that is a function of $a_k$. In other words,
\begin{align}
\sqrt{\frac{\sigma_k}{d\Lambda_{\max}}} \rightarrow_\text{Amplitude multiplication} C \sqrt{\frac{\sigma_k}{d\Lambda_{\max}}}(1+\delta_k), \quad |\delta_k|\le \delta,
\end{align}
which can be made arbitrarily linear with respect to the initial amplitude, at logarithmic cost.

By choosing $C=\Theta(\sqrt{\frac{d\Lambda_{\max}}{\Lambda_1}})$, we make $\mathcal{O}(\sqrt{\frac{d\Lambda_{\max}}{\Lambda_1}}\log{(1/\delta)})$ queries to $U_\mathrm{col}$ and $U_\mathrm{row}$, and a multiplicative factor $\mathcal{O}(\log{(N)})$ more quantum gates, to block-encode a Hamiltonian $\tilde{H}$ with matrix elements $\frac{\tilde{H}_{j,l}}{\alpha}=\frac{H_{jk}}{\alpha}(1+\delta_j)(1+\delta_k)$. Let $\hat{\delta}$ be a diagonal matrix with elements $\delta_j$. Then the block-encoded Hamiltonian $\tilde{H}/\alpha$ differs from the ideal Hamiltonian $H/\alpha$ by an error
\begin{align}
\|\tilde{H}-H\|
=\|\hat\delta H+H\hat\delta+\hat\delta H \hat\delta\|
 \le \|H\|(2\delta+\delta^2)=\mathcal{O}(\Lambda\delta).
\end{align}
As $U_\mathrm{col}$ and $U_\mathrm{row}$ each make $\mathcal{O}(1)$ queries to $O_F$ and $O_H$ and each require $\mathcal{O}(\log(N)+\op{poly}(b))$ quantum gates, multiplying these with $\mathcal{O}(\sqrt{\frac{d\Lambda_{\max}}{\Lambda_1}}\log{(1/\delta)})$ gives the stated query and gate complexities.
\end{proof}

\subsection{Quantum addition instead of inverse trigonometric functions}
\label{sec:arithmetic-free}
The proof~\cref{thm:sparse-Ham-block-encoding} requires us to specify the format in which matrix entries $\ket{H_{ik}}$ are returned by the oracle $O_H$ of~\cref{def:Sparse_Oracle}. We use the following definition.
\begin{restatable}[Qubit encoding of complex fixed-point numbers]{definition}{fixedpoint}
	\label{def:fixed_point}
	Let $z=r e^{i 2\pi\phi}$ be a complex number, where $\phi\in[0,1)$ is represented by $p$ fractional bits $\vec\phi \in\{0,1\}^{p}$ and	$r\in [0,2^{m}]$ is represented by $m+1$ integer bits and $n$ fractional bits $\vec r \in\{0,1\}^{m+n+1}$ as follows.
	\begin{align}
	\phi=\sum^{p}_{j=1}\phi_{j}2^{-j},
	\quad
	r=2^{m}\sum^{m+n}_{j=0}r_{j}2^{-j}.
	\end{align}
	Then $\ket{z_{p,m,n}}$ is the $p+m+n+1$ qubit state
	\begin{align}
	\ket{z_{p,m,n}}=\ket{\phi_1}\cdots\ket{\phi_p}\ket{r_0}\cdots\ket{r_{m+n}}=\ket{\phi}_p\ket{r2^n}_{mn}.
	\end{align}
\end{restatable}
Thus the oracle $O_H$ computes $H_{ik}$ to $b$ bits of precision in the format of~\cref{def:fixed_point} like
\begin{align}
O_H\ket{i}\ket{k}\ket{z}=\ket{i}\ket{k}\ket{z\oplus H_{ik}}=\ket{i}\ket{k}\ket{z\oplus (H_{ik})_{p,m,n}},
\end{align}
where the bits of precision $b=p+m+n+1$. This is a valid starting assumption -- any binary representation of $\ket{H_{ik}}$ is polynomial-time reducible to $\ket{(H_{ik})_{p,m,n}}$. 

Given any complex number $H_{ik}$ in this format and an upper bound $\Lambda_\mathrm{max}\ge \|H\|_\mathrm{max}$, the goal is to coherently apply $\sqrt{\frac{H_{ik}}{\Lambda_\mathrm{max}}}$ as an amplitude. Unlike the proof of~\cref{thm:sparse-Ham-block-encoding_trigonometric} which required the costly coherent computation of inverse trigonometric functions, this may be performed using only simple quantum arithmetic as follows from~\cite{Sanders2018}.
\begin{restatable}[Complex fixed-point numbers to amplitudes]{lemma}{fixedpoint}
	\label{def:fixed_point_to_amplitude}
	Let the complex number $z=r e^{i 2\pi\phi}$ where $\phi\in[0,1)$ and $r\in [0,2^{m}]$ be represented by $\ket{z_{p,m,n}}=\ket{\phi_1}\cdots\ket{\phi_{p}}\ket{r_0}\cdots\ket{r_{m+n}}$. Let 
	$\Lambda_{\mathrm{max}}\ge 2^m$. Then the transformation
	\begin{align}
	\ket{z_{p,m,n}}\ket{0}_a\ket{0}_b\ket{0}_c\mapsto\ket{z_{p,m,n}}\left(\sqrt{\frac{z}{\Lambda_{\mathrm{max}}}}\ket{u_{|z|}}_a\ket{00}_{bc}+\cdots\ket{\Phi}_{abc}\right), \quad|(I_a\otimes\bra{0}_{bc})\ket{\Phi}_{abc}|=0,
	\end{align}	
	where $\ket{u_{|z|}}_a=\sum^{r2^n}_{j=1}\frac{\ket{j}}{\sqrt{r2^n}}$, $a$ is a $m+n$ qubit register, and $b,c$ are single-qubit registers, costs $\mathcal{O}(p+n+m)$ arbitrary two qubit gates.
\end{restatable}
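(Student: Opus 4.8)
\emph{Proof idea.} The plan is to implement the map using only reversible integer comparison and a handful of fixed single-qubit rotations, never evaluating a square root or an inverse trigonometric function, in the spirit of~\cite{Sanders2018}. The three ingredients of the target state are the magnitude $\sqrt{r/\Lambda_{\max}}$, the uniform garbage state $\ket{u_{|z|}}$, and the phase $e^{i\pi\phi}$, which together make up $\sqrt{z/\Lambda_{\max}}$ under the branch $\sqrt{z}:=\sqrt{r}\,e^{i\pi\phi}$. The magnitude and the garbage will emerge \emph{simultaneously} from one inequality test against an integer that is already sitting in the input register, the phase will come from $p$ single-qubit $Z$-rotations, and the gap between the known bound $\Lambda_{\max}$ and the power of two $2^m$ will be absorbed by one extra ancilla rotation.

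Concretely, I would proceed as follows. First apply $m+n$ Hadamards to register $a$, giving $2^{-(m+n)/2}\sum_{j=0}^{2^{m+n}-1}\ket{j}_a$. Next observe that, by~\cref{def:fixed_point}, the integer $r2^n=\sum_{j=0}^{m+n}r_j2^{m+n-j}$ is exactly the $(m+n+1)$-bit string $\ket{r_0}\cdots\ket{r_{m+n}}$ already present in $\ket{z_{p,m,n}}$, so forming it requires no arithmetic; since $r\le 2^m$ we have $r2^n\le 2^{m+n}$. Apply a ripple-carry comparator that sets the ancilla $b$ to $\ket{1}$ iff $j\ge r2^n$ and leaves it $\ket{0}$ otherwise; this costs $\mathcal{O}(m+n)$ gates, i.e.\ $\mathcal{O}(1)$ reversible integer subtractions. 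On the $\ket{0}_b$ branch register $a$ now holds the normalised uniform superposition $\tfrac{1}{\sqrt{r2^n}}\sum_{j}\ket{j}_a$ over the $r2^n$ indices $j<r2^n$, i.e.\ $\ket{u_{|z|}}$ (with the $0$-indexed convention for the index set), carried with amplitude $\sqrt{r2^n/2^{m+n}}=\sqrt{r/2^m}$. Then, using $\Lambda_{\max}\ge 2^m$, apply the fixed single-qubit rotation $\ket{0}_c\mapsto\sqrt{2^m/\Lambda_{\max}}\,\ket{0}_c+\sqrt{1-2^m/\Lambda_{\max}}\,\ket{1}_c$; since $c$ is unentangled, the amplitude on $\ket{00}_{bc}\otimes\ket{u_{|z|}}_a$ becomes $\sqrt{r/2^m}\cdot\sqrt{2^m/\Lambda_{\max}}=\sqrt{r/\Lambda_{\max}}$. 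Finally apply to each phase qubit $\phi_l$ the diagonal single-qubit gate $\mathrm{diag}(1,e^{i\pi2^{-l}})$, $l=1,\dots,p$; their product multiplies the whole state by $e^{i\pi\sum_l\phi_l2^{-l}}=e^{i\pi\phi}$, turning the success amplitude into $e^{i\pi\phi}\sqrt{r/\Lambda_{\max}}=\sqrt{z/\Lambda_{\max}}$. Everything not on $\ket{00}_{bc}$ lives in $\ket{1}_b$ or $\ket{1}_c$; declaring it $\ket{\Phi}$ gives $(I_a\otimes\bra{00}_{bc})\ket{\Phi}=0$ for free, and the (relative) phase acting on $\ket{\Phi}$ too is immaterial.

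The gate count is $m+n$ (Hadamards) $+\,\mathcal{O}(m+n)$ (comparator) $+\,\mathcal{O}(1)$ ($c$-rotation) $+\,p$ ($\phi$-rotations) $=\mathcal{O}(p+m+n)$, all exact, with no precision parameter — which is exactly why this replaces the $\mathcal{O}(\op{poly}(b))$ quantum trigonometry of~\cref{thm:sparse-Ham-block-encoding_trigonometric}. I do not expect any real obstacle: the only substantive content is checking the amplitude identity $\sqrt{r2^n/2^{m+n}}\cdot\sqrt{2^m/\Lambda_{\max}}\cdot e^{i\pi\phi}=\sqrt{z/\Lambda_{\max}}$ and confirming that the garbage register depends on $z$ only through $|z|=r$. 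Two minor points deserve care: (i) the boundary case $r=2^m$, where $r2^n=2^{m+n}$ cannot be written as an index on the $(m+n)$-qubit register $a$ — handled by using the $0$-indexed set $\{0,\dots,r2^n-1\}$ for $\ket{u_{|z|}}$, which has exactly $r2^n$ elements all representable on $a$ (or, equivalently, by enlarging $m$ by one bit so that $r<2^m$); and (ii) the $c$-rotation and $\phi$-rotations are counted as exact arbitrary single-qubit gates, consistent with the \emph{arbitrary two-qubit gate} accounting used throughout the paper, so there is no truncation error to track.
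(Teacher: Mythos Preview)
Your proposal is correct and follows essentially the same construction as the paper: Hadamards on $a$, an integer comparator to flag $b$, a fixed single-qubit rotation on $c$ to absorb $2^m/\Lambda_{\max}$, and $p$ phase rotations driven by the $\phi_l$ bits. The only cosmetic differences are that the paper applies the phase via controlled-$Z$ rotations on the $b$ qubit (rather than diagonal gates on the $\phi_l$ qubits themselves) and uses the $1$-indexed convention $\ket{u_{|z|}}=\sum_{j=1}^{r2^n}\ket{j}/\sqrt{r2^n}$; your treatment of the boundary case $r=2^m$ is in fact more careful than the paper's.
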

\begin{proof}
	Consider the following sequence
	\begin{align}
	\ket{z_{p,m,n}}\ket{0}_a\ket{0}_b&=\ket{\phi_1}\cdots\ket{\phi_p}\ket{r_0}\cdots\ket{r_{m+n}}\ket{0}_a\ket{0}_b\ket{0}_c=\ket{\phi}_{p}\ket{r}_{mn}\ket{0}_a\ket{0}_b\ket{0}_c
	\displaybreak[0]\\\nonumber
	\underset{\op{UNIFORM}}{\;}&{\mapsto} \ket{\phi}_{p}\ket{r2^n}_{mn} \left(\sum^{2^{m+n}}_{j=1}\frac{\ket{j}_a}{\sqrt{2^{m+n}}}\right) \ket{0}_b\ket{0}_c
	\displaybreak[0]\\\nonumber
	\underset{\op{COMPARE}}{\;}&{\mapsto} \ket{\phi}_{p}\ket{r2^n}_{mn} \left(\sum^{2^{m+n}}_{j=1}\frac{\ket{j}_a}{\sqrt{2^{m+n}}}\right) \ket{r2^n < j}_b\ket{0}_c
	\\\nonumber
	&= \ket{\phi}_{p}\ket{r2^n}_{mn} \left(\left(\sum^{r2^n}_{j=1}\frac{\ket{j}_a}{\sqrt{2^{m+n}}}\right) \ket{0}_b+\left(\sum^{2^{m+n}}_{j=r2^n+1}\frac{\ket{j}_a}{\sqrt{2^{m+n}}}\right) \ket{1}_b\right)\ket{0}_c
	\\\nonumber
	&= \ket{\phi}_{p}\ket{r2^n}_{mn} \left(\sqrt{\frac{r}{2^m}}\ket{u_{r}}_a \ket{0}_b+\cdots\ket{1}_b\right)\ket{0}_c
	\displaybreak[0]\\\nonumber
	\underset{\op{PHASE}}{\;}&\mapsto \ket{\phi}_{p}\ket{r2^m}_{mn}  \left(\sqrt{\frac{re^{i2\pi\phi}}{2^m}}\ket{u_{r}}_a\ket{0}_b+\cdots\ket{1}_b\right)\ket{0}_c
	\displaybreak[0]\\\nonumber
	\underset{e^{-iY\cos^{-1}(2^m/\Lambda_{\max})}}{\;}&\mapsto \ket{\phi}_{p}\ket{r2^m}_{mn}  \left(\sqrt{\frac{re^{i2\pi\phi}}{2^m}}\ket{u_{r}}_a\ket{0}_b+\cdots\ket{1}_b\right)\left(\sqrt{\frac{2^m}{\Lambda_{\mathrm{max}}}}\ket{0}_c+\sqrt{1-\frac{2^m}{\Lambda_{\mathrm{max}}}}\ket{1}_c\right)
	\\\nonumber
	&= \ket{z_{p,m,n}} \left(\sqrt{\frac{re^{i2\pi\phi}}{\Lambda_{\mathrm{max}}}}\ket{u_{r}}_a\ket{0}_{bc}+\cdots\ket{\Phi}_{abc}\right).
	\end{align}
	The second line creates a uniform superposition over $2^{n+m}$ states using $n+m$ Hadamard gates. The third line compares two integer registers $\ket{r2^n}_{mn}$ and $\ket{j}_{a}$. If $r2^n< j$, a $X$ gate is applied to create the state $\ket{r2^m< j}_b=\ket{1}_b$. Otherwise the state $\ket{0}_b$ is unchanged. This uses a subtractor circuit (the conjugate of modular addition) and costs $\mathcal{O}(n+m)$ arbitrary two-qubit gates~\cite{Cuccaro2004Adder}. The fourth line collects coefficients of $\ket{0}_b$, which is expressed in the fifth line as a normalized quantum state
	\begin{align}
	\label{eq:uniform_superposition}
	\ket{u_r}_a=\sum^{r2^n}_{j=1}\frac{\ket{j}_a}{\sqrt{r2^n}}.
	\end{align}
	The sixth line applies $p$ single-qubit phase rotations on register $b$, where the $j^\text{th}$ rotation $e^{i Z \pi \phi_j 2^{-j}}$ is controlled by the state $\ket{\phi_j}$. 	The seventh line applies a single-qubit $Y$ Pauli rotation. The last line collects coefficients of the $\ket{00}_{bc}$ state. All other components are in $\ket{\Phi}_{abc}$, which has no support on $\ket{00}_{bc}$.
	\end{proof}
	We now prove~\cref{thm:sparse-Ham-block-encoding}, restated for convenience.
	\USAblockencoding*
	\begin{proof}
		The proof is identical to~\cref{thm:sparse-Ham-block-encoding_trigonometric}, except for a modification to~\cref{eq:sparse_states} and~\cref{eq:sparse_states_sequence}. As before, we construct two sets of mutually orthogonal quantum states such that their overlap $\braket{\bar{\chi}_{j}}{\chi_{k}} = H_{jk} / \alpha$. Then $H$ is block-encoded by the product of controlled-state preparation unitary $U = U_{\rm row}^\dag U_{\rm col}$, where
		\begin{align}
		U_{\rm col} \ket{k}_{s}\ket{0}_{a}=\ket{\chi_{k}}, \quad U_{\rm row} \ket{j}_{s}\ket{0}_{a}=\ket{\bar\chi_{j}}\quad 
		\Rightarrow
		\quad
		(\bra{0}_a\otimes \ii_s)U(\ket{0}_a\otimes \ii_s) = \frac{H}{\alpha}.
		\end{align}
		We now slightly modify the definition of $\ket{\chi_{k}}, \ket{\bar\chi_{k}}$ compared to~\cref{eq:sparse_states}:
		\begin{align}
		\ket{\chi_{k}}&= \ket{k}_s \sum_{p\in r_k}\frac{\ket{p}_{a_1}}{\sqrt{d}} \left(\sqrt{\frac{H_{pk}}{\Lambda_{\max}}}\ket{u_{|H_{pk}|}}_{a_4}\ket{0}_{a_2} +\cdots \ket{\Phi}_{a_4}\ket{1}_{a_2} \right)\ket{0}_{a_3},
		\\\nonumber
		\bra{\bar{\chi}_{j}}&= \bra{j}_{a_1} \sum_{q\in r_j} \frac{\bra{q}_{s}}{\sqrt{d}} \left(\sqrt{\frac{H_{jq}}{\Lambda_{\max}}}\bra{u_{|H_{jq}|}}_{a_4}\bra{0}_{a_2} +\cdots \bra{\bar\Phi}_{a_4}\bra{2}_{a_2}\right)\bra{0}_{a_3},
		\end{align}
		where we have introduced the additional state $\ket{u_{|H_{pk}|}}_{a_4}$ of~\cref{eq:uniform_superposition}, and $\ket{\Phi}$ and $\ket{\bar\Phi}$ are arbitrary quantum states of no interest. As $\ket{u_{|H_{jk}|}}_{a_4}$ only depends on the absolute value $|H_{jk}|$, all case where $k=q$ and $p=j$ also have $\braket{u_{|H_{pk}|}}{u_{|H_{jq}|}}_{a_4}=1$. Thus the overlap $\frac{H_{jk}}{d\Lambda_\mathrm{max}}$ is unaffected.
		
		Now instead of constructing $U_{\rm col}$ by computing trigonometric functions, as in~\cref{eq:sparse_states}, we use~\cref{def:fixed_point_to_amplitude} to convert a binary encoding of $H_{jk}$ to an amplitude. Consider the modified sequence:
		\begin{align}\nonumber
		\ket{k}_s\ket{0}_a
		&
		\underset{O_F}{\mapsto}\ket{k}_s \frac{1}{\sqrt{d}}\sum_{p\in r_k}\ket{p}_{a_1} \ket{0}_{a_4}\ket{0}_{a_2}\ket{0}_{a_3}
		\underset{O_H}{\mapsto}\ket{k}_s \frac{1}{\sqrt{d}}\sum_{p\in r_k}\ket{p}_{a_1} \ket{0}_{a_4}\ket{0}_{a_2}\ket{H_{kp}}_{a_3}
		\\\nonumber
		{{}_{\cref{def:fixed_point_to_amplitude}}}&{\;\mapsto}\ket{k}_s \frac{1}{\sqrt{d}}\sum_{p\in r_k}\ket{p}_{a_1}\left(\sqrt{\frac{H_{pk}}{\Lambda_{\max}}}\ket{u_{|H_{pk}|}}_{a_4}\ket{0}_{a_2} +\cdots\ket{\Phi}_{a_4}\ket{1}_{a_2} \right) \ket{H_{kp}}_{a_3}
		\\\nonumber
		&\underset{O_H^{-1}}{\mapsto} \ket{k}_s \frac{1}{\sqrt{d}}\sum_{p\in r_k}\ket{p}_{a_1}\left(\sqrt{\frac{H_{pk}}{\Lambda_{\max}}}\ket{u_{|H_{pk}|}}_{a_4}\ket{0}_{a_2}  +\cdots\ket{\Phi}_{a_4}\ket{1}_{a_2} \right)\ket{0}_{a_3}
		\\
		& = U_{\rm col}\ket{k}_s \ket{0}_a,
		\end{align}
		and similarly for $U_\mathrm{row}$. From~\cref{def:fixed_point_to_amplitude}, the second line now uses $\mathcal{O}(b)$ arbitrary two-qubit gates instead of the original $\mathcal{O}(b^{5/2})$. The rest of the proof proceeds identically.
	\end{proof}

\bibliographystyle{alphaUrlePrint}
\bibliography{bib}

\end{document}